\newtheorem{theorem}{Theorem}
\newtheorem{lemma}[theorem]{Lemma}
\newtheorem{definition}[theorem]{Definition}
\newtheorem{observation}[theorem]{Observation}
\newcommand{\Oh}{\mathcal{O}}
\newcommand{\etal}{et~al.\xspace}
\title{Computing Continuous Dynamic Time Warping of Time Series in Polynomial Time}
\author{Kevin Buchin$^1$, Andr\'e Nusser$^2$ and Sampson Wong$^3$}
\date{%
    \small
    $^1$Technical University of Dortmund, Germany, kevin.buchin@tu-dortmund.de\\%
    $^2$BARC, University of Copenhagen, Denmark, anusser@mpi-inf.mpg.de\footnote{Part of this research was conducted while the author was at Saarbrücken Graduate School of Computer Science and Max Planck Institute for Informatics. The author is supported by the VILLUM Foundation grant 16582.}\\%
    $^3$University of Sydney, Australia, swon7907@uni.sydney.edu.au\\%
}
\begin{document}

\maketitle

\begin{abstract}
Dynamic Time Warping is arguably the most popular similarity measure for time series, where we define a time series to be a one-dimensional polygonal curve. The drawback of Dynamic Time Warping is that it is sensitive to the sampling rate of the time series. The Fréchet distance is an alternative that has gained popularity, however, its drawback is that it is sensitive to outliers.

Continuous Dynamic Time Warping (CDTW) is a recently proposed alternative that does not exhibit the aforementioned drawbacks. CDTW combines the continuous nature of the Fréchet distance with the summation of Dynamic Time Warping, resulting in a similarity measure that is robust to sampling rate and to outliers.
In a recent experimental work of Brankovic et al., it was demonstrated that clustering under CDTW avoids the unwanted artifacts that appear when clustering under Dynamic Time Warping and under the Fréchet distance. Despite its advantages, the major shortcoming of CDTW is that there is no exact algorithm for computing CDTW, in polynomial time or otherwise.

In this work, we present the first exact algorithm for computing CDTW of one-dimensional curves. Our algorithm runs in time $\mathcal{O}(n^5)$ for a pair of one-dimensional curves, each with complexity at most $n$. In our algorithm, we propagate continuous functions in the dynamic program for CDTW, where the main difficulty lies in bounding the complexity of the functions. We believe that our result is an important first step towards CDTW becoming a practical similarity measure between curves.
\end{abstract}

\section{Introduction}
\label{sec:introduction}

Time series data arises from many sources, such as financial markets~\cite{taylor2008modelling}, seismology~\cite{yilmaz2001seismic}, electrocardiography~\cite{berkaya2018survey} and epidemiology~\cite{bhaskaran2013time}. Domain-specific questions can often be answered by analysing these time series. A common way of analysing time series is by finding similarities. Computing similarities is also a fundamental building block for other analyses, such as clustering, classification, or simplification. There are numerous similarity measures considered in literature~\cite{DBLP:journals/csur/AtluriKK18,cleasby2019using,DBLP:journals/csur/EslingA12,koenklaren,tao2021comparative,DBLP:journals/sigspatial/TooheyD15}, many of which are application dependent. 

Dynamic Time Warping (DTW) is arguably the most popular similarity measure for time series, and is widely used across multiple communities~\cite{DBLP:conf/compgeom/AgarwalFPY16,DBLP:journals/talg/GoldS18,muller2007dynamic,myers1980performance,sakoe1978dynamic,senin2008dynamic,tappert1990state,vintsyuk1968speech,DBLP:journals/datamine/WangMDTSK13}. Under DTW, a minimum cost discrete alignment is computed between a pair of time series. A discrete alignment is a sequence of pairs of points, subject to the following four conditions: \emph{(i)} the first pair is the first sample from both time series, \emph{(ii)} the last pair is the last sample from both time series, \emph{(iii)} each sample must appear in some pair in the alignment, and \emph{(iv)} the alignment must be a monotonically increasing sequence for both time series. The cost of a discrete alignment, under DTW, is the sum of the distances between aligned points. A drawback of a similarity measure with a discrete alignment is that it is sensitive to the sampling rates of the time series. As such, DTW is a poor measure of similarity between a time series with a high sampling rate and a time series with a low sampling rate.  For such cases, it is more appropriate to use a similarity measure with a continuous alignment. In Figure~\ref{fig:dtw_sampling_artifact}, we provide a visual comparison of a discrete alignment versus a continuous alignment, for time series with vastly different sampling rates.

\begin{figure}
	\begin{subfigure}[b]{0.47\textwidth}
    \centering
    \includegraphics[width=\textwidth]{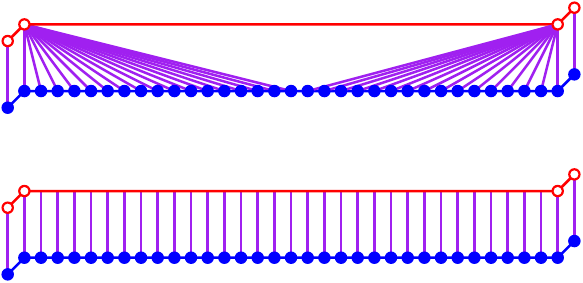}
    \caption{Top: The optimal alignment under a discrete similarity measure, e.g. DTW. 
    Bottom: The optimal alignment under a continuous similarity measure.}
    \label{fig:dtw_sampling_artifact}
	\end{subfigure}
	\hfill
	\begin{subfigure}[b]{0.47\textwidth}
    \centering
    \includegraphics[width=\textwidth]{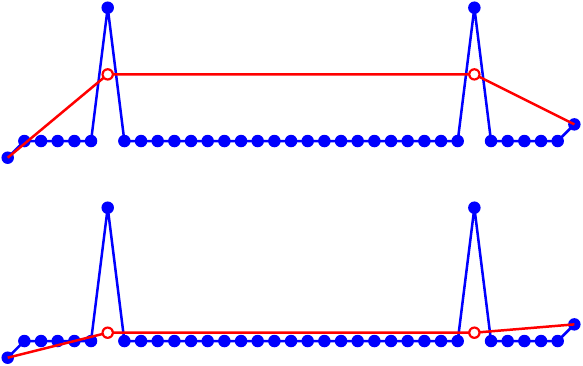}
    \caption{Top: The optimal simplification under a bottleneck measure, e.g., Fr\'echet distance. Bottom: The optimal simplification under a summation-based similarity measure.}
    \label{fig:frechet_sampling_artifact}
	\end{subfigure}
	\caption{Issues with discrete (left) and bottleneck (right) measures as opposed to continuous, summed measures.}
	\label{fig:issues}
\end{figure}

The Fr\'echet distance is a similarity measure that has gained popularity, especially in the theory community~\cite{DBLP:journals/ijcga/AltG95,DBLP:journals/dcg/BuchinBMM17,DBLP:conf/soda/DriemelKS16,DBLP:conf/icdar/SriraghavendraKB07}. To apply the Fr\'echet distance to a time series, we linearly interpolate between sampled points to obtain a continuous one-dimensional polygonal curve. Under the Fr\'echet distance, a minimum cost continuous alignment is computed between the pair of curves. A continuous alignment is a simultaneous traversal of the pair of curves that satisfies the same four conditions as previously stated for DTW. The cost of a continuous alignment, under the Fr\'echet distance, is the maximum distance between a pair of points in the alignment. The Fr\'echet distance is a bottleneck measure in that it only measures the maximum distance between aligned points. As a result, the drawback of the Fr\'echet distance is that it is sensitive to outliers. For such cases, a summation-based similarity measure is significantly more robust. In Figure~\ref{fig:frechet_sampling_artifact}, we illustrate a high complexity curve, and its low complexity ``simplification'' that is the most similar to the original curve, under either a bottleneck or summation similarity measure. The simplified curve under the Fr\'echet distance is sensitive to and drawn towards its outlier points.

Continuous Dynamic Time Warping (CDTW) is a recently proposed alternative that does not exhibit the aforementioned drawbacks. It obtains the best of both worlds by combining the continuous nature of the Fréchet distance with the summation of DTW. CDTW was first introduced by Buchin~\cite{buchin2007computability}, where it was referred to as the average Fr\'echet distance. CDTW has also been referred to as the summed, or integral, Fr\'echet distance. CDTW is similar to the Fr\'echet distance in that a minimum cost continuous alignment is computed between the pair of curves. The cost of a continuous alignment, under CDTW, is the integral of the distances between pairs of points in the alignment. We provide a formal definition in Section~\ref{sec:preliminaries}. Other definitions were also given under the name CDTW~\cite{DBLP:journals/jmiv/EfratFV07,DBLP:conf/cvpr/SerraB94}, see Section~\ref{sec:related_work}.


Compared to existing popular similarity measures, CDTW is robust to both the sampling rate of the time series and to its outliers. CDTW has been used in applications where this robustness is desirable. In Brakatsoulas~\etal~\cite{DBLP:conf/vldb/BrakatsoulasPSW05}, the authors applied CDTW to map-matching of vehicular location data. The authors highlight two common errors in real-life vehicular data, that is, measurement errors and sampling errors. Measurement errors result in outliers whereas sampling errors cause discrepancies in sampling rates between input curves. Their experiments show an improvement in map-matching when using CDTW instead of the Fr\'echet distance. In a recent paper, Brankovic~\etal~\cite{DBLP:conf/gis/BrankovicBKNPW20} applied CDTW to clustering of bird migration data and handwritten character data. The authors used $(k,\ell)$-center and medians clustering, where each of the $k$ clusters has a (representative) center curve of complexity at most $\ell$. Low complexity center curves are used to avoid overfitting. Compared to DTW and the Fr\'echet distance, Brankovic~\etal~\cite{DBLP:conf/gis/BrankovicBKNPW20} demonstrated that clustering under CDTW produced centers that were more visually similar to the expected center curve. Under DTW, the clustering quality deteriorated for small values of $\ell$, whereas under the Fr\'echet distance, the clustering quality deteriorated in the presence of outliers. 

\begin{figure}
    \centering
    \includegraphics[width=0.32\textwidth, trim=490 10 490 10, clip=true]{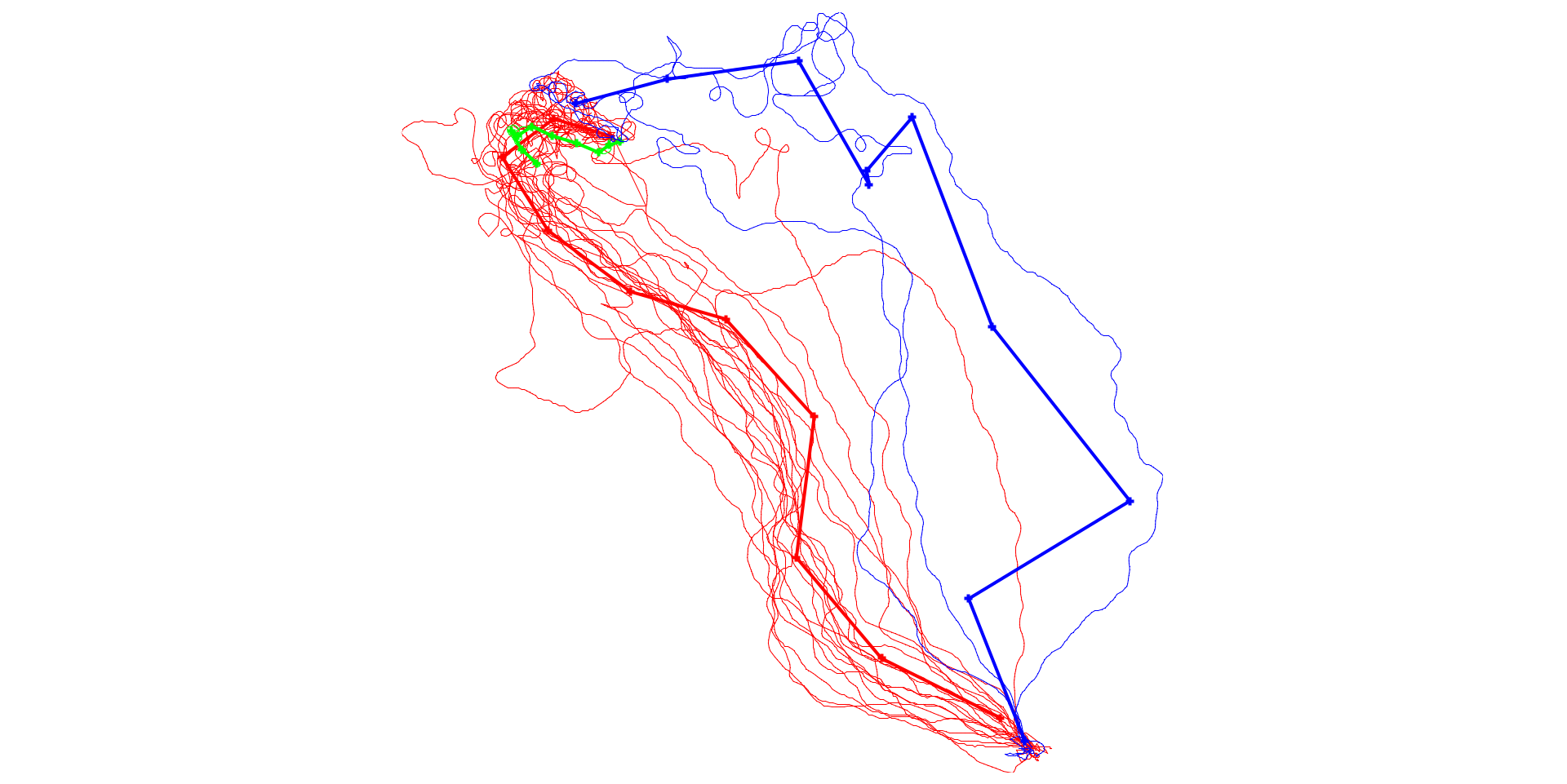}
    \includegraphics[width=0.32\textwidth, trim=490 10 490 10, clip=true]{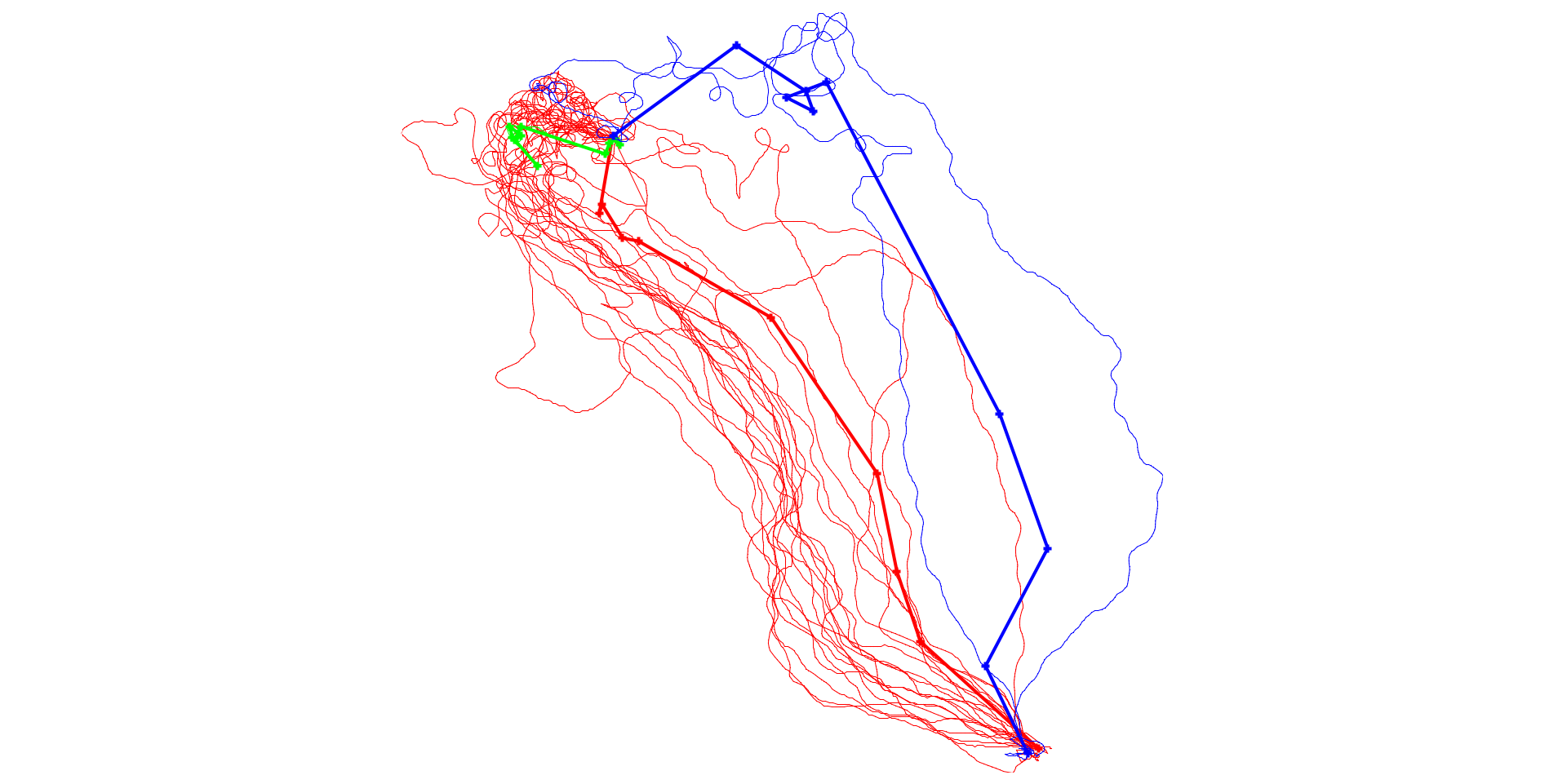}
    \includegraphics[width=0.32\textwidth, trim=490 10 490 10, clip=true]{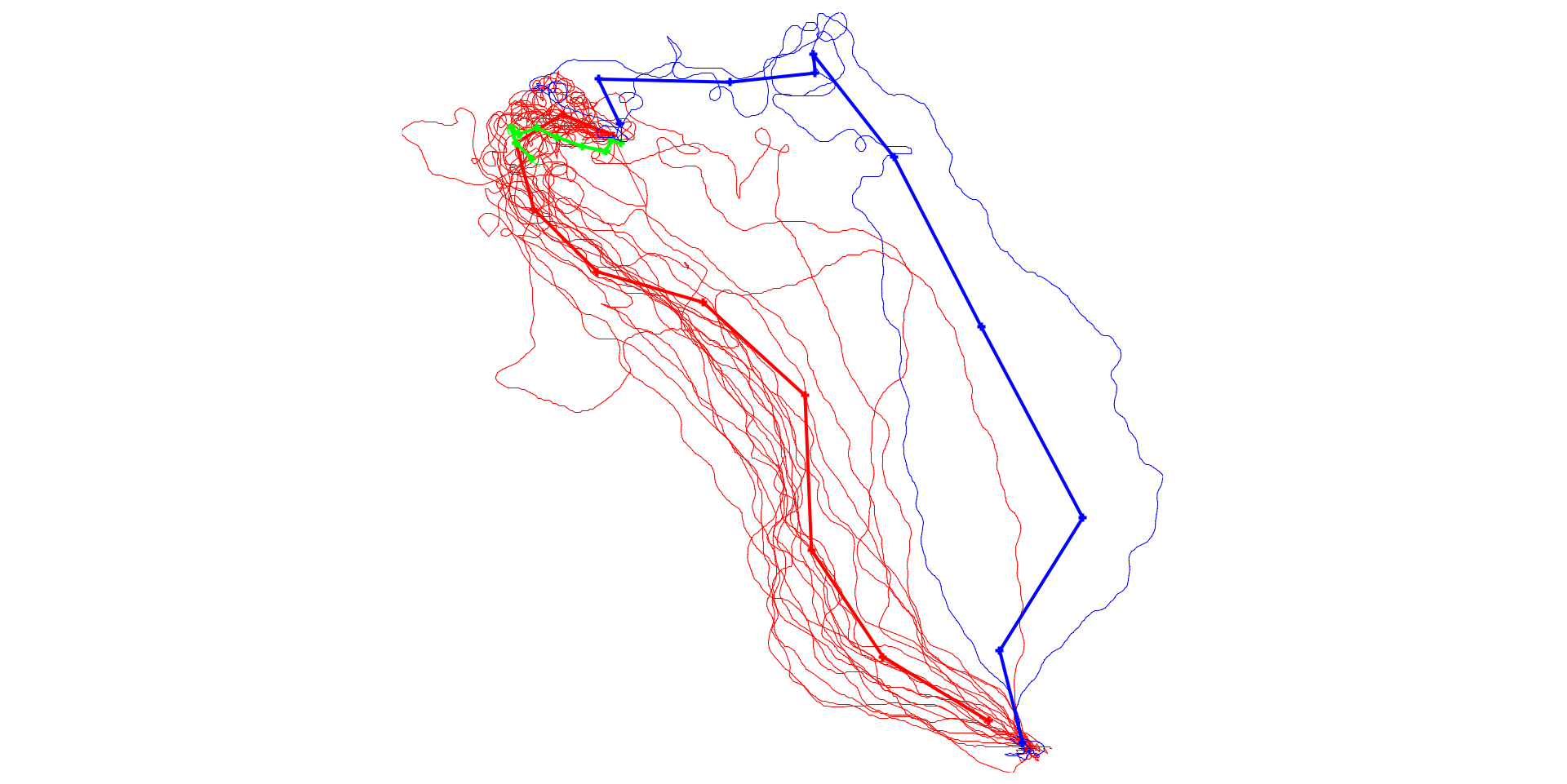}
	\caption{Clustering of the c17 pigeon's trajectories under the DTW (left), Fr\'echet (middle), and CDTW (right) distances. Figures were provided by the authors of~\cite{DBLP:conf/gis/BrankovicBKNPW20}.}
    \label{fig:c17}
\end{figure}

Brankovic~\etal's~\cite{DBLP:conf/gis/BrankovicBKNPW20} clustering of a pigeon data set~\cite{meade2005homing} is shown in Figure~\ref{fig:c17}. The Fr\'echet distance is paired with the center objective, whereas DTW and CDTW are paired with the medians objective. Under DTW (left), the discretisation artifacts are visible. The blue center curve is jagged and visually dissimilar to its associated input curves. Under the Fr\'echet distance (middle), the shortcoming of the bottleneck measure and objective is visible. The red center curve fails to capture the shape of its associated input curves, in particular, it misses the top-left ``hook'' appearing in its associated curves. Under CDTW (right), the center curves are smooth and visually similar to their associated curves.

Despite its advantages, the shortcoming of CDTW is that there is no exact algorithm for computing it, in polynomial time or otherwise. Heuristics were used to compute CDTW in the map-matching~\cite{DBLP:conf/vldb/BrakatsoulasPSW05} and clustering~\cite{DBLP:conf/gis/BrankovicBKNPW20} experiments. Maheshwari~\etal~\cite{DBLP:journals/comgeo/MaheshwariSS18} provided a $(1+\varepsilon)$-approximation algorithm for CDTW in $\Oh(\zeta^4 n^4 / \varepsilon^2)$ time, for curves of complexity $n$ and spread $\zeta$, where the spread is the ratio between the maximum and minimum interpoint distances. Existing heuristic and approximation methods~\cite{DBLP:conf/vldb/BrakatsoulasPSW05,DBLP:conf/gis/BrankovicBKNPW20,DBLP:journals/comgeo/MaheshwariSS18} use a sampled grid on top of the dynamic program for CDTW, introducing an inherent error that depends on the fineness of the sampled grid, which is reflected in the dependency on $\zeta$ in~\cite{DBLP:journals/comgeo/MaheshwariSS18}.

In this work, we present the first exact algorithm for computing CDTW for one-dimensional curves. Our algorithm runs in time $\Oh(n^5)$ for a pair of one-dimensional curves, each with complexity at most~$n$. Unlike previous approaches, we avoid using a sampled grid and instead devise a propagation method that solves the dynamic program for CDTW exactly. In our propagation method, the main difficulty lies in bounding the total complexity of our propagated functions. Showing that CDTW can be computed in polynomial time fosters hope for faster polynomial time algorithms, which would add CDTW to the list of practical similarity measures for curves. 

\subsection{Related work}
\label{sec:related_work}

Algorithms for computing popular similarity measures, such as DTW and the Fr\'echet distance, are well studied. Vintsyuk~\cite{vintsyuk1968speech} proposed Dynamic Time Warping as a similarity measure for time series, and provided a simple dynamic programming algorithm for computing the DTW distance that runs in $\Oh(n^2)$ time, see also \cite{DBLP:conf/kdd/BerndtC94}. Gold and Sharir~\cite{DBLP:journals/talg/GoldS18} improved the upper bound for computing DTW to $\Oh(n^2/ \log \log n)$. For the Fr\'echet distance, Alt and Godau~\cite{DBLP:journals/ijcga/AltG95} proposed an $\Oh(n^2 \log n)$ time algorithm for computing the Fr\'echet distance between a pair of curves. Buchin~\etal~\cite{DBLP:journals/dcg/BuchinBMM17} improved the upper bound for computing the Fr\'echet distance to $\Oh(n^2 \sqrt{\log n} (\log \log n)^{3/2})$. Assuming SETH, it has been shown that there is no strongly subquadratic time algorithm for computing the Fr\'echet distance or DTW~\cite{DBLP:conf/focs/AbboudBW15,DBLP:conf/focs/Bringmann14,DBLP:conf/focs/BringmannK15,DBLP:journals/jocg/BringmannM16,DBLP:conf/soda/BuchinOS19}.


Our definition of CDTW was originally proposed by Buchin~\cite{buchin2007computability}, and has since been used in several experimental works~\cite{DBLP:conf/vldb/BrakatsoulasPSW05,DBLP:conf/gis/BrankovicBKNPW20}. We give Buchin's~\cite{buchin2007computability} definition formally in Section~\ref{sec:preliminaries}. Other definitions under the name CDTW have also been considered. We briefly describe the main difference between these definitions and the one used in this paper. 

To the best of our knowledge, the first continuous version of DTW was by Serra and Berthod~\cite{DBLP:conf/cvpr/SerraB94}. The same definition was later used by Munich and Perona~\cite{DBLP:conf/iccv/MunichP99}. Although a continuous curve is used in their definition, the cost of the matching is still a discrete summation of distances to sampled points. Our definition uses a continuous summation (i.e. integration) of distances between all points on the curves, and therefore, is more robust to discrepancies in sampling rate. Efrat~\etal~\cite{DBLP:journals/jmiv/EfratFV07} proposed a continuous version of DTW that uses integration. However, their integral is defined in a significantly different way to ours. Their formulation minimises the change of the alignment and not the distance between aligned points. Thus, their measure is translational invariant and designed to compare the shapes of curves irrespective of their absolute positions in space.

\section{Preliminaries}
\label{sec:preliminaries}

We use $[n]$ to denote the set $\{1, \dots, n\}$.
To continuously measure the similarity of time series, we linearly interpolate between sampled points to obtain a one-dimensional polygonal curve. A one-dimensional polygonal curve $P$ of complexity~$n$ is given by a sequence of vertices, $p_1, \ldots, p_n \in \mathbb R$, connected in order by line segments. Furthermore, let $||\cdot||$ be the norm in the one-dimensional space $\mathbb R$.
In higher dimensions, the Euclidean $\mathcal L_2$ norm is the most commonly used norm, but other norms such as $\mathcal L_1$ and $\mathcal L_\infty$ may be used. 

Consider a pair of one-dimensional polygonal curves $P = p_1, \ldots, p_n$ and $Q = q_1, \ldots, q_m$. Let $\Delta(n,m)$ be the set of all sequences of pairs of integers $(x_1,y_1), \ldots, (x_k,y_k)$ satisfying $(x_1, y_1) = (1,1)$, $(x_k, y_k) = (n,m)$ and $(x_{i+1},y_{i+1}) \in \{(x_i+1,y_i), (x_i,y_i+1), (x_i+1,y_i+1)\}$. 
The DTW distance between $P$ and $Q$ is defined as
\[
    d_{DTW}(P,Q) = \min_{\alpha \in \Delta(n,m)} \sum_{(x,y) \in \alpha} ||p_x - q_y||.
\]
The discrete Fr\'echet distance between $P$ and $Q$ is defined as
\[
    d_{dF}(P,Q) = \min_{\alpha \in \Delta(n,m)} \max_{(x,y) \in \alpha} ||p_x - q_y||.
\]

Let $p$ and $q$ be the total arc lengths of $P$ and $Q$ respectively. Define the parametrised curve $\{P(z):z \in [0,p]\}$ to be the one-dimensional curve~$P$ parametrised by its arc length. In other words, $P(z)$ is a piecewise linear function so that the arc length of the subcurve from $P(0)$ to $P(z)$ is~$z$. Define $\{Q(z): z \in [0,q]\}$ analogously. Let $\Gamma(p)$ be the set of all continuous and non-decreasing functions $\alpha: [0,1] \to [0,p]$ satisfying $\alpha(0) = 0$ and $\alpha(1) = p$. Let $\Gamma(p,q) = \Gamma(p) \times \Gamma(q)$.
The~continuous Fr\'echet distance between $P$ and $Q$ is defined as
\[
    d_{F}(P,Q) = \inf_{(\alpha,\beta) \in \Gamma(p,q)} \max_{z \in [0,1]} ||P(\alpha(z)) - Q(\beta(z))||,
\]
The CDTW distance between $P$ and $Q$ is defined as
\[
    d_{CDTW}(P,Q) = \inf_{(\alpha,\beta) \in \Gamma(p,q)} \int_0^1 ||P(\alpha(z)) - Q(\beta(z))|| \cdot ||\alpha'(z) + \beta'(z)|| \cdot dz.
\]

For the definition of CDTW, we additionally require that $\alpha$ and $\beta$ are differentiable. The original intuition behind $d_{CDTW}(P,Q)$ is that it is a line integral in the parameter space, which we will define in Section~\ref{subsec:parameter_space}. The term $||\alpha'(z) + \beta'(z)||$ implies that we are using the $\mathcal L_1$ metric in the parameter space, but other norms have also been considered~\cite{koenklaren,DBLP:journals/comgeo/MaheshwariSS18}. 

\subsection{Parameter space under CDTW}
\label{subsec:parameter_space}

The parameter space under CDTW is analogous to the free space diagram under the continuous Fr\'echet distance. Similar to previous work~\cite{buchin2007computability,koenklaren,DBLP:journals/comgeo/MaheshwariSS18}, we transform the problem of computing CDTW into the problem of computing a line integral in the parameter space. 

Recall that the total arc lengths of $P$ and $Q$ are $p$ and $q$ respectively. The parameter space is defined to be the rectangular region $R = [0,p] \times [0,q]$ in $\mathbb R^2$. The region is imbued with a metric $||\cdot||_R$. The $\mathcal L_1$, $\mathcal L_2$ and $\mathcal L_\infty$ norms have all been considered, but $\mathcal L_1$ is the preferred metric as it is the easiest to work with~\cite{koenklaren,DBLP:journals/comgeo/MaheshwariSS18}. At every point $(x,y) \in R$ we define the height of the point to be $h(x,y) = ||P(x) - Q(y)||$.

Next, we provide the line integral formulation of $d_{CDTW}$, which is the original motivation behind its definition. To make our line integral easier to work with, we parametrise our line integral path $\gamma$ in terms of its $\mathcal L_1$ arc length in~$R$. The following lemma is a consequence of Section 6.2 in~\cite{buchin2007computability}. We provide a proof sketch of the result for the sake of self-containment.

\begin{restatable}{lem}{cdtwformula}
\label{lem:cdtw_formula}
\[
    d_{CDTW}(P,Q) = \inf_{\gamma \in \Psi(p,q)} \int_0^{p+q} h(\gamma(z)) \cdot dz,
\]
where $\Psi(p,q)$ is the set of all functions $\gamma:[0,p+q] \to R$ satisfying $\gamma(0) = (0,0)$, $\gamma(p+q) = (p,q)$, $\gamma$ is differentiable and non-decreasing in both $x$- and $y$-coordinates, and $||\gamma'(z)||_R=1$.
\end{restatable}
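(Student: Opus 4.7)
The plan is to prove both inequalities via the same change of variables: reparametrize the monotone path $(\alpha,\beta)$ in the parameter space $R$ by its $\mathcal L_1$ arc length. Given $(\alpha,\beta)\in\Gamma(p,q)$, define $\gamma_0(z)=(\alpha(z),\beta(z))$ for $z\in[0,1]$. Because $\alpha$ and $\beta$ are non-decreasing real-valued functions, $\alpha'(z),\beta'(z)\ge 0$, so the $\mathcal L_1$ speed of $\gamma_0$ equals $||\gamma_0'(z)||_R=|\alpha'(z)|+|\beta'(z)|=\alpha'(z)+\beta'(z)$, which is exactly the factor $||\alpha'(z)+\beta'(z)||$ appearing in the CDTW integrand. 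Thus the cost integral is the line integral of $h$ along $\gamma_0$ with respect to $\mathcal L_1$ arc length.

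Next I would set $s(z)=\int_0^z ||\gamma_0'(t)||_R\,dt=\alpha(z)+\beta(z)$, a non-decreasing surjection $[0,1]\to[0,p+q]$ with $s(0)=0$ and $s(1)=p+q$. On the maximal subintervals where $s$ is strictly increasing we invert to obtain $z(s)$ and define $\gamma(s)=\gamma_0(z(s))$; on maximal intervals where $s$ is constant the path $\gamma_0$ itself is constant, so these intervals collapse to a single point of $\gamma$. The resulting curve $\gamma$ is a non-decreasing, $\mathcal L_1$-unit-speed parametrization from $(0,0)$ to $(p,q)$, so $\gamma\in\Psi(p,q)$. The change of variables $s=s(z)$, $ds=(\alpha'(z)+\beta'(z))\,dz$ then yields
\[
\int_0^1 h(\alpha(z),\beta(z))\cdot(\alpha'(z)+\beta'(z))\,dz \;=\; \int_0^{p+q} h(\gamma(s))\,ds,
\]
and taking the infimum over $(\alpha,\beta)\in\Gamma(p,q)$ gives $d_{CDTW}(P,Q)\ge \inf_{\gamma\in\Psi(p,q)}\int_0^{p+q}h(\gamma(z))\,dz$.

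For the reverse inequality, given $\gamma=(\gamma_1,\gamma_2)\in\Psi(p,q)$, I would set $\alpha(z)=\gamma_1((p+q)z)$ and $\beta(z)=\gamma_2((p+q)z)$. The boundary and monotonicity conditions imply $(\alpha,\beta)\in\Gamma(p,q)$, and the same change of variables run in the opposite direction gives equality of the two integrals and hence the opposite inequality. Combining the two directions proves the lemma.

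The main obstacle is handling the degenerate behavior of the reparametrization: when $\alpha'+\beta'$ vanishes on a set of positive measure the inverse $z(s)$ is only set-valued, and the differentiability assumed for $\gamma$ may hold only almost everywhere after collapsing constant stretches. Both issues are harmless for the integral identity, because on such intervals the integrand is zero on one side and the path is stationary on the other; the required smoothness of competitor paths can be enforced up to arbitrarily small error by a standard mollification/Lipschitz-reparametrization argument, which does not change either infimum. These are the customary technicalities of continuous-alignment similarity measures, and I expect them to be the only non-routine part of the argument.
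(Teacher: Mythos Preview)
Your proposal is correct and follows essentially the same approach as the paper: reparametrize the path $(\alpha,\beta)$ by its $\mathcal L_1$ arc length in $R$, identify the CDTW integrand with $h(\gamma(z))\cdot\|\gamma'(z)\|_R$, and observe that unit-speed reparametrization collapses the second factor to $1$ over the domain $[0,p+q]$. If anything, you are more careful than the paper, which treats the reparametrization as a one-line observation and does not explicitly address the two inequality directions or the degenerate intervals where $\alpha'+\beta'=0$.
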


\begin{proof}[Proof (Sketch)] 
The full proof can be found in Appendix~\ref{apx:proof_of_cdtw_formula}. We summarise the main steps. Recall that the formula for CDTW is $\inf_{(\alpha,\beta) \in \Gamma(p,q)} \int_0^1 ||P(\alpha(z)) - Q(\beta(z))|| \cdot ||\alpha'(z) + \beta'(z)|| \cdot dz$. If we define $\gamma(t) = (\alpha(t),\beta(t))$, then the first term of the integrand is equal to $h(\gamma(t))$. Next, we reparametrise $\gamma(t)$ in terms of its $\mathcal L_1$ arc length in $R$. For our reparametrised $\gamma$, we get $||\alpha'(z) + \beta'(z)|| = 1$, so the second term of the integrand is equal to 1. Finally, we prove that the parameter $z$ ranges from $0$ to $p+q$, and note that $\gamma(0) = (0,0)$, $\gamma(p+q) = (p,q)$, $\gamma$ is differentiable and non-decreasing, and $||\gamma'(z)||_R = 1$. This gives us the stated formula.
\end{proof}

\subsection{Properties of the parameter space}
\label{subsec:parameter_space_properties}

Before providing the algorithm for minimising our line integral, we first provide some structural insights of our parameter space $R = [0,p] \times [0,q]$. Recall that $P:[0,p] \to \mathbb R$ maps points on the $x$-axis of $R$ to points on the one-dimensional curve $P$, and analogously for $Q$ and the $y$-axis. Hence, each point $(x,y) \in R$ is associated with a pair of points $P(x)$ and $Q(y)$, so that the height function $h(x,y) = ||P(x)-Q(y)||$ is simply the distance between the associated pair of points. Divide the $x$-axis of $R$ into $n-1$ segments that are associated with the $n-1$ segments $p_1p_2,\ldots,p_{n-1}p_n$ of $P$. Divide the $y$-axis of $R$ into $m-1$ segments analogously. In this way, we divide $R$ into $(n-1) (m-1)$ cells, which we label as follows:

\begin{definition}[cell]
	Cell $(i,j)$ is the region of the parameter space associated with segment $p_i p_{i+1}$ on the $x$-axis, and $q_j q_{j+1}$ on the $y$-axis, where $i \in [n-1]$ and $j \in [m-1]$.
\end{definition}

For points $(x,y)$ restricted to a single cell $(i,j)$, the functions $P(x)$ and $Q(y)$ are linear. Hence, $P(x)-Q(y)$ is also linear, so $h(x,y) = ||P(x)-Q(y)||$ is a piecewise linear surface with at most two pieces. If $h(x,y)$ consists of two linear surface pieces, the boundary of these two pieces is along a segment where $h(x,y) = 0$. Since we are working with one-dimensional curves, we have two cases for the relative directions of the vectors $\overrightarrow{p_ip_{i+1}}$ and $\overrightarrow{q_jq_{j+1}}$. If the vectors are in the same direction, since $\overrightarrow{p_ip_{i+1}}$ and $\overrightarrow{q_jq_{j+1}}$ are both parametrised by their arc lengths, they must be travelling in the same direction and at the same rate. Therefore, the line satisfying $h(x,y) = 0$ has slope~$1$ in~$R$. Using a similar argument, if $\overrightarrow{p_ip_{i+1}}$ and $\overrightarrow{q_jq_{j+1}}$ are in opposite direction, then the line satisfying $h(x,y)=0$ has slope~$-1$ in~$R$.

The line with zero height and slope~$1$ will play an important role in our algorithm. We call these lines valleys. If a path $\gamma$ travels along a valley, the line integral accumulates zero cost as long as it remains on the valley, since the valley has zero height.

\begin{definition}[valley]
In a cell, the set of points $(x,y)$ satisfying $h(x,y)=0$ forms a line, moreover, the line has slope~$1$ or~$-1$. We call this line a valley.
\end{definition}

\section{Algorithm} 
\label{sec:algorithm}


Our approach is a dynamic programming algorithm over the cells in the parameter space, which we defined in Section~\ref{subsec:parameter_space}. To the best of our knowledge, all the existing approximation algorithms and heuristics~\cite{DBLP:conf/vldb/BrakatsoulasPSW05,DBLP:conf/gis/BrankovicBKNPW20,koenklaren} use a dynamic programming approach, or simply reduce the problem to a shortest path computation~\cite{DBLP:journals/comgeo/MaheshwariSS18}. Next, we highlight the key difference between our approach and previous approaches.

In previous algorithms, sampling is used along cell boundaries to obtain a discrete set of grid points. Then, the optimal path between the discrete set of grid points is computed. The shortcoming of previous approaches is that an inherent error is introduced by the grid points, where the error depends on the fineness of the grid that is used.

In our algorithm, we consider all points along cell boundaries, not just a discrete subset. However, this introduces a challenge whereby we need to compute optimal paths between continuous segments of points. To overcome this obstacle, we devise a new method of propagating continuous functions across a cell. The main difficulty in analysing the running time of our algorithm lies in bounding the total complexity of the propagated continuous functions, across all cells in the dynamic program.

Our improvement over previous approaches is in many ways similar to previous algorithms for the weighted region problem~\cite{DBLP:journals/jacm/MitchellP91}, and the partial curve matching problem~\cite{DBLP:conf/soda/BuchinBW09}. In all three problems, a line integral is minimised over a given terrain, and an optimal path is computed instead of relying on a sampled grid. However, our problem differs from that of~\cite{DBLP:journals/jacm/MitchellP91} and~\cite{DBLP:conf/soda/BuchinBW09} in two important ways. First, in both~\cite{DBLP:journals/jacm/MitchellP91} and~\cite{DBLP:conf/soda/BuchinBW09}, the terrain is a piecewise constant function, whereas in our problem, the terrain is a piecewise linear function. Second, our main difficulty is bounding the complexity of the propagated functions. In~\cite{DBLP:journals/jacm/MitchellP91}, a different technique is used that does not propagate functions. In~\cite{DBLP:conf/soda/BuchinBW09}, the propagated functions are concave, piecewise linear and their complexities remain relatively low. In our algorithm, the propagated functions are piecewise quadratic and their complexities increase at a much higher, albeit bounded, rate. 

The remainder of our paper is structured as follows. In Section~\ref{subsec:algorithm_setup}, we set up our dynamic program. In Section~\ref{subsec:algorithm_base_case}, we solve its base case. In Section~\ref{subsec:algorithm_propagation_step} we solve the propagation step. In Section~\ref{subsec:running_time_analysis}, we analyse the algorithm's running time. In Section~\ref{subsec:algorithm_bounding_cost_function_complexity}, we fill in missing bounds from the running time analysis. We consider our main technical contribution to be the running time analysis in Sections~\ref{subsec:running_time_analysis} and~\ref{subsec:algorithm_bounding_cost_function_complexity}, and their proofs in Section~\ref{apx:proof_of_claims}.

\subsection{Dynamic program}
\label{subsec:algorithm_setup}

Our dynamic program is performed with respect to the following cost function.

\begin{definition}[cost function]
\label{definition:cost}
Let $(x,y) \in R$, we define
\[
cost(x,y) = \inf_{\gamma \in \Psi(x,y)} \int_0^{x+y} h(\gamma(z)) \cdot dz.
\]
\end{definition}

Recall from Lemma~\ref{lem:cdtw_formula} that $d_{CDTW}(P,Q) = \inf_{\gamma \in \Psi(p,q)} \int_0^{p+q} h(\gamma(z)) \cdot dz$, which implies that $cost(p,q) = d_{CDTW}(P,Q)$. Another way of interpreting Definition~\ref{definition:cost} is that $cost(x,y)$ is equal to $d_{CDTW}(P_x,Q_y)$, where $P_x$ is the subcurve from $P(0)$ to $P(x)$, and $Q_y$ is the subcurve from $Q(0)$ to $Q(y)$.

Recall from Section~\ref{subsec:parameter_space_properties} that the parameter space is divided into $(n-1)(m-1)$ cells. Our dynamic program solves cells one at a time, starting from the bottom left cell and working towards the top right cell. A cell is considered solved if we have computed the cost of every point on the boundary of the cell. Once we solve the top right cell of $R$, we obtain the cost of the top right corner of $R$, which is $cost(p,q) = d_{CDTW}(P,Q)$, and we are done. 

In the base case, we compute the cost of all points lying on the lines $x=0$ and $y=0$. Note that if $x=0$ or $y=0$, then the function $cost(x,y)$ is simply a function in terms of $y$ or $x$ respectively. In general, the function along any cell boundary --- top, bottom, left or right --- is a univariate function in terms of either $x$ or $y$. We call these boundary cost functions.

\begin{definition}[boundary cost function]
A boundary cost function is $cost(x,y)$, but restricted to a top, bottom, left or right boundary of a cell. If it is restricted to a top or bottom (resp. left or right) boundary, the boundary cost function is univariate in terms of $x$ (resp.~$y$).
\end{definition}

In the propagation step, we use induction to solve the cell~$(i,j)$ for all $1 \leq i \leq n-1$ and $1 \leq j \leq m-1$. We assume the base case. We also assume as an inductive hypothesis that, if $i \geq 2$, then the cell~$(i-1,j)$ is already solved, and if $j \geq 2$, then the cell~$(i,j-1)$ is already solved. Our assumptions ensure that we receive as input the boundary cost function along the bottom and left boundaries of the cell~$(i,j)$. In other words, we use the boundary cost functions along the input boundaries to compute the boundary cost functions along the output boundaries. 

\begin{definition}[input/output boundary]
The input boundaries of a cell are its bottom and left boundaries. The output boundaries of a cell are its top and right boundaries.
\end{definition}

We provide details of the base case in Section~\ref{subsec:algorithm_base_case}, and the propagation step in Section~\ref{subsec:algorithm_propagation_step}.

\subsection{Base case}
\label{subsec:algorithm_base_case}

The base case is to compute the cost of all points along the $x$-axis. The $y$-axis can be handled analogously. Recall that $cost(x,y) = \inf_{\gamma \in \Psi(x,y)} \int_0^{x+y} h(\gamma(z)) \cdot dz$. Therefore, for points $(x,0)$ on the $x$-axis, we have $cost(x,0) = \inf_{\gamma \in \Psi(x,0)} \int_0^x h(\gamma(z)) \cdot dz$. Since $\gamma(z)$ is non-decreasing in $x$- and $y$-coordinates, and $||\gamma'(z)|| = 1$, we must have that $\gamma'(z) = (1,0)$. By integrating from $0$ to $z$, we get $\gamma(z)=(z,0)$, which implies that $cost(x,0) = \int_0^x h(z,0) \cdot dz$. 

Consider, for $1 \leq i \leq n-1$, the bottom boundary of the cell $(i,1)$. The height function $h(z)$ is a piecewise linear function with at most two pieces, so its integral $cost(x,0) = \int_0^x h(z,0) \cdot dz$ is a continuous piecewise quadratic function with at most two pieces. Similarly, since the height function along $x=0$ is a piecewise linear function with at most $2(n-1)$ pieces, the boundary cost function along $x=0$ is a continuous piecewise quadratic function with at most $2(n-1)$ pieces. For boundaries not necessarily on the $x$- or $y$-axis, we claim that the boundary cost function is still a continuous piecewise-quadratic function. 

\begin{restatable}{lem}{claimpiecewisequadratic}
\label{claim:piecewise_quadratic}
The boundary cost function is a continuous piecewise-quadratic function.
\end{restatable}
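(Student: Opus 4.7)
The plan is to prove the lemma by induction on cells in the order the dynamic program processes them; the base case (the $x=0$ and $y=0$ axes) has already been handled in the preceding paragraphs. For the inductive step, consider a cell $(i,j)$ whose input boundaries carry piecewise quadratic cost functions by the inductive hypothesis. For a target point $(x^*,y^*)$ on an output boundary, every feasible path from $(0,0)$ to $(x^*,y^*)$ must enter the cell at some point $e$ on an input boundary, so splitting the integral at $e$ yields
\[
cost(x^*,y^*) \;=\; \inf_{e}\bigl[\,cost(e) + C(e,(x^*,y^*))\,\bigr],
\]
where $C(e,(x^*,y^*))$ denotes the minimum of $\int h(\gamma(z))\,dz$ over monotone unit-$L_1$-speed paths from $e$ to $(x^*,y^*)$ staying inside the cell, and the infimum ranges over $e$ on an input boundary with $e$ coordinatewise below $(x^*,y^*)$.

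The next step is to show that $C$ is piecewise quadratic in its two arguments. Within a cell, $h$ is the absolute value of a linear function and hence is piecewise linear with at most two pieces separated by a slope-$\pm 1$ zero line. For a single linear piece $h(x,y)=ax+by+c$, I would parametrise the path by $L_1$ arc length and eliminate $y$ using $y'(z)=1-x'(z)$ to obtain
\[
\int_0^L h(\gamma(z))\,dz \;=\; (a-b)\int_0^L x(z)\,dz \;+\; \Phi(x_0,y_0,x_1,y_1),
\]
where $L=(x_1-x_0)+(y_1-y_0)$ and $\Phi$ is an explicit quadratic in the endpoint coordinates. Only the first term depends on the path shape, so the infimum is attained in the limit by one of the two L-shaped paths (bending at either $(x_0,y_1)$ or $(x_1,y_0)$) according to the sign of $a-b$, and the optimum is a quadratic polynomial in the four endpoint coordinates. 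When the cell contains a valley (or a slope-$-1$ zero line), I would split into finitely many sub-cases according to whether the optimal path crosses the zero line and whether it travels along the valley as a zero-cost shortcut; each sub-case still contributes a quadratic in the endpoints, so $C$ is piecewise quadratic in both of its arguments.

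Finally, parametrise the entry $e$ by a scalar $t$ on an input boundary and the target by a scalar $s$ on the output boundary, so that $F(t,s) = cost(e(t)) + C(e(t),s)$ is piecewise quadratic in $(t,s)$ by combining the inductive hypothesis with the analysis above. On each two-dimensional piece of $F$, minimising in $t$ at fixed $s$ yields either an interior critical point $t^*(s)$ that is linear in $s$ or a piece-boundary value that is itself piecewise linear in $s$; back-substitution produces a piecewise quadratic function of $s$, and taking the lower envelope over finitely many pieces and over both input boundaries again yields a piecewise quadratic function. Continuity of the output boundary cost follows from continuity of $F$ together with compactness of the set of feasible entry points. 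I expect the main obstacle to be the valley case: because $\nabla h$ is discontinuous across the valley, one must enumerate carefully the geometrically distinct optimal path topologies (an L-shape on either side of the valley, a path that hugs the valley for a sub-interval before bending to $(x^*,y^*)$, and so on) and verify that each contributes a quadratic piece; once this enumeration is pinned down, the piecewise-quadratic structure propagates cleanly to the output boundaries, completing the induction.
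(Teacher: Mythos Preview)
Your approach is essentially the paper's: induct over cells, write the output cost as $\inf_e[\,cost(e)+C(e,t)\,]$ with $C$ the optimal in-cell cost, show $C$ is piecewise quadratic in its arguments, then minimise over the entry point. The enumeration you flag as the main obstacle is precisely Lemma~\ref{lem:three_types_of_optimal_paths}: in an opposite-direction cell (type~A) all monotone paths have equal cost, while in a same-direction cell the optimal path either goes towards--along--away from the valley (type~B) or towards--away without reaching it (type~C); your L-shape reasoning recovers types~(A) and~(C), and your ``hugs the valley for a sub-interval'' case is exactly type~(B).

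One thing the paper does that your sketch omits: it carries through the induction the stronger invariant that at every point the left derivative is at least the right derivative (Lemma~\ref{claim:details_piecewise_quadratic}). This is not strictly needed for the present lemma---a minimiser sitting at a kink of $cost(e)$ still gives $F(e_0,s)$ quadratic in $s$, so your minimisation argument goes through---but it becomes essential later when bounding the \emph{number} of pieces, since it rules out local minima at kinks and thereby controls how many new pieces the cumulative-minimum step in type-(B) propagation can create.
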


We defer the proof of Lemma~\ref{claim:piecewise_quadratic} to Section~\ref{apx:proof_of_claims}. Although the boundary cost function has at most two pieces for cell boundaries on the $x$- or $y$-axis, in the general case it may have more than two pieces. As previously stated, the main difficulty in bounding our running time analysis in Section~\ref{subsec:running_time_analysis} is to bound complexities of the boundary cost functions.

\subsection{Propagation step}
\label{subsec:algorithm_propagation_step}

First, we define optimal paths in the parameter space. We use optimal paths to propagate the boundary cost functions across cells in the parameter space. Note that the second part of Definition~\ref{defn:optimal_path} is a technical detail to ensure the uniqueness of optimal paths. Intuitively, the optimal path from $s$ to $t$ is the path minimising the path integral, and if there are multiple such paths, the optimal path is the one with maximum $y$-coordinate.

\begin{definition}[optimal path]
\label{defn:optimal_path}
Given $t = (x_t,y_t) \in R$, its optimal path is a path $\gamma \in \Psi(x_t,y_t)$ minimising the integral $\int_0^{x_t+y_t} h(\gamma(z)) \cdot dz$. If there are multiple such curves that minimise the integral, the optimal path is the one with maximum $y$-coordinate (or formally, the one with maximum integral of its $y$-coordinates).
\end{definition}

Suppose $t$ is on the output boundary of the cell $(i,j)$. Consider the optimal path $\gamma$ that starts at $(0,0)$ and ends at $t$. Let $s$ be the first point where $\gamma$ enters the cell $(i,j)$. We consider the subpath from $s$ to $t$, which is entirely contained in the cell $(i,j)$. In the next lemma, we show that the shape of the subpath from $s$ to $t$ is restricted, in particular, there are only three types of paths that we need to consider. A similar proof can be found in Lemma~4 of Maheswari~\etal~\cite{DBLP:journals/comgeo/MaheshwariSS18}. Nonetheless, due to slight differences, we provide a full proof. Specifically, the differences are that we consider one-dimensional curves, and use the $\mathcal L_1$ norm in parameter space to obtain a significantly stronger statement for type $(A)$ paths.

\begin{restatable}{lem}{optimalpathstheorem}
\label{lem:three_types_of_optimal_paths}
Let $t$ be a point on the output boundary of a cell. Let $s$ be the first point where the optimal path to $t$ enters the cell. There are only three types of paths from $s$ to $t$:
\begin{enumerate}[label=(\Alph*)]
	\item The segments of the cell are in opposite directions. Then all paths between $s$ and $t$ have the same cost.
	\item The segments of the cell are in the same direction and the optimal path travels towards the valley, then along the valley, then away from the valley.
	\item The segments of the cell are in the same direction and the optimal path travels towards the valley, then away from the valley.
\end{enumerate}
For an illustration of these three types of paths, see Figure~\ref{fig:three_types_of_optimal_paths}.
\end{restatable}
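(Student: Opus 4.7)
The plan is to use the simple structure of $h$ inside a cell, combined with the Bellman-principle observation that the subpath of a globally optimal path from $s$ to $t$ must itself minimize the cell-integral from $s$ to $t$ among all monotone $\mathcal{L}_1$ paths staying in the cell. Using the linear parametrizations $P(x) = p_i + \sigma_P(x - x_i)$ and $Q(y) = q_j + \sigma_Q(y - y_j)$ on the cell with signs $\sigma_P, \sigma_Q \in \{+1,-1\}$, one obtains $h(x,y) = |c + \sigma_P x - \sigma_Q y|$ for some cell-dependent constant $c$. Hence $h$ depends only on a single linear combination of $x$ and $y$, and its zero set is a line of slope $+1$ (the valley) when $\sigma_P = \sigma_Q$, and slope $-1$ otherwise. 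The whole proof then splits on this sign, and in each case I reduce the problem to one dimension.

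For case (A), opposite directions, I set $v := x + y$, so that $h$ is a function of $v$ alone. Along any $\gamma \in \Psi$ parametrized by $\mathcal{L}_1$ arc length, monotonicity of $x, y$ together with $x'(z) + y'(z) = 1$ forces $v(z) = v(s) + z$ identically. Hence $\int_0^L h(\gamma(z))\,dz$ is determined by $v(s)$ and the fixed length $L = (x_t + y_t) - (x_s + y_s)$ alone, which is exactly the statement that every monotone path from $s$ to $t$ inside the cell has the same cost.

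For case (B)/(C), same direction, I set $u := x - y$, so that $h(\gamma(z)) = |c + u(z)|$ and the valley is the locus $u = -c$. From $x' + y' = 1$ and $x', y' \ge 0$ we have $u' = x' - y' \in [-1, 1]$. The local minimization problem becomes
\[
    \min \int_0^L |c + u(z)|\,dz \quad \text{subject to } |u'| \le 1,\ u(0) = u_s,\ u(L) = u_t,
\]
where $u_s, u_t$ and $L$ are fixed by $s$ and $t$. Since $|c + u|$ is convex with unique minimum at the valley, a short exchange argument (swap two infinitesimal subintervals where $u$ moves away from $-c$ with ones moving toward $-c$) shows that the optimum spends as much time as possible at $u = -c$. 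The quantitative threshold is $|u_s + c| + |u_t + c|$: if this is at most $L$, the optimum reaches the valley at rate $\pm 1$, dwells on it for length $L - |u_s + c| - |u_t + c|$, and leaves at rate $\pm 1$ --- this is case (B); otherwise the $1$-Lipschitz constraint forces it to turn around before touching the valley --- case (C), with turnaround point $u^* = \tfrac{1}{2}(u_s + u_t \pm L)$ depending on which side of the valley $s, t$ lie. Translating back to $R$: $u' = \pm 1$ corresponds to an axis-parallel unit edge, while $u' = 0$ on the valley corresponds to moving along the valley at slope $+1$, matching the geometric description in the lemma.

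The part I expect to be the main obstacle is phrasing the exchange argument cleanly enough to handle all sub-cases uniformly ($u_s, u_t$ both above, both below, or on opposite sides of the valley), together with the degenerate case where the valley does not intersect the cell at all (so case (C) is forced automatically). The tie-breaking rule in Definition~\ref{defn:optimal_path} is only a minor issue: it is used to pin down a representative in case (A), where every monotone path already costs the same and the lemma asserts cost-equality rather than a unique shape.
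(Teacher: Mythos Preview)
Your reduction is exactly the one the paper uses: inside a cell, $h$ depends only on $x+y$ (opposite directions) or on $x-y$ (same direction), and since every $\gamma\in\Psi$ is parametrized so that $x'(z)+y'(z)=1$, the cell problem collapses to a one-dimensional one. Your treatment of case~(A) is identical to the paper's.

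The only substantive difference is in how you finish the same-direction case. You propose an exchange argument on $u=x-y$; the paper instead proves the pointwise inequality $h(\gamma_1(z))\le h(\gamma_2(z))$ for \emph{every} $z$, where $\gamma_1$ is the type~(B)/(C) candidate and $\gamma_2$ is arbitrary. This pointwise comparison follows immediately from your own setup: the $1$-Lipschitz constraint forces $|u_2(z)-u_s|\le z$ and $|u_2(z)-u_t|\le L-z$, and the type~(B)/(C) profile is, at every instant $z$, the feasible value of $u$ closest to $-c$. Hence $|c+u_1(z)|\le |c+u_2(z)|$ for all $z$, and integrating gives optimality without any exchange. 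This sidesteps precisely the obstacle you flagged --- the sub-case analysis on the relative positions of $u_s,u_t,-c$ --- since the pointwise bound holds uniformly. Your exchange argument would also work, but it is strictly more effort for the same conclusion.
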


\begin{figure}[ht]
    \centering
    \includegraphics{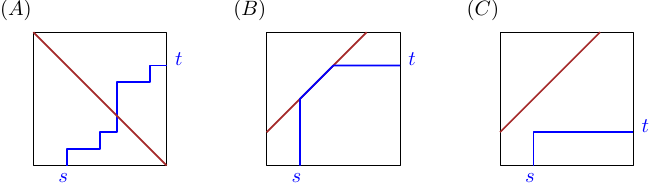}
    \caption{The three types of optimal paths through a cell.}
    \label{fig:three_types_of_optimal_paths}
\end{figure}

\begin{proof}
We begin by summarising the main steps of the proof. Define $\gamma_1$ to be an optimal path to~$s$, followed by either a type $(A)$, $(B)$ or $(C)$ path from $s$ to $t$. If the segments are in opposite directions, we use a type $(A)$ path, whereas if the segments are in the same direction, we use either a type $(B)$ or type $(C)$ path. Define $\gamma_2$ to be an optimal path to~$s$, followed by any path from $s$ to $t$. The main step is to show that $h(\gamma_1(z)) \leq h(\gamma_2(z))$, as this would imply that $\gamma_2$ is an optimal path from $s$ to $t$. In fact, if the segments are in the opposite directions, we get that $h(\gamma_1(z)) = h(\gamma_2(z))$, implying that all type $(A)$ paths from $s$ to $t$ have the same cost. This completes the summary of the main steps of the proof.

We start our proof by considering the case where, for this cell, the curves are in opposite directions. Let $t = (x_t,y_t)$ and let $\gamma_1 \in \Psi(x_t,y_t)$ be an optimal path to $t$. Let $s = (x_s,y_s)$, and let $\gamma_2 \in \Psi(x_t,y_t)$ be the concatenation of an optimal path to $s$ and any path from~$s$ to~$t$. 

By definition, $\gamma_1(x_t+y_t) = \gamma_2(x_t+y_t) = (x_t,y_t)$, and $\gamma_1(x_s+y_s) = \gamma_2(x_s+y_s) = (x_s,y_s)$. It suffices to show that $\int_{x_s+y_s}^{x_t+y_t} h(\gamma_1(z)) \cdot dz = \int_{x_s+y_s}^{x_t+y_t} h(\gamma_2(z)) \cdot dz$, as this would imply that any path from $s$ to $t$ has the same cost. We will show a slightly stronger statement, that for all $x_s+y_s \leq z \leq x_t+y_t$, we have $h(\gamma_1(z)) = h(\gamma_2(z))$. 

Let $\gamma_1(z) = (\alpha_1(z),\beta_1(z))$ be the $x$- and $y$-coordinates of $\gamma_1(z)$. Since $\gamma_1(z) \in \Psi(x_t,y_t)$, we have $||\gamma_1'(z)||_R = 1$ and $\alpha_1'(z),\beta_1'(z) > 0$. Therefore, $\alpha_1'(z) + \beta_1'(z) = 1$. By integrating from $0$ to $z$, we get $\alpha_1(z) + \beta_1(z) = z$. Let $\gamma_2(z) = (\alpha_2(z),\beta_2(z))$. Then similarly, $\alpha_2(z) + \beta_2(z) = z$. 

Without loss of generality, assume $P$ is in the positive direction, and $Q$ is in the negative direction. Recall that $P$ is parametrised by arc length, so the arc length of the subcurve from $P(0)$ to $P(z)$ is $z$. Hence, for $x_s + y_s \leq z \leq x_t + y_t$, and for $i \in \{1,2\}$, since $P$ is in the positive direction, we have $P(\alpha_i(z)) = P(x_s) + \alpha_i(z) - x_s$. Similarly, since $Q$ is in the negative direction, $Q(\beta_i(z)) = Q(y_s) - \beta_i(z) + y_s$. Putting this together,
\[
    \begin{array}{rcl}
            h(\gamma_1(z))
        &=& 
            ||P(\alpha_1(z)) - Q(\beta_1(z))||
        \\
        &=& 
            ||P(x_s) + \alpha_1(z) - x_s - Q(y_s) + \beta_1(z) - y_s||
        \\
        &=& 
            ||P(x_s) + \alpha_2(z) - x_s - Q(y_s) + \beta_2(z) - y_s||
        \\
        &=& 
            ||P(\alpha_2(z)) - Q(\beta_2(z))||
        \\
        &=& 
            h(\gamma_2(z)).
    \end{array}
\]
This concludes the proof in the first case.

Second, we consider the case where, for this cell, the curves are in the same direction. For this case, a similar proof is given in Lemma~4 of Maheshwari~\etal~\cite{DBLP:journals/comgeo/MaheshwariSS18}. Nonetheless, we provide a proof for the sake of completeness. Similarly to the first case, let $t = (x_t,y_t)$ and $s = (x_s,y_s)$. Let $\gamma_1$ be the concatenation of an optimal path to $s$, and then either a type $(B)$ or type $(C)$ path from $s$ to $t$. Recall from the statement of our lemma that a type $(B)$ path is an axis-parallel path from $s$ towards the valley, then a path along the valley, then an axis-parallel path away from the valley to $t$. A type $(C)$ path is an axis-parallel path from $s$ towards the valley, then an axis-parallel path away from the valley to $t$. Let $\gamma_2 \in \Psi(x_t,y_t)$ be the concatenation of an optimal path to $s$, then any path from $s$ to $t$. 

By definition, $\gamma_1(x_t+y_t) = \gamma_2(x_t+y_t) = (x_t,y_t)$, and $\gamma_1(x_s+y_s) = \gamma_2(x_s+y_s) = (x_s,y_s)$. We are required to show that $\int_{x_s+y_s}^{x_t+y_t} h(\gamma_1(z)) \cdot dz \leq \int_{x_s+y_s}^{x_t+y_t} h(\gamma_2(z)) \cdot dz$, as this would imply that $\gamma_1$ is an optimal path from $s$ to $t$. We will show a slightly stronger statement, that for all $x_s + y_s \leq z \leq x_t + y_t$, we have $h(\gamma_1(z)) \leq h(\gamma_2(z))$. 

For $i \in \{1,2\}$, let $\gamma_i(z) = (\alpha_i(z), \beta_i(z))$. For the same reasons as in the first case, $\alpha_i(z) + \beta_i(z) = z$. Without loss of generality, assume both $P$ and $Q$ are in the positive direction. For the same reasons as in the first case, for $i \in \{1,2\}$, we have $P(\alpha_i(z)) = P(x_s) + \alpha_i(z) - x_s$ and $Q(\beta_i(z)) = Q(y_s) + \beta_i(z) - y_s$.

For type $(B)$ paths, if $\gamma_1(z)$ is along the valley, then $h(\gamma_1(z)) = 0$ so clearly $h(\gamma_1(z)) \leq h(\gamma_2(z))$. For both type $(B)$ and type $(C)$ paths, the only remaining case is where $\gamma_1(z)$ is along an axis parallel path from $s$ towards the valley, since other the case where $\gamma_1(z)$ is along an axis parallel path away from the valley to $t$ can be handled analogously.

Without loss of generality, assume $s$ is on the bottom boundary, and $\gamma_1(z)$ is along a vertical path from $s$ to the valley. Then $\alpha_1(z) = x_s$ and $\beta_1(z) = z - \alpha_1(z) = z - x_s$. Since $\gamma_2(z)$ is monotonically increasing in the $x$-coordinate, we have $\alpha_2(z) \geq x_s$ and $\beta_2(z) = z - \alpha_2(z) \leq z - x_s$. Without loss of generality, assume $P(x_s) - Q(y_s) \geq 0$. Since $\gamma_1(z)$ is on the vertical path from $s$ to the valley, but does not cross over the valley, we have $P(\alpha_1(z)) - Q(\beta_1(z)) \geq 0$. Putting this together, 
\[
    \begin{array}{rcl}
            h(\gamma_1(z))
        &=& 
            P(\alpha_1(z)) - Q(\beta_1(z))
        \\
        &=& 
            P(x_s) + \alpha_1(z) - x_s - Q(y_s) - \beta_1(z) + y_s
        \\
        &=& 
            P(x_s) + x_s - x_s - Q(y_s) - z + x_s + y_s
        \\
        &\leq& 
            P(x_s) + \alpha_2(z) - x_s - Q(y_s) - \beta_2(z) - y_s
        \\
        &=& 
            P(\alpha_2(z)) - Q(\beta_2(z))
        \\
        &=& 
            h(\gamma_2(z)),
    \end{array}
\]
where the second last equality uses that $P(\alpha_2(z)) - Q(\beta_2(z)) \geq P(\alpha_1(z)) - Q(\beta_1(z)) \geq 0$. This concludes the proof in the second case, and we are done.
\end{proof}

We leverage Lemma~\ref{lem:three_types_of_optimal_paths} to propagate the boundary cost function from the input boundaries to the output boundaries of a cell. We provide an outline of our propagation procedure in one of the three cases, that is, for type $(B)$ paths. These paths are the most interesting to analyse, and looking at this special case provides us with some intuition for the other cases.
For type $(B)$ paths, we compute the cost function along the output boundary in three consecutive steps. We first list the steps, then we describe the steps in detail.

\begin{enumerate}
	\item We compute the cost function along the valley in a restricted sense.
	\item We compute the cost function along the valley in general.
	\item We compute the cost function along the output boundary.
\end{enumerate}

In the first step, we restrict our attention only to paths that travel from the input boundary towards the valley. This is the first segment in the type $(B)$ path as defined in Lemma~\ref{lem:three_types_of_optimal_paths}. We call this first segment a type $(B1)$ path, see Figure~\ref{fig:s2v_simplified}. Define the type $(B1)$ cost function to be the cost function along the valley if we can only use type $(B1)$ paths from the input boundary to the valley.
The type $(B1)$ cost function is simply the cost function along the bottom or left boundary plus the integral of the height function along the type $(B1)$ path. The height function along the type $(B1)$ path is a linear function, so the integral is a quadratic function. To obtain the type $(B1)$ cost function, we add the quadratic function for the type $(B1)$ path to the cost function along an input boundary. We combine the type $(B1)$ cost functions along the bottom and the left boundaries by taking their lower envelope.

\begin{figure}[ht]
    \centering
    \includegraphics{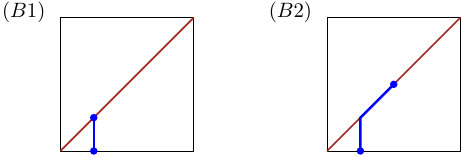}
    \caption{The type $(B1)$ and type $(B2)$ paths from the bottom boundary to the valley.}
    \label{fig:s2v_simplified}
\end{figure}

In the second step, we compute the cost function along the valley in general. It suffices to consider paths that travel from the input boundary towards the valley, and then travel along the valley. This path is the first two segments in a type $(B)$ path as defined in  Lemma~\ref{lem:three_types_of_optimal_paths}.
We call these first two segments a type $(B2)$ path, see Figure~\ref{fig:s2v_simplified}.
Since the height function is zero along the valley, if we can reach a valley point with a particular cost with a type $(B1)$ path, then we can reach all points on the valley above and to the right of it with a type $(B2)$ path with the same cost. Therefore, the type $(B2)$ cost function is the cumulative minimum of the type $(B1)$ cost function, see Figure~\ref{fig:type_b2_cost_function_main_paper}. Note that the type $(B2)$ cost function may have more quadratic pieces than the type $(B1)$ cost function. For example, in Figure~\ref{fig:type_b2_cost_function_main_paper}, the type $(B2)$ cost function has twice as many quadratic pieces as the type $(B1)$ cost function, since each quadratic piece in the type $(B1)$ cost function splits into two quadratic pieces in the type $(B2)$ cost function --- the original quadratic piece plus an additional horizontal piece. 

\begin{figure}[ht]
    \centering
    \includegraphics[width=0.4\textwidth]{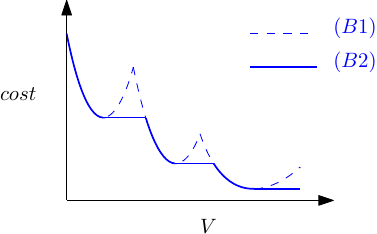}
	\caption{The type $(B2)$ cost function plotted over its position along the valley $V$. The type $(B2)$ cost function is the cumulative minimum of the type $(B1)$ cost function.}
    \label{fig:type_b2_cost_function_main_paper}
\end{figure}

In the third step, we compute the cost function along the output boundary, given the type $(B2)$ cost function along the valley. A type $(B)$ path is a type $(B2)$ path appended with a horizontal or vertical path from the valley to the boundary. The height function of the appended path is a linear function, so its integral is a quadratic function. We add this quadratic function to the type $(B2)$ function along the valley to obtain the output function.
This completes the description of the propagation step in the type $(B)$ paths case.

Using a similar approach, we can compute the cost function along the output boundary in the type $(A)$ and type $(C)$ paths as well. The propagation procedure differs slightly for each of the three path types, for details see Section~\ref{apx:proof_of_claims}. Recall that due to the second step of the type $(B)$ propagation, each quadratic piece along the input boundary may propagate to up to two pieces along the output boundary. In general, we claim that each quadratic piece along the input boundary propagates to at most a constant number of pieces along the output boundary. Moreover, given a single input quadratic piece, this constant number of output quadratic pieces can be computed in constant time. 

\begin{restatable}{lem}{claimpropagate}
\label{claim:propagate}
Each quadratic piece in the input boundary cost function propagates to at most a constant number of pieces along the output boundary. Propagating a quadratic piece takes constant time. 
\end{restatable}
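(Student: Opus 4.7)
The plan is to prove Lemma~\ref{claim:propagate} by walking through the three propagation procedures corresponding to the three path types in Lemma~\ref{lem:three_types_of_optimal_paths}, and in each case tracking how a single input quadratic piece evolves under the elementary operations (addition of a quadratic shift, pointwise minimum of two quadratics, and cumulative minimum) that together implement the propagation. The elementary bookkeeping is: adding a quadratic to a quadratic gives one quadratic; pointwise min of two quadratics has $\Oh(1)$ pieces; cumulative min of a single quadratic piece has at most two pieces (a monotone decreasing portion followed by a horizontal plateau at its minimum). All of these operations can be carried out on symbolic coefficients in $\Oh(1)$ time.

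For type $(B)$, I will follow the three-step procedure already sketched in Section~\ref{subsec:algorithm_propagation_step}. Step~1 maps an input piece on the bottom (or left) boundary to one quadratic piece along the valley, obtained by adding the quadratic that integrates the linear height along a type $(B1)$ segment, then taking the lower envelope of the two resulting functions coming from the bottom and left inputs; the envelope of two quadratics contributes only $\Oh(1)$ pieces per input piece. Step~2, the cumulative minimum, turns each input quadratic piece into at most two pieces: its own decreasing portion (truncated where a previous minimum is smaller) and a horizontal plateau at the running minimum. Step~3 adds a quadratic shift for the segment leaving the valley and is piece-preserving. Composing the three steps gives $\Oh(1)$ output pieces per input piece, each computable in $\Oh(1)$ time from the coefficients of the input piece.

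For type $(A)$, Lemma~\ref{lem:three_types_of_optimal_paths} tells us that all paths from $s$ to $t$ inside the cell have the same cost, so the integral over the cell is a quadratic function of the coordinates of $s$ and $t$ alone. This reduces propagation to adding a fixed quadratic shift to the input cost at the point $s(t)$ where the optimal path enters the cell, with $s(t)$ a monotone function of $t$ whose definition switches between the bottom and the left input boundary at one breakpoint. Each input piece therefore contributes at most two pieces to the output, and the coefficients are computable in $\Oh(1)$ time. For type $(C)$, the path goes to a point on the valley and then directly to $t$ without travelling along the valley; this makes the cost an infimal convolution of the input cost with a quadratic "path-cost" function, but because the optimal turning point on the valley is determined in $\Oh(1)$ time by setting a derivative to zero (a linear equation), each input piece again contributes $\Oh(1)$ output pieces.

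The main obstacle will be the type $(B)$ analysis, in particular the cumulative-minimum step, since \emph{a priori} the cumulative minimum of a sequence of $k$ quadratics can have $\Theta(k)$ pieces globally; the statement we need is the \emph{local} one that a single input piece contributes only a constant number of output pieces. The key point I will make precise is that, restricted to the domain of a single input quadratic $q$, the cumulative-minimum function equals the pointwise minimum of (i) the constant value carried in from the left boundary of $q$'s domain and (ii) the cumulative minimum of $q$ on its own domain, and both of these have $\Oh(1)$ pieces. Once this local bound is established, the corresponding arguments for types $(A)$ and $(C)$ are structurally simpler, and the $\Oh(1)$ running time per propagated piece follows immediately from the fact that every elementary operation above acts on a constant number of quadratic coefficients.
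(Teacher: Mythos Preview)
Your plan is essentially the paper's: a case split over the three path types of Lemma~\ref{lem:three_types_of_optimal_paths}, each reduced to a constant-length sequence of the operations you list (add a quadratic, cumulative minimum, pointwise minimum of two quadratics). The type $(B)$ analysis matches the paper almost verbatim, and your type $(A)$ treatment is a minor reframing of the paper's, which argues via the observation that in a type $(A)$ cell the optimal path may be replaced by a single horizontal or vertical segment.

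One correction and one structural difference concern type $(C)$. You write that a type $(C)$ path ``goes to a point on the valley and then directly to $t$''; this is not right --- a type $(C)$ path is an axis-parallel $L$-shaped path whose turning point lies strictly \emph{off} the valley (reaching the valley is precisely what makes a path type $(B)$). Your derivative/infimal-convolution idea is still the correct mechanism, just with the free parameter being the corner of the $L$ rather than a point on the valley. The paper makes this precise by further splitting type $(C)$ into $(C1)$ (left$\to$right, trivial because the turning point already lies on the input boundary, so propagation is a single quadratic shift), $(C2)$ (bottom$\to$right, handled by exactly your $\partial/\partial s\,pathcost(s,t)=0$ argument with turning point $(x_s,y_t)$), and $(C3)$ (bottom$\to$top, handled by rewriting the minimisation over $s$ as a cumulative-minimum step, mirroring the type $(B)$ analysis). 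Your unified treatment would go through once the geometry is stated correctly and the monotonicity constraint in $(C3)$ is made explicit, but the paper's $(C1)/(C2)/(C3)$ subdivision is what makes each subcase a clean one-line argument.
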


We defer the proof of Lemma~\ref{claim:propagate} to Section~\ref{apx:proof_of_claims}. We can now state our propagation step in general. Divide the input boundaries into segments, so that for each segments, the cost function along that segments is a single quadratic piece. Apply Lemma~\ref{claim:propagate} to a segments to compute in constant time a piecewise quadratic cost function along the output boundary. Apply this process to all segments to obtain a set of piecewise quadratic cost functions along the output boundary. Combine these cost functions by taking their lower envelope. Return this lower envelope as the boundary cost function along the output boundary. 
This completes the statement of our propagation step. Its correctness follows from construction.

\subsection{Running time analysis}
\label{subsec:running_time_analysis}

We start the section with a useful lemma. Essentially the same result is stated without proof as Observation~3.3 in~\cite{DBLP:conf/soda/BuchinBW09}. For the sake of completeness, we provide a proof.

\begin{restatable}{lem}{crossinglemma}
\label{lem:crossing_lemma_main_paper}
Let $\gamma_1, \gamma_2$ be two optimal paths. These paths cannot cross, i.e., there are no $z_1, z_2$ such that $\gamma_1(z_1)$ is below $\gamma_2(z_1)$ and $\gamma_1(z_2)$ is above $\gamma_2(z_2)$.
\end{restatable}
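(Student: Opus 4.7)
The approach is to assume the paths cross, swap tails at a common point, and derive a contradiction from the tie-breaking clause in Definition~\ref{defn:optimal_path}. Let $\gamma_1 \in \Psi(x_1, y_1)$ and $\gamma_2 \in \Psi(x_2, y_2)$ be the two optimal paths and suppose, after relabelling if necessary, that $\gamma_1(z_1)$ is below $\gamma_2(z_1)$ and $\gamma_1(z_2)$ is above $\gamma_2(z_2)$ with $z_1 < z_2$.

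The first step exploits the unit $\mathcal L_1$-speed parameterisation: every $\gamma \in \Psi$ satisfies $x(\gamma(z)) + y(\gamma(z)) = z$, so any two such paths agreeing in their $y$-coordinate at the same parameter $z$ must coincide as points of $R$. The continuous function $y(\gamma_2(z)) - y(\gamma_1(z))$ is strictly positive at $z_1$ and strictly negative at $z_2$, so the intermediate value theorem yields some $z^\star \in (z_1, z_2)$ with $\gamma_1(z^\star) = \gamma_2(z^\star) =: v$. Now define the swapped paths $\gamma_1'$ by following $\gamma_1$ on $[0, z^\star]$ and then $\gamma_2$ on $[z^\star, x_2+y_2]$, and $\gamma_2'$ analogously with the roles of $\gamma_1$ and $\gamma_2$ exchanged. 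Continuity at the splice is guaranteed because the two pieces meet at the common point $v$, and monotonicity and the unit-speed condition are inherited piecewise, so $\gamma_1' \in \Psi(x_2, y_2)$ and $\gamma_2' \in \Psi(x_1, y_1)$. Because $h$ depends only on the point in $R$, regrouping the integrals gives $\int h(\gamma_1') + \int h(\gamma_2') = \int h(\gamma_1) + \int h(\gamma_2)$. Combined with the cost-optimality inequalities $\int h(\gamma_1') \geq \int h(\gamma_2)$ and $\int h(\gamma_2') \geq \int h(\gamma_1)$, this forces both to be equalities, so $\gamma_1'$ and $\gamma_2'$ also minimise the cost to their respective endpoints.

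The contradiction then comes from the $y$-coordinate tiebreaker. Consider $\gamma_2'$ and $\gamma_1$, both cost-optimal paths to $(x_1, y_1)$: they agree on $[z^\star, x_1+y_1]$ but track $\gamma_2$ versus $\gamma_1$ on $[0, z^\star]$. The strict inequality $y(\gamma_2(z_1)) > y(\gamma_1(z_1))$ persists on a neighborhood of $z_1 \subseteq [0, z^\star]$ by continuity, so
\[
\int_0^{x_1+y_1} y(\gamma_2'(z))\, dz > \int_0^{x_1+y_1} y(\gamma_1(z))\, dz.
\]
But Definition~\ref{defn:optimal_path} selects, among cost-optimal paths to $(x_1, y_1)$, the one with maximum $y$-integral, so $\gamma_1$ cannot be the optimal path, contradicting the hypothesis. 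The main subtlety I anticipate is ensuring that swapping produces a \emph{strict} improvement in the $y$-integral; this is handled cleanly by the unit-$\mathcal L_1$-speed parameterisation, which forces any crossing to occur at a shared parameter value and hence at an actual point of $R$ where the tail swap is continuous.
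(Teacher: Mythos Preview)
Your tail-swap construction is sound through the point where $\gamma_1'$ and $\gamma_2'$ are shown to be cost-optimal, and deriving a contradiction from the $y$-integral tiebreaker is a legitimate strategy. But the final inequality has a gap: you argue that because $y(\gamma_2(z)) - y(\gamma_1(z)) > 0$ on a neighbourhood of $z_1$, the integral $\int_0^{z^\star} \big(y(\gamma_2(z)) - y(\gamma_1(z))\big)\,dz$ is strictly positive. This does not follow, since nothing rules out the paths having already met and switched order several times on $[0, z_1)$; the integrand can then be negative on large portions of $[0, z^\star]$ and the net integral can be zero or negative. Your closing remark about the unit-$\mathcal L_1$ parametrisation only secures continuity of the splice, not the sign of this integral. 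The fix is short: the symmetric comparison of $\gamma_1'$ against $\gamma_2$ yields the negated integral, so if it is nonzero one of the two optimal paths is beaten on the tiebreaker; and if it is zero, then $\gamma_2'$ is a cost-optimal path to $(x_1,y_1)$ with the same $y$-integral as $\gamma_1$ but distinct from it (they differ at $z_1$), contradicting the uniqueness asserted in Definition~\ref{defn:optimal_path}.

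For comparison, the paper's own argument is shorter and bypasses the swap entirely. Once the two paths share a common point $u$ (which your IVT step already produces), each prefix of $\gamma_i$ up to $u$ is itself an optimal path to $u$---prefixes of optimal paths are optimal, and this is easily checked to hold for the tiebreaker as well---so by uniqueness the two prefixes coincide. Since $z_1 < z^\star$, this immediately contradicts $\gamma_1(z_1)$ lying strictly below $\gamma_2(z_1)$.
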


\begin{figure}[ht]
    \centering
    \includegraphics{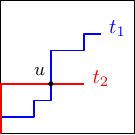}
    \caption{A pair of optimal paths that cross at a point $u$.}
    \label{fig:crossing_lemma_appendix}
\end{figure}

\begin{proof}
Suppose for sake of contradiction that there exists a pair of optimal paths, $\gamma_1$ to $t_1$ and $\gamma_2$ to $t_2$, that cross over at a point $u$, see Figure~\ref{fig:crossing_lemma_appendix}. Moreover, assume that $u$ is the first such crossover point, i.e., $u$ is the last point such that $\gamma_1$ and $\gamma_2$ are non-crossing until $u$. Without loss of generality, before the intersection point $u$, the path $\gamma_1$ is below $\gamma_2$, and after the intersection point $u$, the path $\gamma_1$ is above $\gamma_2$. Since $\gamma_1$ is an optimal path to $t_1$, the portion of $\gamma_1$ up to $u$ must be the optimal path to $u$. Similarly, the portion of $\gamma_2$ up to $u$ must be the optimal path to $u$. Since optimal paths are unique, we obtain that $\gamma_1$ and $\gamma_2$ are identical up to the first crossover point $u$, contradicting the fact that $\gamma_1$ and $\gamma_2$ cross.
\end{proof}

Define $N$ to be the total number of quadratic pieces in the boundary cost functions over all boundaries of all cells. We will show that the running time of our algorithm is $\Oh(N)$.

\begin{lemma}
\label{lem:running_time_analysis_Tn}
The running time of our dynamic programming algorithm is $\Oh(N)$.
\end{lemma}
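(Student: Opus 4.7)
The plan is to charge the work done in the algorithm to the quadratic pieces appearing on the boundaries, showing that processing each cell takes time linear in the number of pieces on its input boundaries. Fix a cell $(i,j)$ and let $k_{i,j}$ be the number of quadratic pieces along its bottom and left (input) boundaries. By Lemma~\ref{claim:propagate}, for each of these $k_{i,j}$ pieces we can compute in $\mathcal{O}(1)$ time a collection of $\mathcal{O}(1)$ candidate quadratic pieces along the output boundaries. Thus, in $\mathcal{O}(k_{i,j})$ time we obtain a collection of $\mathcal{O}(k_{i,j})$ candidate pieces covering (parts of) the output boundaries. The heart of the proof is then to argue that the lower envelope that combines these candidates into the final output boundary cost function can be computed in $\mathcal{O}(k_{i,j})$ time and has complexity $\mathcal{O}(k_{i,j})$.

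For the lower envelope step, the generic bound of a Davenport--Schinzel-type sequence would give only a near-linear complexity. To avoid this, the plan is to exploit the non-crossing property of Lemma~\ref{lem:crossing_lemma_main_paper} together with the uniqueness of optimal paths guaranteed by Definition~\ref{defn:optimal_path}. Parametrise the input boundaries monotonically and the output boundaries monotonically (say, from the corner shared with a predecessor cell towards the opposite corner). For an output boundary point $t$, let $s(t)$ denote the point at which the optimal path from $(0,0)$ to $t$ enters the cell $(i,j)$. If $t_1$ precedes $t_2$ along the output boundary, the corresponding optimal paths cannot cross, which, given the tie-breaking rule in Definition~\ref{defn:optimal_path}, forces $s(t_1) \le s(t_2)$ in the input parametrisation. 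Hence the assignment $t \mapsto s(t)$ is monotone, and the output boundary decomposes into consecutive intervals on each of which the winning candidate descends from a single input piece. Therefore each of the $k_{i,j}$ input pieces contributes $\mathcal{O}(1)$ pieces to the envelope on one contiguous interval, for a total output complexity of $\mathcal{O}(k_{i,j})$, and the envelope can be constructed by a single coordinated sweep along the input and output boundaries in $\mathcal{O}(k_{i,j})$ time.

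It remains to sum the per-cell costs. Each cell boundary in the parameter space serves as the input boundary of at most one cell (and as the output boundary of at most one other cell), so each quadratic piece counted in $N$ contributes to at most one $k_{i,j}$. Consequently
\[
\sum_{i=1}^{n-1}\sum_{j=1}^{m-1} k_{i,j} \;\le\; N,
\]
and the total running time of the dynamic program is $\sum_{i,j} \mathcal{O}(k_{i,j}) = \mathcal{O}(N)$, as claimed.

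The main obstacle I foresee is the second paragraph: carefully translating the non-crossing property of Lemma~\ref{lem:crossing_lemma_main_paper} into the monotonicity of $s(t)$ for each of the path types $(A),(B),(C)$ of Lemma~\ref{lem:three_types_of_optimal_paths} simultaneously. Since within a single cell the optimal path types may differ across different input pieces, one must verify that the non-crossing property for globally optimal paths (which live in the whole parameter space, not just this cell) still induces a consistent monotone correspondence between input and output boundary points of the cell; this is precisely what the tie-breaking rule in Definition~\ref{defn:optimal_path} is designed to guarantee.
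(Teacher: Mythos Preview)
Your proposal is correct and follows essentially the same approach as the paper: both argue per cell that propagation via Lemma~\ref{claim:propagate} yields $\mathcal{O}(|I_{i,j}|)$ candidate pieces, then invoke the non-crossing property of Lemma~\ref{lem:crossing_lemma_main_paper} to show the output pieces appear in the same order as their input pieces, allowing the lower envelope to be built in linear time (the paper uses an explicit stack sweep rather than your ``coordinated sweep'', but the idea is the same), and finally sum over cells. Your caveat about mixed path types within a cell is more cautious than necessary, since the non-crossing argument is global and hence applies uniformly; the paper simply asserts the ordering consequence without further case analysis.
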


\begin{proof}
The running time of the dynamic program is dominated by the propagation step. Let $I_{i,j}$ denote the input boundaries of the cell $(i,j)$. Let $|I_{i,j}|$ denote the number of quadratic functions in the input boundary cost function. By Lemma~\ref{claim:propagate}, each piece only propagates to a constant number of new pieces along the output boundary, and these pieces can be computed in constant time. The final piecewise quadratic function is the lower envelope of all the new pieces, of which there are $\Oh(|I_{i,j}|)$ many. 

We use Lemma~\ref{lem:crossing_lemma_main_paper} to speed up the computation of the lower envelope, so that this step takes only $\Oh(|I_{i,j}|)$ time. Since optimal paths do not cross, it implies that the new pieces along the output boundary appear in the same order as their input pieces. We perform the propagation in order of the input pieces. We maintain the lower envelope of the new pieces in a stack. For each newly propagated piece, we remove the suffix that is dominated by the new piece and then add the new piece to the stack. Since each quadratic piece can be added to the stack at most once, and removed from the stack at most once, the entire operation takes  $\Oh(|I_{i,j}|)$ time. Summing over all cells, we obtain an overall running time of $\Oh(N)$.
\end{proof}

Note that Lemma~\ref{lem:running_time_analysis_Tn} does not yet guarantee that our algorithm runs in polynomial time as we additionally need to bound $N$ by a polynomial.
Lemma~\ref{claim:propagate} is of limited help. The lemma states that each piece on the input boundary propagates to at most a constant number of pieces on the output boundary. Recall that in Section~\ref{subsec:algorithm_propagation_step}, we illustrated a type $(B)$ path that resulted in an output boundary having twice as many quadratic pieces as its input boundary. The doubling occurred in the second step of the propagation of type $(B)$ paths, see Figure~\ref{fig:type_b2_cost_function_main_paper}. If this doubling behaviour were to occur for all our cells in our dynamic program, we would get up to $N = \Omega(2^{n+m})$ quadratic pieces in the worst case, where $n$ and $m$ are the complexities of the polygonal curves $P$ and $Q$.
To obtain a polynomial running time, we show that although this doubling behaviour may occur, it does not occur \emph{too often}. 

\subsection{Bounding the cost function's complexity}

\label{subsec:algorithm_bounding_cost_function_complexity}

Our bound comes in two parts. First, we subdivide the boundaries in the parameter space into subsegments and show, in Lemma \ref{lem:number_of_critical_points_on_Ak_main_paper}, that there are $\Oh((n+m)^3)$ subsegments in total. Second, in Lemma \ref{lem:number_of_critical_points_on_Akl_main_paper}, we show that each subsegment has at most $\Oh((n+m)^2)$ quadratic pieces. Putting this together in Theorem \ref{thm:main_main_paper} gives $N = \Oh((n+m)^5)$. 

We first define the $\Oh((n+m)^3)$ subsegments. The intuition behind the subsegments is that for any two points on the subsegment, the optimal path to either of those two points is structurally similar. We can deform one of the optimal paths to the other without passing through any cell corner, or any points where a valley meets a boundary.

Formally, define $A_k$ to be the union of the input boundaries of the cells $(i,j)$ such that $i+j=k$. Alternatively, $A_k$ is the union of the output boundaries of the cells $(i,j)$ such that $i+j+1=k$. Next, construct the partition $A_k \coloneqq \{ A_{k,1}, A_{k, 2}, \dots, A_{k, L} \}$ of $A_k$ into subsegments. Define the subsegment $A_{k,\ell}$ to be the segment between the $\ell^{th}$ and $(\ell+1)^{th}$ \emph{critical point} along $A_k$. We define a critical point to be either \emph{(i)} a cell corner, \emph{(ii)} a point where the valley meets the boundary, or \emph{(iii)} a point where the optimal path switches from passing through a subsegment $A_{k-1,\ell'}$ to a different subsegment $A_{k-1,\ell''}$. 

Let $|A_k|$ denote the number of piecewise quadratic cost functions $A_{k,\ell}$ along $A_k$. Let $|A_{k,\ell}|$ denote the number of pieces in the piecewise quadratic cost function along the subsegment~$A_{k,\ell}$. Thus, we can rewrite the total number of quadratic functions $N$ as:
\[
	N = \sum_{k=2}^{n+m-1} \sum_{\ell=1}^{|A_k|} |A_{k,\ell}|.
\]

We first show that the number of subsegments $|A_k|$ is bounded by $\Oh(k^2)$ and then proceed to show that $|A_{k,\ell}|$ is bounded by $\Oh(k^2)$ for all $k, \ell$.
\begin{lemma}
\label{lem:number_of_critical_points_on_Ak_main_paper}
For any $k \in [n + m]$, we have $|A_k| \leq 2k^2$.
\end{lemma}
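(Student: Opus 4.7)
My plan is to bound the three kinds of critical points on $A_k$ separately and then combine the bounds via induction on $k$. Recall that $A_k$ is the union of the bottom and left boundaries of the cells $(i,j)$ with $i+j=k$, which form a monotone staircase in the parameter space with $k-1$ horizontal and $k-1$ vertical unit segments. I will parametrize $A_k$ monotonically from end to end (say, from the point on the $y$-axis to the point on the $x$-axis of this staircase), so that the notion of ``moving along $A_k$'' is well-defined.

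For critical points of type (i), the staircase has $2k-1$ cell corners where consecutive unit segments meet (plus possibly two endpoints), so type (i) contributes at most $2k+O(1)$ points. For critical points of type (ii), every cell contains at most one valley, and a valley crosses the input boundary of its cell in at most two points; since only $k-1$ cells contribute boundaries to $A_k$, this gives at most $2(k-1)$ points of type (ii). So types (i) and (ii) together contribute $O(k)$ critical points, and with careful bookkeeping this total can be made at most $4k - c$ for a small constant $c$.

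The main work is bounding type (iii) critical points, and this is where Lemma~\ref{lem:crossing_lemma_main_paper} enters. For each $t \in A_k$, the optimal path from $(0,0)$ to $t$ first crosses $A_{k-1}$ at a unique point $s(t)$ (using the tie-breaking rule in Definition~\ref{defn:optimal_path} to handle type $(A)$ cells). As $t$ moves monotonically along $A_k$, the non-crossing property of optimal paths forces $s(t)$ to move monotonically along $A_{k-1}$. Consequently, the subsegment $A_{k-1,\ell}$ containing $s(t)$ can change at most $|A_{k-1}| - 1$ times as $t$ sweeps through $A_k$, giving at most $|A_{k-1}| - 1$ critical points of type (iii).

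Putting the three bounds together yields a recurrence of the form $|A_k| \leq |A_{k-1}| + 4k - c'$, which solves to $|A_k| \leq 2k^2$ (with suitable constants absorbed into the base case $k=2$, where $|A_2| = O(1)$ is immediate). The main obstacle I anticipate is verifying cleanly that monotone traversal of the staircase $A_k$ is compatible with the non-crossing lemma: one must be careful that $s(t)$ is well-defined despite the staircase shape of $A_{k-1}$, and that transitions where an optimal path enters a different cell correspond to genuine critical points of type (iii) rather than being masked by a coincident corner of type (i). These edge cases should be resolved by the tie-breaking convention and by grouping coincident critical points, but they require attention to make the inductive step rigorous rather than merely asymptotic.
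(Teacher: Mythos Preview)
Your approach is essentially identical to the paper's: induction on $k$, with types (i) and (ii) contributing $O(k)$ new critical points and type (iii) bounded by $|A_{k-1}|$ via the non-crossing lemma (Lemma~\ref{lem:crossing_lemma_main_paper}), yielding the recurrence $|A_k|\le |A_{k-1}|+O(k)$. The paper's arithmetic is slightly tighter than yours---it counts at most $2k$ corners and at most $k$ valley--boundary incidences (one per cell, since a slope-$1$ valley meets the L-shaped input boundary of its cell at most once), giving $|A_k|\le 2(k-1)^2+3k+1\le 2k^2$---whereas your $2(k-1)$ bound for type (ii) is a slight overcount that you would need to sharpen (or offset elsewhere) to close the induction at exactly $2k^2$ rather than $2k^2+O(1)$.
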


\begin{proof}
	We prove the lemma by induction. Since the cell $(1,1)$ has at most one valley, and since the input boundary $A_2$ has one cell corner, we have $|A_2| \leq 3$. For the inductive step, note that there are at most $2k$ cell corners on $A_k$, and there are at most $k$ points where a valley meets a boundary on $A_k$. By the inductive hypothesis, there are at most $2 (k-1)^2$ subsegments on $A_{k-1}$. And as optimal paths do not cross by Lemma~\ref{lem:crossing_lemma_main_paper}, each subsegment of $A_{k-1}$ contributes at most once to the optimal path switching from one subsegment to a different one on $A_k$. Thus, for $k \geq 3$, we obtain $|A_k| \leq 2(k-1)^2 + 2k + k + 1 = 2k^2 - k + 3 \leq 2k^2.$
\end{proof}

Next, we show that $|A_{k,\ell}|$ is bounded by $\Oh(k^2)$ for all $k, \ell$. We proceed by induction. Recall that, due to the third type of critical point, all optimal paths to $A_{k,\ell}$ pass through the same subsegment of $A_{k-1}$, namely $A_{k-1,\ell'}$ for some $\ell'$. Our approach is to assume the inductive hypothesis for $|A_{k-1,\ell'}|$, and bound $|A_{k,\ell}|$ relative to $|A_{k-1,\ell'}|$. We already have a bound of this form, specifically,  Lemma~\ref{claim:propagate} implies that $|A_{k,\ell}| \leq c \cdot |A_{k-1,\ell'}|$, for some constant $c > 1$. Unfortunately, this bound does not rule out an exponential growth in the cost function complexity. We instead prove the following improved bound:

\begin{restatable}{lem}{claimdistinctpairs}
\label{claim:distinct_pairs}
Let $|A_{k,\ell}|$ be a subsegment on $A_k$, and suppose all optimal paths to $|A_{k,\ell}|$ pass through subsegment $|A_{k-1,\ell'}|$ on $A_{k-1}$. Then
\[
    \begin{array}{rcl}
    |A_{k,\ell}| &\leq& |A_{k-1,\ell'}| + D(A_{k-1,\ell'})+1, \\
    D(A_{k,\ell}) &\leq& D(A_{k-1,\ell'}) + 1,
    \end{array}
\]
where $D(\cdot)$ counts, for a given subsegment, the number of distinct pairs $(a,b)$ over all quadratics $ax^2 + bx + c$ in the boundary cost function for that subsegment.
\end{restatable}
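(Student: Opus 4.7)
The plan is to trace the cost function through the three-step propagation across a single cell, since by hypothesis all optimal paths to $A_{k,\ell}$ enter the corresponding cell through $A_{k-1,\ell'}$. For type $(B)$ paths from Lemma~\ref{lem:three_types_of_optimal_paths}, the propagation consists of: (i) adding a fixed quadratic $\alpha x^2 + \beta x + \gamma$ obtained by integrating the linear height function along a straight sub-path whose endpoints vary linearly; (ii) taking a cumulative minimum along the valley; and (iii) adding a second fixed quadratic. Types $(A)$ and $(C)$ use only variants of (i), so the only operation that can genuinely inflate complexity is (ii).

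For the $D$-bound, operation (i) rewrites every piece $ax^2 + bx + c$ as $(a+\alpha)x^2 + (b+\beta)x + (c+\gamma)$, merely translating the set of $(a,b)$-pairs by $(\alpha,\beta)$ and preserving its cardinality. Operation (ii) produces pieces of only two kinds: decreasing sub-pieces of input pieces, each inheriting the $(a,b)$-pair of its parent, and horizontal plateaus, all sharing the single pair $(0,0)$. Hence (ii) enlarges the $(a,b)$-set by at most one, and composing (i), (ii), (i) across the cell yields $D(A_{k,\ell}) \le D(A_{k-1,\ell'}) + 1$.

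For the piece-count bound, operations (i) and (iii) leave the piece count unchanged, so it suffices to analyse the cumulative minimum $g$ of the intermediate piecewise-quadratic function $f$ with $r = |A_{k-1,\ell'}|$ pieces. Each piece of $f$ contributes at most one decreasing quadratic sub-piece to $g$, since once the running minimum drops strictly below a piece's value that piece cannot become active again; this accounts for at most $r$ pieces of $g$. The remaining pieces of $g$ are horizontal plateaus, and it remains to bound their count by $D(A_{k-1,\ell'})$.

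The principal obstacle I anticipate is precisely this plateau count. A naive assignment of each plateau to the $(a,b)$-pair of the input piece that first descends past it is not injective in general, since distinct plateaus may be terminated by pieces with identical $(a,b)$ but different $c$. My plan is a finer charging argument that exploits two structural facts simultaneously: continuity of $g$ (so that consecutive pieces of $g$ must realize distinct polynomials, and two pieces with the same $(a,b)$ therefore cannot be adjacent) together with the monotone non-increasing behaviour of $g$ (so that the sequence of plateau heights is strictly decreasing). These two facts should combine to force the assignment of a plateau to the $(a,b)$-family active around it to be injective onto the distinct $(a,b)$-pairs already present in $f$, giving the desired bound. The straightforward case analyses for types $(A)$ and $(C)$, which skip the cumulative minimum entirely, then complete the proof of $|A_{k,\ell}| \le |A_{k-1,\ell'}| + D(A_{k-1,\ell'})$.
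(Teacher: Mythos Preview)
Your outline for the $D$-bound is essentially the paper's, but the piece-count argument has a real gap, and your treatment of type~$(C)$ is incorrect.

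\textbf{The plateau count.} You correctly identify that bounding the number of horizontal plateaus in the cumulative minimum by $D(A_{k-1,\ell'})$ is the crux, but your proposed charging argument (continuity of $g$ plus strict decrease of plateau heights) does not suffice. Those two facts alone do not prevent the input function $f$ from having many ``kink'' local minima at piece boundaries---places where the left derivative is negative and the right derivative is positive---each of which would spawn its own plateau. With only two distinct $(a,b)$ pairs one can in principle build a continuous piecewise quadratic that alternates between them and has arbitrarily many such kink minima. What the paper actually uses is a stronger structural invariant proved by a simultaneous induction (their Lemma~\ref{claim:details_piecewise_quadratic}): along every subsegment the boundary cost function satisfies \emph{left derivative $\geq$ right derivative} at every point. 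This forbids kink minima entirely, so every interior local minimum of $f$ lies at the vertex $x=-b/(2a)$ of some piece, and there is at most one such vertex per distinct $(a,b)$ pair. That is the injective charging you are looking for, and it does not follow from continuity and monotonicity of the cumulative minimum alone.

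\textbf{Type $(C)$.} Your assertion that types~$(A)$ and~$(C)$ ``use only variants of~(i)'' and ``skip the cumulative minimum entirely'' is wrong. Type~$(C)$ splits into three sub-cases. Type~$(C2)$ (bottom $\to$ right) requires minimising the bivariate $pathcost(s,t)$ over~$s$, which the paper handles by solving $\partial_s\,pathcost(s,t)=0$ to get a piecewise-linear $s(t)$ and then arguing that pieces with the same $(a,b)$ propagate to pieces with the same $(a,b)$; this is a genuinely different mechanism from adding a fixed quadratic. Type~$(C3)$ (bottom $\to$ top) does involve a cumulative-minimum step, after subtracting and later re-adding the quadratic $Q(t)$ that integrates the height along the top boundary. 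Your proof sketch does not cover these cases.
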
 

We defer the proof of Lemma~\ref{claim:distinct_pairs} to Section~\ref{apx:proof_of_claims}. The lemma obtains a polynomial bound on the growth of the number of quadratic pieces by showing, along the way, a polynomial bound on the growth of the number of distinct $(a,b)$ pairs over the quadratics $ax^2 + bx + c$. 

As we consider this lemma to be one of the main technical contributions of the paper, we will briefly outline its intuition. It is helpful for us to revisit the doubling behaviour of type $(B)$ paths. Recall that in our example in Figure~\ref{fig:type_b2_cost_function_main_paper}, we may have  $|A_{k,\ell}| = 2|A_{k-1,\ell'}|$. This doubling behaviour does not contradict Lemma~\ref{claim:distinct_pairs}, so long as all quadratic functions along $A_{k-1,\ell'}$ have distinct $(a,b)$ pairs. In fact, for $|A_{k,\ell}| = 2|A_{k-1,\ell'}|$ to occur, each quadratic function in $|A_{k-1,\ell'}|$ must create a new horizontal piece in the cumulative minimum step. But for any two quadratic functions with the same $(a,b)$ pair, only one of them can to create a new horizontal piece, since the horizontal piece starts at the $x$-coordinate~$-\frac {b} {2a}$. Therefore, we must have had that all quadratic functions along $A_{k-1,\ell'}$ have distinct $(a,b)$ pairs. In Section~\ref{apx:proof_of_claims}, we generalise this argument and prove $|A_{k,\ell}| \leq |A_{k-1,\ell'}| + D(A_{k-1,\ell'})+1$.

We perform a similar analysis in the special case of type $(B)$ paths to give the intuition behind $D(A_{k,\ell}) \leq D(A_{k-1,\ell'}) + 1$. For type $(B)$ paths, the number of distinct $(a,b)$ pairs changes only in the cumulative minimum step. All pieces along $A_{k,\ell}$ can either be mapped to a piece along $A_{k-1,\ell'}$, or it is a new horizontal piece. However, all new horizontal pieces have an $(a,b)$ pair of $(0,0)$, so the number of distinct $(a,b)$ pairs increases by only one. For the full proof of Lemma~\ref{claim:distinct_pairs} for all three path types, refer to Section~\ref{apx:proof_of_claims}.

With Lemma~\ref{claim:distinct_pairs} in mind, we can now prove a bound on $|A_{k,\ell}|$ by induction.

\begin{lemma} \label{lem:number_of_critical_points_on_Akl_main_paper}
For any $k \in [n+m]$ and $A_{k,\ell} \in A_k$ we have $|A_{k,\ell}| \leq k^2$.
\end{lemma}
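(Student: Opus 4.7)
The plan is to prove $|A_{k,\ell}| \leq k^2$ by \emph{simultaneous induction on $k$}, carrying along the auxiliary bound $D(A_{k,\ell}) \leq k$. Maintaining the second bound is essential: Lemma~\ref{claim:propagate} by itself only yields $|A_{k,\ell}| \leq c \cdot |A_{k-1,\ell'}|$ for some constant $c>1$, which would permit exponential growth. Lemma~\ref{claim:distinct_pairs} replaces this multiplicative recursion with an additive one in which $D$ plays the role of a ``rate of growth'', and $D$ itself satisfies a clean $+1$ recursion that is easy to iterate.

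For the base case $k=2$, the subsegments $A_{2,\ell}$ lie on the axes within cell $(1,1)$, where Section~\ref{subsec:algorithm_base_case} shows that the cost function is a piecewise quadratic with at most two pieces. Hence $|A_{2,\ell}| \leq 2 \leq 4 = 2^2$, and trivially $D(A_{2,\ell}) \leq |A_{2,\ell}| \leq 2$, so both bounds hold.

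For the inductive step, fix an arbitrary subsegment $A_{k,\ell}$. By the definition of critical points of type \emph{(iii)}, all optimal paths to points of $A_{k,\ell}$ pass through a single predecessor subsegment $A_{k-1,\ell'}$, which is exactly the hypothesis needed to apply Lemma~\ref{claim:distinct_pairs}. Its second inequality and the inductive hypothesis give
\[
	D(A_{k,\ell}) \;\leq\; D(A_{k-1,\ell'}) + 1 \;\leq\; (k-1) + 1 \;=\; k.
\]
Substituting the inductive bounds $|A_{k-1,\ell'}| \leq (k-1)^2$ and $D(A_{k-1,\ell'}) \leq k-1$ into the first inequality of Lemma~\ref{claim:distinct_pairs} then yields
\[
	|A_{k,\ell}| \;\leq\; |A_{k-1,\ell'}| + D(A_{k-1,\ell'}) \;\leq\; (k-1)^2 + (k-1) \;=\; k^2 - k \;\leq\; k^2,
\]
closing the induction on both quantities simultaneously.

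The only subtle point is making sure Lemma~\ref{claim:distinct_pairs} can actually be invoked, which requires the \emph{existence and uniqueness} of a predecessor subsegment $A_{k-1,\ell'}$ through which all optimal paths to $A_{k,\ell}$ pass; this is precisely what the type \emph{(iii)} critical points from Section~\ref{subsec:algorithm_bounding_cost_function_complexity} are designed to enforce, together with Lemma~\ref{lem:crossing_lemma_main_paper}, which prevents two optimal paths from swapping predecessors within a single subsegment. Once those structural facts are in hand, no further geometric analysis is needed: the bound on $|A_{k,\ell}|$ follows from a direct telescoping of two recursions, with the $D \leq k$ bound doing all the nontrivial work.
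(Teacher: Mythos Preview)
Your proof is correct and follows essentially the same route as the paper: a simultaneous induction that first establishes $D(A_{k,\ell})\le k$ via the second inequality of Lemma~\ref{claim:distinct_pairs}, and then feeds that into the first inequality to get $|A_{k,\ell}|\le (k-1)^2+(k-1)\le k^2$. Your additional sentence justifying the existence of a unique predecessor subsegment (so that Lemma~\ref{claim:distinct_pairs} applies) is a nice touch that the paper leaves implicit.
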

\begin{proof}
Note that in the base case $D(A_{2,\ell}) \leq 2$ and $|A_{2,\ell}| \leq 4$ for any $A_{2,\ell} \in A_2$. By Lemma~\ref{claim:distinct_pairs}, we get $D(A_{k,\ell}) \leq D(A_{k-1,\ell'}) + 1$, for some subsegment $A_{k-1,\ell'}$ on $A_{k-1}$. By a simple induction, we get $D(A_{k,\ell}) \leq k$ for any $k \in [n + m]$. Similarly, assuming $|A_{k-1,\ell}| \leq (k-1)^2$ for any $A_{k-1,\ell} \in A_{k-1}$, we use Lemma~\ref{claim:distinct_pairs} to inductively obtain
$|A_{k,\ell}| \leq |A_{k-1,\ell'}| + D(A_{k-1,\ell'}) + 1 \leq |A_{k-1,\ell'}|  + k \leq (k-1)^2 + k-1 + 1 \leq k^2$ for any $A_{k,\ell} \in A_k$.
\end{proof}

Using our lemmas, we can finally bound $N$, and thereby the overall running time.

\begin{theorem} \label{thm:main_main_paper}
The Continuous Dynamic Time Warping distance between two 1-dimensional polygonal curves of length $n$ and $m$, respectively, can be computed in time $\Oh((n+m)^5)$.
\end{theorem}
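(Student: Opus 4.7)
The plan is to assemble Theorem \ref{thm:main_main_paper} directly from the lemmas already established, rather than introducing new structural ideas. First, I would invoke Lemma \ref{lem:running_time_analysis_Tn} to reduce the claim about running time to a claim about the combinatorial quantity $N$, the total number of quadratic pieces in all boundary cost functions across all cells. Since Lemma \ref{lem:running_time_analysis_Tn} gives an $\Oh(N)$ bound on the dynamic program, it suffices to show that $N = \Oh((n+m)^5)$.

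Next, I would decompose $N$ according to the anti-diagonals $A_k$ of the parameter space, using the identity already displayed in the excerpt,
\[
    N = \sum_{k=2}^{n+m-1} \sum_{\ell=1}^{|A_k|} |A_{k,\ell}|.
\]
Here the outer range is the number of anti-diagonals, which is $\Oh(n+m)$. For the inner sum I would apply Lemma \ref{lem:number_of_critical_points_on_Akl_main_paper}, which gives $|A_{k,\ell}| \leq k^2$ uniformly, and therefore $\sum_{\ell=1}^{|A_k|} |A_{k,\ell}| \leq k^2 \cdot |A_k|$. Lemma \ref{lem:number_of_critical_points_on_Ak_main_paper} then bounds the number of subsegments on the $k$-th anti-diagonal by $|A_k| \leq 2k^2$, so each anti-diagonal contributes at most $2k^4$ quadratic pieces.

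Summing the geometric-style bound over $k$ yields
\[
    N \;\leq\; \sum_{k=2}^{n+m-1} 2k^4 \;=\; \Oh((n+m)^5),
\]
and combining with Lemma \ref{lem:running_time_analysis_Tn} gives the claimed $\Oh((n+m)^5)$ time bound. I would close by remarking that the base case (boundary cost functions along the axes computed in Section \ref{subsec:algorithm_base_case}) fits within this budget, since each axis-boundary piece is a piecewise quadratic of complexity $\Oh(n+m)$ and can be constructed in linear time before the propagation begins.

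There is essentially no hard step left here, since the work has been done in the three supporting lemmas; the only thing to be careful about is consistent use of the index $k$ ranging over anti-diagonals of the $(n-1)\times(m-1)$ grid of cells and verifying that the summation upper limit $n+m-1$ indeed covers all cells, which follows because every cell $(i,j)$ has $2 \leq i+j+1 \leq n+m-1$. Thus the proof of Theorem \ref{thm:main_main_paper} is a short synthesis that packages together Lemmas \ref{lem:running_time_analysis_Tn}, \ref{lem:number_of_critical_points_on_Ak_main_paper}, and \ref{lem:number_of_critical_points_on_Akl_main_paper}.
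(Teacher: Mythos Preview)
Your proposal is correct and follows essentially the same approach as the paper: both reduce the running time to bounding $N$ via Lemma~\ref{lem:running_time_analysis_Tn}, expand $N$ as the double sum over anti-diagonals and subsegments, and then plug in Lemmas~\ref{lem:number_of_critical_points_on_Ak_main_paper} and~\ref{lem:number_of_critical_points_on_Akl_main_paper} to obtain $\sum_k 2k^4 = \Oh((n+m)^5)$. The only additions in your write-up are the remarks on the base case and on the index range, which are harmless sanity checks.
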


\begin{proof}
Using Lemmas~\ref{lem:number_of_critical_points_on_Ak_main_paper}~and~\ref{lem:number_of_critical_points_on_Akl_main_paper}, we have
\begin{align*}
	N = \sum_{k=2}^{n+m-1} \sum_{\ell=1}^{|A_k|} |A_{k,\ell}|
	\leq \sum_{k=2}^{n+m} \sum_{\ell=1}^{|A_k|} k^2
	\leq \sum_{k=2}^{n+m} 2k^4 
	\leq 2(n+m)^5.
\end{align*}
Thus, the overall running time of our algorithm is $\Oh((n+m)^5)$, by Lemma~\ref{lem:running_time_analysis_Tn}.
\end{proof}

\section{Proofs of Lemmas~\ref{claim:piecewise_quadratic}, \ref{claim:propagate} and~\ref{claim:distinct_pairs}}

\label{apx:proof_of_claims}

In this section, we provide proofs for the following three lemmas.

\claimpiecewisequadratic*
\claimpropagate*
\claimdistinctpairs*

Our approach is to use induction to prove these three lemmas. Our inductive approach is to prove the three lemmas simultaneously, in that the inductive hypothesis for one lemma is required to prove the other lemmas. In fact, for this inductive approach to work, we require a stronger version of Lemma~\ref{claim:piecewise_quadratic}, which we state as Lemma~\ref{claim:details_piecewise_quadratic}.

\begin{lemma}
\label{claim:details_piecewise_quadratic}
The cost along a subsegment $A_{k,\ell}$ is a continuous piecewise quadratic function. Moreover, for every point $t$ on the piecewise quadratic function, the left derivative at $t$ is greater than or equal to the right derivative at $t$.
\end{lemma}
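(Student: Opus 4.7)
I would prove Lemma~\ref{claim:details_piecewise_quadratic} by induction on $k$, carried out simultaneously with Lemmas~\ref{claim:piecewise_quadratic}, \ref{claim:propagate}, and \ref{claim:distinct_pairs} since their inductive hypotheses feed each other. The base case $k=2$ lives on the coordinate axes: because $cost(x,0) = \int_0^x h(z,0)\,dz$ and $h(\cdot,0)$ is continuous and piecewise linear, the primitive is piecewise quadratic and in fact $C^1$. Hence the one-sided derivatives agree at every breakpoint and the inequality left $\geq$ right is trivially satisfied; the $y$-axis case is symmetric.

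For the inductive step, fix a subsegment $A_{k,\ell}$ on the output boundary of some cell. By the definition of a type~(iii) critical point, all optimal paths reaching $A_{k,\ell}$ enter the cell through a single input subsegment $A_{k-1,\ell'}$. By the inductive hypothesis the cost function along $A_{k-1,\ell'}$ is continuous, piecewise quadratic, and satisfies the left-$\geq$-right derivative invariant. The propagation of Section~\ref{subsec:algorithm_propagation_step}, applied to a type~(A), (B), or (C) path through the cell and then combined across the at most three path types and the two input boundaries, consists of only three elementary operations on piecewise quadratic functions, possibly preceded by an affine change of variable (which trivially preserves both properties): (i) adding a quadratic, namely the integral of the linear height function along an axis-parallel segment or along the valley; (ii) taking a cumulative minimum, invoked once in the second step of the type~(B) analysis; and (iii) taking a lower envelope to combine contributions. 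I will verify that each operation preserves continuity, piecewise quadraticness, and the derivative inequality.

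The heart of the proof is checking the derivative inequality for each of (i)--(iii). Operation (i) shifts both one-sided derivatives at every existing breakpoint by the same continuous amount $q'$, so the inequality is preserved and no new breakpoints arise. For (ii), new breakpoints of $g(x)=\min_{y\leq x} f(y)$ come in three flavours: a quadratic piece reaching its vertex and transitioning to horizontal (both one-sided derivatives equal $0$); a horizontal segment meeting a new decreasing quadratic that dips below it (left derivative $0$, right derivative strictly negative); and a direct crossing between two decreasing quadratic pieces, where the newly-adopted piece must have the more negative slope in order to take over the minimum. In all three cases left derivative $\geq$ right derivative. For (iii), at a crossing of the envelope's incoming function $f_1$ and outgoing function $f_2$, the relations $f_1<f_2$ just before and $f_1>f_2$ just after the crossing force $f_1'(x_0)\geq f_2'(x_0)$, and these are precisely the left and right derivatives of the envelope at $x_0$; breakpoints of $f_1$ or $f_2$ that survive on the envelope inherit the property by the inductive hypothesis.

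The main obstacle I anticipate is careful bookkeeping rather than a single hard idea: identifying every source of a new breakpoint along $A_{k,\ell}$ (new breakpoints from the cumulative minimum, crossings produced by the lower envelope of the type (A)/(B)/(C) contributions and the two input boundaries, and breakpoints inherited from $A_{k-1,\ell'}$ through the affine reparametrisations that map an input-boundary coordinate to a valley coordinate to an output-boundary coordinate), and confirming in each case that it falls into one of the enumerated patterns above. Once this case analysis is complete, continuity is automatic since all three operations preserve it, and the derivative inequality holds at every point, completing the induction and establishing Lemma~\ref{claim:details_piecewise_quadratic}.
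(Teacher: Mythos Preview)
Your overall strategy---induction on $k$ carried out simultaneously with Lemmas~\ref{claim:propagate} and~\ref{claim:distinct_pairs}, base case on the axes, then a case split on the path type inside the cell---is exactly what the paper does, and your verification that adding a quadratic, taking a cumulative minimum, and taking a lower envelope each preserve the left-$\geq$-right derivative invariant is correct and covers types~$(A)$, $(B)$, and the bottom-to-top and left-to-right sub-cases of~$(C)$.

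There is, however, one sub-case your three-operation framework does not cover: the type~$(C)$ path whose input and output lie on \emph{adjacent} boundaries (bottom-to-right, or symmetrically left-to-top). Here the output cost is $cost(t)=\min_{s} pathcost(s,t)$ where $pathcost(s,t)$ is a genuinely bivariate quadratic with a cross term, so the minimiser $s(t)$ is a \emph{different} affine function of $t$ for each input piece; this is neither a single affine reparametrisation, nor a cumulative minimum, nor a lower envelope of functions you already have in hand. The paper treats this case separately: it uses the first-order condition $\partial pathcost/\partial s=0$ to extract the piecewise-linear map $s(t)$, substitutes back to get one quadratic per input piece, and then proves the derivative inequality by comparing $pathcost(s(t),t)$ with $pathcost(\bar s(t),t)$ where $\bar s$ is the linear continuation of $s(t)$ past the breakpoint. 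This last step is in the same spirit as your envelope argument for~(iii), so once you isolate this sub-case the fix is short, but as written your plan asserts that only the three listed operations occur, which is not true.

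A smaller point: you speak of ``combining across the at most three path types and the two input boundaries'' via a lower envelope, but in fact the paper inserts extra critical points so that within a single subsegment $A_{k,\ell}$ exactly one path type and exactly one parent subsegment $A_{k-1,\ell'}$ is relevant; no cross-type envelope is needed.
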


The remainder of this paper is dedicated to prove these lemmas required for the correctness of our algorithm and its running time. In Section~\ref{subsec:algorithm_base_case}, we proved Lemma~\ref{claim:details_piecewise_quadratic} for the base case where $k=2$. Lemmas~\ref{claim:propagate} and~\ref{claim:distinct_pairs} are clearly true for $k=2$. For the inductive case we require a case analysis.  Recall Lemma~\ref{lem:three_types_of_optimal_paths}.

\optimalpathstheorem*

Using Lemma~\ref{lem:three_types_of_optimal_paths}, we divide our inductive step into separate case analyses for each of the three types of optimal paths:

\begin{figure}[ht]
    \centering
    \begin{tabular}{|l|l|l|l|}
        \hline
         & Type $(A)$ & Type $(B)$ & Type $(C)$ \\
        \hline
        Lemma~\ref{claim:details_piecewise_quadratic}
        & Lemma~\ref{lem:type_A_piecewise_quadratic}
        & Lemma~\ref{lem:type_B_piecewise_quadratic}
        & Lemma~\ref{claim:type_C_piecewise_quadratic}
        \\
        \hline
        Lemma~\ref{claim:propagate}
        & Lemma~\ref{lem:type_A_propagate_in_constant_time}
        & Lemma~\ref{lem:type_B_propagate_in_constant_time}
        & Lemma~\ref{claim:type_C_propagate_in_constant_time}
        \\
        \hline
        Lemma~\ref{claim:distinct_pairs}
        & Lemma~\ref{lem:type_A_bounds}
        & Lemma~\ref{lem:type_B_bounds}
        & Lemma~\ref{claim:type_C_bounds}
        \\
        \hline
    \end{tabular}
\end{figure}

For proving our lemmas, it will be useful to us to introduce the following notation.

\begin{definition}[parent]
\label{defn:parent}
If all optimal paths of $A_{k,\ell}$ pass through $A_{k-1,\ell'}$, then the subsegment $A_{k-1,\ell'}$ is called the parent of $A_{k,\ell}$.
\end{definition}

\subsection{Type \textit{(A)} paths}
\label{sec:case_analysis_type_a_paths}

In this section we show that Lemmas \ref{claim:propagate}, \ref{claim:distinct_pairs}, and~\ref{claim:details_piecewise_quadratic} hold for the output boundary cost functions of type $(A)$ paths. We assume the inductive hypothesis, that Lemmas \ref{claim:propagate}, \ref{claim:distinct_pairs}, and~\ref{claim:details_piecewise_quadratic} hold for the input boundary cost function. First we state two observations that help simplify our proofs.

\begin{observation}
\label{obs:type_A_all_paths}
Let $A_{k-1,\ell'}$ be the parent of $A_{k,\ell}$.
Suppose there exists a type $(A)$ path from $A_{k-1,\ell'}$ to $A_{k,\ell}$. Then all paths from $A_{k-1,\ell'}$ to $A_{k,\ell}$ are type $(A)$ paths.
\end{observation}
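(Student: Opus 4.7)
The plan is to show that every path from $A_{k-1,\ell'}$ to $A_{k,\ell}$ is confined to a single cell, and then invoke the fact that the type~$(A)$ designation is a property of that cell rather than of any particular path.

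First, I would observe that by the definition of the partition in Section~\ref{subsec:algorithm_bounding_cost_function_complexity}, cell corners are critical points, so a subsegment cannot contain a cell corner in its interior. Hence $A_{k-1,\ell'}$ lies inside a single input edge of some cell $c$ with $i+j=k-1$, and $A_{k,\ell}$ lies inside a single output edge of some cell $c'$ with $i+j=k-1$. Since any path from $A_{k-1,\ell'}$ to $A_{k,\ell}$ is non-decreasing in both coordinates, it must enter a cell whose input boundary touches $A_{k-1,\ell'}$ and exit through an output boundary of the same cell; by monotonicity there is no way to reach the output edge of a different cell with $i+j=k-1$ from the input edge of $c$ without backtracking. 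Consequently $c=c'$, and every path from $A_{k-1,\ell'}$ to $A_{k,\ell}$ lies entirely within this common cell $c$.

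Next, I would recall Lemma~\ref{lem:three_types_of_optimal_paths}: type~$(A)$ is precisely the case in which the segments $\overrightarrow{p_i p_{i+1}}$ and $\overrightarrow{q_j q_{j+1}}$ corresponding to the cell are oriented in opposite directions, in which case all paths between two fixed points of the cell have equal cost. Crucially, the classification of a cell as type~$(A)$ versus type~$(B)$/$(C)$ depends only on the cell itself and not on the specific path chosen inside it.

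Combining the two observations: the hypothesis supplies one type~$(A)$ path from $A_{k-1,\ell'}$ to $A_{k,\ell}$, which forces the shared cell $c$ to have its segments in opposite directions. Since every other path from $A_{k-1,\ell'}$ to $A_{k,\ell}$ is also contained in $c$, each such path inherits the type~$(A)$ label. The only subtlety, and the one step worth writing out carefully, is the monotonicity argument that prevents paths from escaping $c$ and entering another cell on the antidiagonal $i+j=k-1$; this is where the fact that cell corners are critical points is used to keep both endpoints of each path on a single cell edge.
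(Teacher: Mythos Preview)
Your proposal is correct and follows essentially the same approach as the paper's proof: both rely on the fact that the type~$(A)$/$(B)$/$(C)$ classification is determined by the cell (via the relative orientation of the two segments), together with the fact that subsegments cannot straddle a cell corner because corners are critical points. The paper phrases this as a short contradiction argument (a non-type-$(A)$ path would force a direction switch, hence a cell corner inside a subsegment), whereas you spell out directly that both subsegments lie on edges of one common cell and that monotonicity confines every path to that cell; the extra care you take with the monotonicity step is fine but not strictly necessary, and the core idea is identical.
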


\begin{proof}
Type $(A)$ paths only exist between segments that are in opposite directions. Type $(B)$ or $(C)$ paths only exist between segments that are in the same direction. If not all paths were type $(A)$ paths, we must have switched directions, which means there is a cell corner, which is a critical point. But in a subsegment there are no critical points.
\end{proof}

\begin{observation}
\label{obs:type_A_hor_or_vert}
All optimal paths in a type $(A)$ cell can be replaced by a single horizontal segment or a single vertical segment, possibly changing the starting point, but without changing the cost at the end point.
\end{observation}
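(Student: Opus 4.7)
The plan is to exploit the fact that, in a type-$(A)$ cell, the two segments run in opposite directions, so $P(x) - Q(y)$ is affine in $x + y$ alone and the height satisfies $h(x, y) = g(x + y)$ for a piecewise-linear $g$. By Lemma~\ref{lem:three_types_of_optimal_paths}(A), for any monotone in-cell path from $s = (x_s, y_s)$ to $t = (x_t, y_t)$ the integral of $h$ equals $F(x_t + y_t) - F(x_s + y_s)$ with $F' = g$, independently of the path shape. Writing $G(s) := cost(s) - F(x_s + y_s)$ and letting $(x_L, y_B)$ denote the bottom-left corner of the cell, this gives
\[
    cost(t) = F(x_t + y_t) + \min_s G(s),
\]
where the minimum is over input-boundary points $s \leq t$. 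The observation then reduces to showing that this minimum is attained at $(x_t, y_B)$ or $(x_L, y_t)$---the starting points of a single vertical, respectively horizontal, in-cell segment ending at $t$.

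The heart of the argument is the monotonicity of $G$ along each input boundary. For the bottom boundary, take $x_s \leq x_s'$ and append the horizontal segment from $(x_s, y_B)$ to $(x_s', y_B)$ to an optimal path ending at $(x_s, y_B)$: this is a legal monotone path, and since $h(w, y_B) = g(w + y_B)$ on this shared boundary (an identity depending only on the cell's own segment pair), the appended cost is exactly $F(x_s' + y_B) - F(x_s + y_B)$. Consequently $cost(x_s', y_B) \leq cost(x_s, y_B) + F(x_s' + y_B) - F(x_s + y_B)$, which rearranges to $G(x_s', y_B) \leq G(x_s, y_B)$. The left-boundary argument is symmetric, giving non-increasingness of $G$ in $y_s$.

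With both monotonicities in hand, the minimum of $G$ subject to $s \leq t$ is attained at the rightmost feasible bottom point $(x_t, y_B)$ or the topmost feasible left point $(x_L, y_t)$, either of which yields a single-segment in-cell path realising $cost(t)$. The main subtlety I anticipate is justifying the identity $h(w, y_B) = g(w + y_B)$ on the shared boundary, but this is immediate because the height there is $|P(w) - Q(y_B)|$, which depends only on data already attached to the current cell. Beyond this, the proof is a direct computation.
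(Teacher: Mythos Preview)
Your proposal is correct and follows essentially the same approach as the paper. The paper's proof replaces an optimal in-cell path by a horizontal-then-vertical path of equal cost (Lemma~\ref{lem:three_types_of_optimal_paths}(A)) and then observes that the vertical-only path from the turning point $s'$ is at least as good because $cost(s') \le cost(s) + (\text{horizontal cost})$; this is exactly your monotonicity of $G$ along the input boundary, just stated without introducing $G$ explicitly.
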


\begin{proof}
Consider any optimal path through a type $(A)$ cell and without loss of generality assume that it starts on the bottom boundary. By Lemma~\ref{lem:three_types_of_optimal_paths} we can replace a type (A) path by any other path without changing the cost. In particular, we can replace it by the path that first goes horizontally along the bottom boundary and then vertically until the end point. However, the cost of this path is at least as high as the exclusively vertical path that starts at the point where we leave the bottom boundary. Thus, we can only consider paths that consist of a single horizontal or vertical segment, without changing the cost on the outputs of the cell.
\end{proof}

We now show Lemma~\ref{claim:details_piecewise_quadratic} for type $(A)$ paths.
Note that by Observation \ref{obs:type_A_all_paths} we already know that either all paths are of type $(A)$ or none are. Hence, we only have to consider the former case.

\begin{lemma}
\label{lem:type_A_piecewise_quadratic}
Let $A_{k-1,\ell'}$ be the parent of $A_{k,\ell}$. Suppose all paths from $A_{k-1,\ell'}$ to $A_{k,\ell}$ are type $(A)$ paths. Then the cost along $A_{k,\ell}$ is a continuous piecewise quadratic function. Moreover, for every point $t$ on the piecewise quadratic function, the left derivative at $t$ is greater than or equal to the right derivative at $t$.
\end{lemma}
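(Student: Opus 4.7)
The plan is to apply Observation~\ref{obs:type_A_hor_or_vert}, which says that any optimal path through a type $(A)$ cell can be replaced by a single axis-aligned segment without changing its cost at the endpoint. Assume without loss of generality that $A_{k,\ell}$ lies on the top output boundary of its cell, so each $t \in A_{k,\ell}$ has the form $(x, q_{j+1})$; by the observation the replacement segment ending at $t$ must be vertical, and therefore enters the cell at $(x, q_j)$ on the bottom input boundary. Since the parent $A_{k-1,\ell'}$ contains every such entry point, I would parametrise both subsegments by $x$ and write
\[
    cost_{A_{k,\ell}}(x) \;=\; cost_{A_{k-1,\ell'}}(x) + I(x), \qquad I(x) \coloneqq \int_{q_j}^{q_{j+1}} |P(x)-Q(y)|\,dy.
\]
The case where $A_{k,\ell}$ lies on the right output boundary is entirely symmetric, with horizontal segments replacing vertical ones.

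The main step is to show that $I(x)$ is a single smooth (linear or quadratic) function throughout $A_{k,\ell}$. Inside a type $(A)$ cell the function $f(y) \coloneqq P(x) - Q(y)$ is affine in $y$ with $|f'(y)| = 1$ (arc-length parametrisation, opposite directions, valley of slope $-1$). If the vertical segment does not cross the valley, then $I(x) = \pm \int_{q_j}^{q_{j+1}} f(y)\,dy$ is linear in $x$; if it does cross at some $y_v(x)$, then a direct integration using $\int f\,dy = f^2/2$ gives $I(x) = \tfrac{1}{2}\bigl(f(q_j)^2 + f(q_{j+1})^2\bigr)$, which is quadratic in $x$. The transition between these two regimes happens precisely at $x$-values where the valley meets the top or the bottom boundary of the cell, but such $x$-values are critical points on $A_k$: type (ii) for the top boundary, and type (iii) for the bottom boundary, since at that point the vertical entry on $A_{k-1}$ switches subsegments. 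Therefore no transition can occur in the interior of $A_{k,\ell}$, and $I$ is a single smooth piece there.

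The remainder is routine. By the inductive hypothesis on the parent, $cost_{A_{k-1,\ell'}}(x)$ is a continuous piecewise quadratic whose left derivative dominates its right derivative at every point. Adding the single smooth piece $I(x)$ keeps the sum continuous and piecewise quadratic, and at every point $t$ we have $(cost_{A_{k,\ell}})'^{-}(t) - (cost_{A_{k,\ell}})'^{+}(t) = (cost_{A_{k-1,\ell'}})'^{-}(t) - (cost_{A_{k-1,\ell'}})'^{+}(t) \geq 0$ by continuity of $I'$. The main obstacle I foresee is the middle paragraph, where one must carefully use the three types of critical points to rule out that $I(x)$ has a break inside $A_{k,\ell}$; once that is in place, the calculation and the inductive step are mechanical.
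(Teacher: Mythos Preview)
Your approach is essentially the paper's: invoke Observation~\ref{obs:type_A_hor_or_vert} to reduce to an axis-parallel segment, write the output cost as the parent cost plus the segment integral $I(x)$, and observe that adding a smooth (at least $C^1$) piecewise-quadratic $I$ preserves both continuity and the left-derivative $\geq$ right-derivative property. The paper compresses your middle paragraph into one sentence (``the cost of a vertical/horizontal path inside a subsegment is a quadratic''), and the final step is identical.

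One remark on what you flag as the ``main obstacle'': for \emph{this} lemma you do not actually need $I$ to be a single piece. The integral $I(x)=\int |P(x)-Q(y)|\,dy$ is automatically $C^1$ in $x$ (differentiate under the integral; the sign-change set has measure zero), so even if the vertical segment starts or stops crossing the zero-height line somewhere inside $A_{k,\ell}$, $I'$ is continuous and your last displayed inequality goes through unchanged. Your single-piece argument also has a terminological slip: in a type-$(A)$ cell the zero-height line has slope $-1$ and is \emph{not} a valley in the paper's sense, so its intersection with the top boundary is not a type~(ii) critical point. This distinction is irrelevant here but would matter if you reused the argument for the later counting lemmas (Lemmas~\ref{claim:propagate} and~\ref{claim:distinct_pairs}).
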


\begin{proof}
The cost along $A_{k,\ell}$ is equal to the cost along $A_{k-1,\ell'}$ plus the cost of an optimal type $(A)$ path.
By Observation \ref{obs:type_A_hor_or_vert}, all optimal paths that we have to consider are either a single vertical or horizontal segment.
The cost of a vertical/horizontal path inside a subsegment is a quadratic with respect to its $x$-coordinate/$y$-coordinate. This quadratic cost may change at a critical point, but since there are no critical points inside a subsegment, the cost is simply a quadratic. Summing a function with a quadratic preserves the fact that the function is a piecewise quadratic function. Summing a function with a quadratic also preserves the property that for every point $t$ on the piecewise quadratic function, the left derivative at $t$ is greater than or equal to the right derivative at $t$.
\end{proof}

We now show Lemma~\ref{claim:propagate} for type $(A)$ paths. Again, by Observation \ref{obs:type_A_all_paths}, we can assume that all paths are of type $(A)$.

\begin{lemma}
\label{lem:type_A_propagate_in_constant_time}
Let $A_{k-1,\ell'}$ be the parent of $A_{k,\ell}$. Suppose all paths starting from $A_{k-1,\ell'}$ are type $(A)$ paths. If we propagate a single quadratic piece of $A_{k-1,\ell'}$ to the next level $A_k$, it is a piecewise quadratic function with at most a constant number of pieces. Moreover, this propagation step takes only constant time.
\end{lemma}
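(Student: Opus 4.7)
The plan is to combine Observations~\ref{obs:type_A_all_paths} and~\ref{obs:type_A_hor_or_vert} to reduce the propagation in a type $(A)$ cell to a closed-form integration of the height function along a single axis-parallel segment. By Observation~\ref{obs:type_A_all_paths}, every path in the cell is of type $(A)$, so all monotone paths between a given pair of endpoints have identical cost. By Observation~\ref{obs:type_A_hor_or_vert}, every globally optimal endpoint cost is realised by a single-segment path; in particular, the globally optimal cost at $(x, y_c)$ on the top equals $c_{\mathrm{bot}}(x) + V(x)$ (vertical single-segment from the bottom) and the globally optimal cost at $(x_c, y)$ on the right equals $c_{\mathrm{left}}(y) + H(y)$ (horizontal single-segment from the left). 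Consequently the propagation decouples: a single input quadratic piece $c(x)$ on the bottom over $x \in [x_1, x_2]$ contributes $c(x) + V(x)$ on $[x_1, x_2]$ of the top output, while a left piece is handled analogously on the right.

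Next I would show that $V(x) = \int_{y_0}^{y_c} h(x, y)\,dy$ is piecewise quadratic with at most three pieces via a case analysis on the zero-line. Since a type $(A)$ cell has a zero-line of slope $-1$, we may write $h(x,y) = |A + x + y|$ for a cell-dependent constant $A$. Setting $u_1(x) = A + x + y_0$ and $u_2(x) = A + x + y_c$, the integral reduces to $V(x) = \int_{u_1(x)}^{u_2(x)} |u|\,du$. When the vertical segment lies entirely on one side of the zero-line, i.e.\ $u_2(x) \leq 0$ or $u_1(x) \geq 0$, the integrand keeps a single sign and $V(x)$ simplifies via $u_2^2 - u_1^2 = (y_c - y_0)(u_1 + u_2)$ to a function linear in $x$. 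When the segment straddles the zero-line ($u_1(x) \leq 0 \leq u_2(x)$), splitting at $u = 0$ gives $V(x) = (u_1(x)^2 + u_2(x)^2)/2$, a quadratic in $x$. The three regimes are separated by the constants $-A - y_c$ and $-A - y_0$, so $V(x)$ is piecewise quadratic with at most three pieces on $[x_1, x_2]$.

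Adding $c(x)$ preserves the piecewise quadratic form, and each output piece is written down in constant time from the closed-form integrals above. The left-to-right case is symmetric, with $H(y) = \int_{x_0}^{x_c} h(x, y)\,dx$ admitting the identical three-case decomposition. The main obstacle is justifying the decoupling asserted in the first paragraph: that a bottom piece does not need to contribute to the right output via L-shaped paths. For this, I would invoke the monotonicity of $c_{\mathrm{bot}}$ inherited from its being globally optimal on the previous level, which yields the Lipschitz bound $c_{\mathrm{bot}}'(x) \leq h(x, y_0)$ and forces the minimisation over starting points within the piece to be realised by the direct single-segment start, in tandem with Observation~\ref{obs:type_A_hor_or_vert}.
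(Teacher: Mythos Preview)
Your overall plan matches the paper's: use Observations~\ref{obs:type_A_all_paths} and~\ref{obs:type_A_hor_or_vert} to reduce to single axis-parallel segments and then integrate the height in closed form. Your three-regime analysis of $V(x)$ is in fact more careful than the paper's terse ``at most two pieces''; the middle quadratic regime does occur when the slope~$-1$ zero-line meets both the top and bottom cell boundaries.

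The gap is the decoupling claim. You assert that the global cost on the top boundary equals $c_{\mathrm{bot}}(x)+V(x)$ and on the right boundary equals $c_{\mathrm{left}}(y)+H(y)$, and then try to justify this via the Lipschitz bound $c_{\mathrm{bot}}'(x)\le h(x,y_0)$. But Observation~\ref{obs:type_A_hor_or_vert} only says the optimal path can be replaced by \emph{some} single axis-parallel segment; it does not pick which one. For a right-boundary target $(x_c,y)$ there are two single-segment candidates: the horizontal one from $(x_0,y)$ \emph{and} the vertical one from the bottom-right corner $(x_c,y_0)$. Your Lipschitz bound correctly shows that among bottom starting points the rightmost is best---but that rightmost point is the corner, and its vertical contribution $c_{\mathrm{bot}}(x_c)+\int_{y_0}^{y}h(x_c,y')\,dy'$ is not in general dominated by $c_{\mathrm{left}}(y)+H(y)$. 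Symmetrically, the top boundary receives a horizontal single-segment contribution from the top-left corner that need not be dominated by $c_{\mathrm{bot}}(x)+V(x)$. The paper handles exactly these two corner cases explicitly: when the input piece touches the top-left or bottom-right corner, it propagates one additional constant-complexity piece along the adjacent output boundary. If you add those corner contributions (each piecewise quadratic with at most two pieces in the output variable), your argument goes through; without them, the propagation you describe can miss the true minimum on the output boundary.
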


\begin{proof}
First we consider propagating $A_{k-1,\ell'}$ to the opposite side. By Observation \ref{obs:type_A_hor_or_vert}, we can assume that this happens along a vertical or horizontal segment. These paths have piecewise quadratic cost with at most two pieces. Computing the cost from $A_{k-1,\ell'}$ to the opposite side takes constant time. In the case that $A_{k-1,\ell'}$ contains the top left corner of its cell, we need to propagate its cost along the top boundary. This cost has at most two pieces and takes constant time to compute. In the case that $A_{k-1,\ell'}$ contains the bottom right corner of its cell, we need to propagate its cost along the right boundary. This cost has at most two pieces and takes constant time to compute. Hence, the cost along the next level $A_k$ has a constant number of pieces and the propagation can be computed in constant time.
\end{proof}

Finally, we show Lemma~\ref{claim:distinct_pairs} for type $(A)$ paths.

\begin{lemma}
\label{lem:type_A_bounds}
Let $A_{k-1,\ell'}$ be the parent of $A_{k,\ell}$. Suppose all paths from $A_{k-1,\ell'}$ to $A_{k,\ell}$ are type $(A)$ paths, then $$D(A_{k,\ell}) \leq D(A_{k-1,\ell'}),$$ $$|A_{k,\ell}| \leq |A_{k-1,\ell'}|.$$
\end{lemma}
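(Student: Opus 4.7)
The plan is to use Observation~\ref{obs:type_A_hor_or_vert} to show that the type $(A)$ propagation from $A_{k-1,\ell'}$ to $A_{k,\ell}$ amounts to adding a single quadratic function to the boundary cost, and then to observe that a uniform additive shift preserves both the piece count and the number of distinct $(a,b)$ pairs.

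First I would parameterise the point $t$ along $A_{k,\ell}$ by its non-trivial coordinate $z$. By Observation~\ref{obs:type_A_hor_or_vert}, the optimal type $(A)$ path from $A_{k-1,\ell'}$ to $t$ may be taken to be a single horizontal or vertical segment starting at a point $s(z) \in A_{k-1,\ell'}$, where $z \mapsto s(z)$ is an affine shift of the parameter determined by the cell geometry. The cost added along this segment is
\[
\tilde q(z) \;=\; \int_{s(z)}^{t(z)} \lVert P(\cdot) - Q(\cdot)\rVert \, d(\cdot),
\]
and on the horizontal (resp.\ vertical) segment the integrand is a linear function of the integration variable, so before taking absolute values the integral is a quadratic in $z$.

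The functional form of $\tilde q(z)$ can change only at $z$-values where $P-Q$ vanishes at an endpoint of the integration segment, i.e.\ at $z$-values where the slope-$(-1)$ zero line of the current cell meets one of the two cell boundaries containing $A_{k-1,\ell'}$ and $A_{k,\ell}$. I would argue that every such $z$-value is already a critical point on $A_k$: it is either a cell corner (type~(i)), or the meeting point of a slope-$(+1)$ valley of a neighbouring cell with the shared boundary (type~(ii), using that the zero set of $P-Q$ is globally continuous and that a flip in the direction of $P$ or $Q$ in the neighbouring cell turns a slope-$(-1)$ line into a valley), or a point forcing a switch of parent subsegment (type~(iii)). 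Since by definition $A_{k,\ell}$ contains no critical points in its interior, $\tilde q$ is therefore a single quadratic $\tilde a z^2 + \tilde b z + \tilde c$ on the interior of $A_{k,\ell}$.

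Finally, the boundary cost function along $A_{k,\ell}$ equals the boundary cost function along $A_{k-1,\ell'}$ plus $\tilde q(z)$. Each piece $a_i z^2 + b_i z + c_i$ of the parent maps to the single piece $(a_i + \tilde a)z^2 + (b_i + \tilde b)z + (c_i + \tilde c)$ of the child, giving a bijection between pieces and hence $|A_{k,\ell}| \leq |A_{k-1,\ell'}|$. The translation $(a,b) \mapsto (a + \tilde a,\, b + \tilde b)$ on $\mathbb{R}^2$ is injective, so distinct pairs remain distinct and coincident pairs remain coincident, giving $D(A_{k,\ell}) \leq D(A_{k-1,\ell'})$. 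The main obstacle is the second step above: verifying that every functional transition of $\tilde q$ coincides with a critical point on $A_k$. I expect this to require a short case analysis based on the orientations of $P$ and $Q$ in the current cell and its relevant neighbours, tracing the zero set of $P-Q$ across cell boundaries and using that a direction flip in an adjacent cell converts a slope-$(-1)$ zero-line crossing into a valley-boundary meeting of type~(ii).
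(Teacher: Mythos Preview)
Your core idea matches the paper's: in a type~$(A)$ cell the propagation amounts to adding a single quadratic, and such an additive shift preserves both the piece count and the set of $(a,b)$ pairs up to a translation. However, you miss a case distinction that the paper makes explicitly.

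The paper's proof splits on the relative position of $A_{k-1,\ell'}$ and $A_{k,\ell}$. In the \emph{parallel} case (bottom-to-top or left-to-right) your argument applies: the single axis-parallel segment from Observation~\ref{obs:type_A_hor_or_vert} identifies points of $A_{k,\ell}$ with points of $A_{k-1,\ell'}$ via a common coordinate, and one adds a single quadratic. In the \emph{perpendicular} case (say $A_{k-1,\ell'}$ on the bottom boundary and $A_{k,\ell}$ on the right boundary) there is no axis-parallel segment joining a generic point of one to a generic point of the other; the single vertical segment ending at any $t\in A_{k,\ell}$ must start at the bottom-right corner. Here your ``affine shift $z\mapsto s(z)$'' degenerates to a constant map, and your ``bijection between pieces'' picture no longer describes what happens. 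The paper handles this case separately: the cost on $A_{k,\ell}$ equals the cost at the corner plus one quadratic, so $|A_{k,\ell}|=D(A_{k,\ell})=1$ and the inequalities hold trivially. You should add this short case rather than try to force both situations into the same template.

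On your ``main obstacle'': the paper disposes of it in one line inside the proof of Lemma~\ref{lem:type_A_piecewise_quadratic}, simply asserting that the quadratic cost of the straight segment can only change at a critical point and hence is a single quadratic on $A_{k,\ell}$. Your more detailed attempt to trace the slope-$(-1)$ zero set into neighbouring cells is reasonable in spirit, but note that your type~(ii) argument does not cover every transition: if the neighbouring cell is \emph{also} type~$(A)$, the zero set continues with slope~$-1$ there and no valley meets the shared boundary. To close this you would need to argue that such a transition then shows up as a type~(iii) critical point; the paper sidesteps this entirely by invoking the earlier lemma rather than redoing the analysis.
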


\begin{proof}
	Without loss of generality $A_{k-1,\ell'}$ is on the bottom boundary. If $A_{k,\ell}$ is on the top boundary, then its cost is the same as the one on $A_{k-1,\ell'}$ plus a quadratic, and we have $D(A_{k,\ell}) \leq D(A_{k-1,\ell'})$ and $|A_{k,\ell}| \leq |A_{k-1,\ell'}|$. If $A_{k,\ell}$ is on the right boundary, then all vertical paths pass through the bottom right corner and thus there is a single path with minimum cost. This implies $D(A_{k,\ell}) = |A_{k,\ell}| = 1$.
\end{proof}

\subsection{Separating type \textit{(B)} and \textit{(C)} paths}

Similarly to Section~\ref{sec:case_analysis_type_a_paths}, it will be useful to show Lemmas \ref{claim:propagate}, \ref{claim:distinct_pairs}, and~\ref{claim:details_piecewise_quadratic}, for output boundary cost functions of only type $(B)$ or type $(C)$ paths. It will be useful to be able to consider two types of paths separately. In order to do this, we must show that we can further divide our subsegments so that, for each subsegment on the output boundary, all optimal paths to the subsegments are type $(B)$ paths or that all optimal paths are type $(C)$ paths.

\begin{lemma}
\label{lem:add_critical_points_between_type_b_type_c}
Suppose we add a critical point in $A_{k,\ell}$ where the optimal path from $A_{k-1,\ell'}$ to $A_{k,\ell}$ changes between a type $(B)$ path and a type $(C)$ path. Then at most two critical points are added to $A_{k,\ell}$. 
\end{lemma}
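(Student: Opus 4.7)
The plan is to reduce the question to analyzing the structure of the cumulative-minimum function arising in the type-$(B)$ propagation. Without loss of generality, assume $A_{k-1,\ell'}$ lies on the bottom boundary of the cell and $A_{k,\ell}$ lies on the top boundary; the other three configurations (bottom-to-right, left-to-top, left-to-right) are symmetric after swapping horizontal and vertical integration. Let the valley in the cell be $y = x + c$. For a point $t = (x_t, y_{\text{top}}) \in A_{k,\ell}$, any type-$(B)$ or type-$(C)$ optimal path through $A_{k-1,\ell'}$ enters the valley at $v_{in}(t) = (x_{in}(t), x_{in}(t) + c)$ directly above some $s(t) \in A_{k-1,\ell'}$ and exits at $v_{out}(t) = (x_t, x_t + c)$ directly below $t$. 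The path is type $(C)$ exactly when $x_{in}(t) = x_t$ and type $(B)$ when $x_{in}(t) < x_t$, so the B--C transitions along $A_{k,\ell}$ correspond precisely to the points where $x_t \mapsto x_{in}(t)$ switches between ``tracking $x_t$'' and ``lying strictly below $x_t$.''

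The first key step will be to bound by one the number of transitions occurring inside the $x$-range $[x_a, x_b]$ of $A_{k-1,\ell'}$. By the definition of a subsegment, the input boundary cost function on $A_{k-1,\ell'}$ is a single quadratic piece; because the height is linear along the vertical segment from the bottom boundary to the valley, the type-$(B1)$ function $Q(x_v)$ along the valley is also a single quadratic in $x_v$. The type-$(B2)$ function is the cumulative minimum of $Q$ on $[x_a, x_v]$, and since $Q$ has at most one extremum, its cumulative minimum decomposes into at most one ``tracking'' piece (where it equals $Q$, corresponding to type $(C)$) and at most one ``plateau'' piece (where it is frozen at the running minimum, corresponding to type $(B)$). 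A small case analysis on the shape of $Q$ (monotone, convex with an interior min, or concave with an interior max) confirms that in every case at most one B--C transition occurs inside the range.

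The second key step will be to handle the ``availability boundary'' of type $(C)$. Because a type-$(C)$ path to $t$ requires $(x_t, y_b) \in A_{k-1,\ell'}$, type $(C)$ is only possible for $x_t \in [x_a, x_b]$. Monotonicity of optimal paths forces $x_a \leq x_c$, where $[x_c, x_d]$ is the $x$-range of $A_{k,\ell}$, so $A_{k,\ell}$ can leave the availability region only on the right. If $x_b < x_d$, then for $x_t \in (x_b, x_d]$ only type $(B)$ is possible, with $x_{in}$ frozen at the argmin of $Q$ on $[x_a, x_b]$; this contributes at most one additional forced transition at $x_t = x_b$. Combined with the previous step, this yields at most two critical points added to $A_{k,\ell}$. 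The main obstacle will be the case analysis on the shape of $Q$, in particular the concave-with-interior-max case, where the plateau precedes tracking (at value $Q(x_a)$) and tracking begins only when $Q$ descends back to $Q(x_a)$; checking that every such case, across all four boundary configurations, contributes at most two B--C transitions is the delicate part of the argument, but once the case work is in place the lemma follows immediately.
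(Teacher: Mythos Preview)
Your argument rests on a false premise. You write that ``by the definition of a subsegment, the input boundary cost function on $A_{k-1,\ell'}$ is a single quadratic piece,'' but that is not what a subsegment is in this paper. A subsegment $A_{k-1,\ell'}$ is the portion of $A_{k-1}$ between two consecutive \emph{critical points} (cell corners, valley/boundary intersections, or places where the optimal path switches which subsegment of $A_{k-2}$ it came through); it is \emph{not} cut at the breakpoints of the piecewise-quadratic cost function. Indeed, the whole point of Lemma~\ref{lem:number_of_critical_points_on_Akl_main_paper} is that $|A_{k-1,\ell'}|$ can be as large as $(k-1)^2$. Consequently your type-$(B1)$ function $Q$ along the valley is not a single quadratic but a piecewise quadratic with up to $(k-1)^2$ pieces, and your ``at most one tracking piece and one plateau piece'' conclusion collapses: the cumulative minimum of such a $Q$ can in principle alternate between tracking and plateau many times, which is exactly why the paper needs the separate complexity bounds in Lemmas~\ref{claim:propagate} and~\ref{claim:distinct_pairs}.

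The paper's proof avoids all of this analytic structure. It is a three-line topological argument: if there were three or more $B$/$C$ transitions along $A_{k,\ell}$, then somewhere you would see the pattern $B,C,B$; the two outer type-$(B)$ paths both reach the valley, and by the non-crossing property of optimal paths (Lemma~\ref{lem:crossing_lemma_main_paper}) the sandwiched type-$(C)$ path is trapped between them and must therefore also reach the valley, contradicting the definition of type $(C)$. No information about the cost function, the number of quadratic pieces, or the shape of $Q$ is needed. If you want to repair your approach, you would have to re-derive the non-crossing consequence inside your cumulative-minimum analysis, at which point you are essentially reproducing the paper's argument with extra baggage.
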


\begin{proof}
Suppose there were at least three such critical points. Then the path types must alternate, between $B$, $C$, $B$, $C$. In particular, there must exist, along the output boundary, paths of type $B,C,B$. The two outer type $(B)$ paths must visit the valley. However, the inner type $(C)$ path is sandwiched between the type $(B)$ paths, since by Lemma~\ref{lem:crossing_lemma_main_paper}, no optimal paths may cross. Therefore, the type $(C)$ path must also visit the valley, which contradicts the definition of the type $(C)$ path.
\end{proof}

\subsection{Type \textit{(B)} paths}
\label{sec:b}

As a consequence of the additional critical points in Lemma~\ref{lem:add_critical_points_between_type_b_type_c}, we obtain the following fact. 

\begin{observation}
\label{obs:type_B_all_paths}
Let $A_{k-1,\ell'}$ be the parent of $A_{k,\ell}$. Suppose there exists a type $(B)$ path from $A_{k-1,\ell'}$ to $A_{k,\ell}$. Then all paths from $A_{k-1,\ell'}$ to $A_{k,\ell}$ are type $(B)$ paths.
\end{observation}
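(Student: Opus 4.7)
The plan is to combine two ingredients that are already in place: the geometric characterisation of the three optimal path types given by Lemma~\ref{lem:three_types_of_optimal_paths}, and the extra critical points introduced in Lemma~\ref{lem:add_critical_points_between_type_b_type_c}. Recall that by definition the subsegment $A_{k,\ell}$ is a maximal interval of $A_k$ containing no critical point in its interior, so the strategy is to show that any violation of the observation would force a critical point strictly inside $A_{k,\ell}$, contradicting this maximality.

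First, I would observe that the hypothesis already restricts which cell we are in. A type $(B)$ path from $A_{k-1,\ell'}$ to $A_{k,\ell}$ traverses a slope-$1$ valley, and such a valley exists only in cells whose two defining curve segments point in the same direction. Hence both $A_{k-1,\ell'}$ and $A_{k,\ell}$ lie on the input and output boundaries of a single cell whose segments are in the same direction. By Lemma~\ref{lem:three_types_of_optimal_paths}, every optimal path crossing this cell is of type $(B)$ or type $(C)$; type $(A)$ is excluded because it only arises in cells with oppositely directed segments.

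Next, I would apply Lemma~\ref{lem:add_critical_points_between_type_b_type_c}. It guarantees that every point of $A_k$ at which the optimal path type switches between $(B)$ and $(C)$ has been declared a critical point, so the type of the optimal path is constant on each subsegment of $A_k$ between consecutive critical points. Since $A_{k,\ell}$ is by definition such a subsegment, the type of the optimal path terminating on $A_{k,\ell}$ is the same for every point of $A_{k,\ell}$. By hypothesis at least one of those optimal paths is of type $(B)$, so all optimal paths from $A_{k-1,\ell'}$ to $A_{k,\ell}$ are of type $(B)$.

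There is essentially no technical obstacle here: the statement is almost a bookkeeping corollary of how the critical point set has just been extended. The only subtlety to handle carefully is the boundary situation at the two endpoints of $A_{k,\ell}$ (where the path type is allowed to change), but the observation concerns paths landing in the open subsegment, and the two lemmas above cover this cleanly. Consequently the proof should be a short two-paragraph argument, mirroring the two points above, with no additional computation required.
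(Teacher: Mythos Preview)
Your proposal is correct and matches the paper's own reasoning: the paper states the observation simply ``as a consequence of the additional critical points in Lemma~\ref{lem:add_critical_points_between_type_b_type_c},'' and your argument spells out exactly this---first ruling out type $(A)$ by the direction of the segments, then using the newly inserted critical points to force a constant path type along the subsegment. No differences worth noting.
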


Recall that $V$ is the valley of our cell in the parameter space, if one exists.

\begin{definition}
Let the type $(B1)$ function be the cost function along $V$ if we only allow an axis parallel path from $A_{k-1,\ell'}$ to $V$. Let the type $(B2)$ function be the cost function along $V$ if we allow an axis parallel path from $A_{k-1,\ell'}$ to $V$ followed by a path of slope~$1$ along~$V$.
\end{definition}

\begin{figure}[ht]
    \centering
    \includegraphics{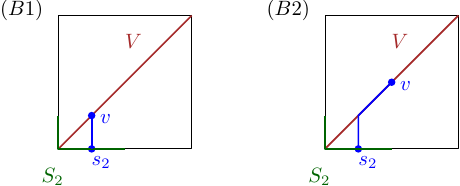}
    \caption{The type $(B1)$ and type $(B2)$ optimal path from $s_2 \in S_2$ to $v \in V$.}
    \label{fig:s2v}
\end{figure}

We prove an observation we made in main paper, that the type $(B2)$ cost function is indeed the cumulative minimum of the type $(B1)$ cost function.

\begin{observation}
The type $(B2)$ cost function is the cumulative minimum function of the type $(B1)$ cost function.
\end{observation}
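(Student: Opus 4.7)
The plan is to prove the observation by a direct two-sided comparison. First I would fix a parametrization of the valley $V$ by a scalar (say its $x$-coordinate, or equivalently its $\mathcal{L}_1$ arc length inside the cell) so that ``earlier on $V$'' corresponds to a smaller parameter value. Let $f_1$ and $f_2$ denote the type $(B1)$ and type $(B2)$ cost functions respectively, and let $g(v) := \min_{v' \in V,\, v' \leq v} f_1(v')$ denote the cumulative minimum of $f_1$ along $V$. The goal is to show $f_2 = g$ pointwise on $V$.

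For the upper bound $f_2(v) \leq g(v)$, I would construct an explicit witness path for each candidate $v' \leq v$: follow an optimal type $(B1)$ path reaching $v'$, then traverse $V$ from $v'$ to $v$. Because $h \equiv 0$ on $V$ by definition of a valley, the line-integral contribution of the valley segment vanishes, so the total cost of this type $(B2)$ path equals $f_1(v')$. Taking the infimum over $v' \leq v$ gives $f_2(v) \leq g(v)$.

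For the reverse bound $f_2(v) \geq g(v)$, I would argue structurally: any type $(B2)$ path to $v$ decomposes as an axis-parallel segment from the input boundary to some $v^\ast \in V$, followed by a slope-$1$ traversal of $V$ from $v^\ast$ to $v$. Because admissible paths are monotone non-decreasing in both coordinates and $V$ has slope $+1$, monotonicity forces $v^\ast \leq v$ in our parametrization. The total cost of the path equals the cost of its initial portion (the valley portion contributes zero), which is at least $f_1(v^\ast) \geq g(v)$. Taking the infimum over all type $(B2)$ paths to $v$ yields $f_2(v) \geq g(v)$.

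The only place where care is required is in verifying that the valley traversal genuinely contributes zero to the integral (which relies on $h \equiv 0$ along $V$, as established in Section~\ref{subsec:parameter_space_properties}) and that monotonicity of admissible paths precludes traversing $V$ in the decreasing direction, so that the ``cumulative'' in ``cumulative minimum'' is the correct one-sided operation rather than a minimum over all of $V$. Both points follow directly from the definitions, so no substantial technical obstacle is expected; the observation is essentially a definition-unwrapping once the right parametrization of $V$ is in place.
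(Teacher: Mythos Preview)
Your proposal is correct and follows essentially the same approach as the paper: both arguments rest on the decomposition of a type $(B2)$ path as a type $(B1)$ path to some earlier valley point $v'$ followed by a zero-cost slope-$1$ traversal along $V$. The paper states only the forward direction informally, while you spell out both inequalities and the monotonicity justification for the one-sidedness of the cumulative minimum, which makes your write-up more complete but not different in substance.
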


\begin{proof}
We observe that a type $(B2)$ paths is a type $(B1)$ path appended with a path along the valley. Since the height function is zero along the valley, if we can reach a valley point for a particular cost with a type $(B1)$ path, then we can reach all points on the valley above and to the right of it with a type $(B2)$ path. So the type $(B2)$ cost function is the cumulative minimum of the type $(B1)$ cost function, see Figure~\ref{fig:type_b2_cost_function_main_paper}. \end{proof}

\begin{figure}[ht]
    \centering
    \includegraphics[width=0.4\textwidth]{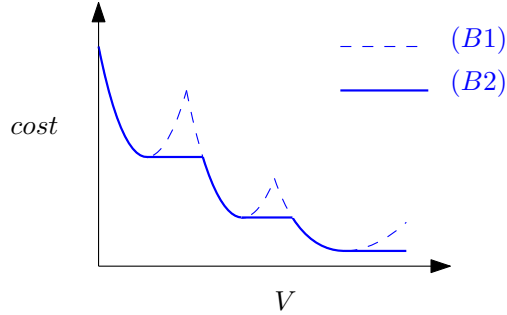}
    \caption{The type $(B2)$ cost function.}
    \label{fig:type_b2_cost_function}
\end{figure}

Finally, we can show the main lemmas of this section. Lemmas~\ref{lem:type_B_piecewise_quadratic}, \ref{lem:type_B_propagate_in_constant_time} and~\ref{lem:type_B_bounds} are the special cases of Lemmas~\ref{claim:details_piecewise_quadratic}, \ref{claim:propagate} and~\ref{claim:distinct_pairs} in the type $(B)$ paths case.

\begin{lemma}
\label{lem:type_B_piecewise_quadratic}
Let $A_{k-1,\ell'}$ be the parent of $A_{k,\ell}$. Suppose all paths from $A_{k-1,\ell'}$ to $A_{k,\ell}$ are type $(B)$ paths. Then the cost along $A_{k,\ell}$ is a continuous piecewise quadratic function. Moreover, for every point $t$ on the piecewise quadratic function, the left derivative at $t$ is greater than or equal to the right derivative at $t$.

\end{lemma}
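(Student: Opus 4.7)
The plan is to follow the three-step construction of the type $(B)$ cost function sketched in Section~\ref{subsec:algorithm_propagation_step}, and verify at each step that the resulting function is continuous, piecewise quadratic, and satisfies the one-sided derivative inequality. The inductive hypothesis is that the parent boundary cost function on $A_{k-1,\ell'}$ already has these properties, as guaranteed by Lemma~\ref{claim:details_piecewise_quadratic} at level $k-1$.

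The first and third steps are routine. Inside a single cell both $P$ and $Q$ are linear, so the height function along the axis-parallel segment from $A_{k-1,\ell'}$ to the valley is linear in the position along the parent, and its integral is a single quadratic in the natural valley coordinate. Adding this quadratic to the inductively known piecewise quadratic cost on $A_{k-1,\ell'}$ yields a continuous piecewise quadratic function along the valley; the derivative inequality is preserved because a smooth added summand contributes equal left and right derivatives everywhere. The change of coordinate from parent position to valley position is a linear map with positive slope, so the inequality transfers directly. The third step, adding the quadratic for the axis-parallel exit from the valley to $A_{k,\ell}$ and reparametrizing onto the output coordinate, is completely analogous.

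The main obstacle is the middle step, in which the type $(B2)$ function is produced as the running minimum of the type $(B1)$ function. I would show that the cumulative minimum operation sends a continuous piecewise quadratic function with the left $\geq$ right derivative property to another such function, possibly with additional horizontal pieces. The break points of the new function come in three flavors: inherited break points of the input, new break points where a decreasing quadratic piece reaches its vertex and the running minimum starts a horizontal plateau, and new break points where a horizontal plateau is interrupted by a descending quadratic piece. At the two new kinds of break points the derivative inequality is immediate, with the left derivative equal to $0$ and the right derivative either $0$ (at a vertex) or strictly negative (where a plateau is broken). At an inherited break point $x_1$ between input pieces $f_1$ and $f_2$, the inductive hypothesis gives $f_1'(x_1^-) \geq f_2'(x_1^+)$. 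The only case that a priori threatens the desired inequality is that the running minimum follows $f_1$ up to $x_1$ and becomes horizontal after $x_1$; but this requires $f_1'(x_1^-) \leq 0$ together with $f_2'(x_1^+) > 0$, which directly contradicts the inductive hypothesis. Every remaining case satisfies left $\geq$ right, and continuity is automatic because the running minimum of a continuous function is continuous.

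Assembling the three steps then yields that the cost along $A_{k,\ell}$ is continuous, piecewise quadratic, and satisfies the derivative inequality at every point, establishing the lemma. The essential use of the inductive hypothesis is exactly to rule out the bad transition at an inherited break point in Step~2; without it, cumulative minimum could in principle produce a concave-down corner that breaks the invariant and prevents the induction from closing.
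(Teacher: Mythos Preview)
Your proposal is correct and follows essentially the same three-step decomposition as the paper's own proof: add a quadratic to reach the valley, take the cumulative minimum along the valley, add a quadratic to reach the output boundary, and argue that each operation preserves both piecewise-quadraticity and the left-$\geq$-right derivative invariant. The paper's proof is considerably terser---it simply asserts that all three operations preserve both properties---whereas you supply an explicit case analysis for the cumulative-minimum step and correctly identify that the inductive hypothesis on $A_{k-1,\ell'}$ is precisely what rules out the one dangerous transition (a strictly decreasing piece meeting a break point after which the running minimum plateaus).
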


\begin{proof}
Three operations are performed to get from the cost function along $A_{k-1,\ell'}$ to the cost function along $A_{k,\ell}$. From $A_{k-1,\ell'}$ to the type $(B1)$ cost function we add a quadratic function. From type $(B1)$ to type $(B2)$, we take the cumulative minimum, as shown in Figure~\ref{fig:s2v}. From the type $(B2)$ cost function to $A_{k,\ell}$, we add a quadratic function. All three operations preserve the fact that the cost function is a piecewise quadratic function. All three operations preserve the property that, for every point $t$ on the piecewise quadratic function, the left derivative at $t$ is greater than or equal to the right derivative at $t$.
\end{proof}

\begin{lemma}
\label{lem:type_B_propagate_in_constant_time}
Let $A_{k-1,\ell'}$ be the parent of $A_{k,\ell}$. Suppose all paths from $A_{k-1,\ell'}$ are type $(B)$ paths. If we propagate a single quadratic piece of $A_{k-1,\ell'}$ to the next level $A_k$, it is a piecewise quadratic function with at most a constant number of pieces. Moreover, this propagation step takes only constant time.
\end{lemma}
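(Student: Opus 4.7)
The plan is to trace a single quadratic piece of $A_{k-1,\ell'}$ through the three steps of type $(B)$ propagation described in Section~\ref{subsec:algorithm_propagation_step}, and bound the number of resulting pieces at each step. Without loss of generality I would assume the input piece lives on the bottom boundary of its cell and is parametrized by $v \in [x_1, x_2]$; the case where it lives on the left boundary is entirely symmetric.

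In step~1, the contribution of this piece to the type $(B1)$ cost function along the valley is obtained by adding, to the input quadratic, the integral of the height along the vertical segment from $(v,0)$ to the valley point $(v, v+c)$. Since the height is linear along this segment (we are above the valley inside the cell, so the piecewise linear height function has a single linear piece here) and the segment length is linear in $v$, the integral is a quadratic in $v$. Hence type $(B1)$ restricted to the valley interval $v \in [x_1, x_2]$ is a single quadratic. In step~2, I would apply the cumulative minimum operation, handled by a short case analysis: if the quadratic is monotone on the interval, the cumulative minimum is either the quadratic itself or a constant (one piece); otherwise the quadratic has an interior minimum, and the cumulative minimum is a decreasing quadratic piece followed by a horizontal piece at the minimum value (two pieces). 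Thus type $(B2)$ consists of at most two pieces.

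In step~3, I would add to type $(B2)$ the height integral along the axis-parallel segment from the valley to each output point, treating the top and right output boundaries separately. On the top boundary, an output point $(x_t, y_{top})$ with $x_t \le y_{top}-c$ is reached vertically from the valley point $(x_t, x_t+c)$, which adds a quadratic in $x_t$ and preserves the at-most-two pieces from step~2; for $x_t > y_{top}-c$, the only option is a horizontal segment from the single point $(y_{top}-c, y_{top})$ where the valley meets the top boundary, at which type $(B2)$ is a constant, yielding one further quadratic piece. The right boundary is handled analogously. Summing these cases over both output boundary parts gives a constant number of output pieces (at most six).

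Each intermediate operation — adding two quadratics, computing the cumulative minimum of a quadratic on an interval, and evaluating type $(B2)$ at a fixed valley point — is elementary algebra and runs in constant time, so the total work to propagate a single input piece is constant. The main conceptual obstacle is step~2: it is the only step that can convert a single quadratic into more than one piece, and the fact that the blow-up factor is exactly two (rather than larger) is the key observation that underpins the piece-count analysis of Section~\ref{subsec:algorithm_bounding_cost_function_complexity}.
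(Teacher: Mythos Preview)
Your proposal is correct and follows essentially the same three-step structure as the paper's proof: add a quadratic to reach the valley (one piece), take the cumulative minimum (at most two pieces), and add a quadratic to reach the output boundary. The paper's proof is a terse three sentences stating exactly this; your version is more explicit, in particular your case analysis in step~3 around the valley--boundary intersection is more careful than the paper bothers to be, but the underlying argument is the same.
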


\begin{proof}
Computing the cost function along $V$ consists of adding a quadratic function to the cost function along $A_{k-1,\ell}$, and taking the cumulative minimum. This cost function has at most two pieces and can be computed in constant time. Adding a quadratic function to the cost function along $V$ to yield the cost function along $A_k$ takes constant time.
\end{proof}

\begin{lemma}
\label{lem:type_B_bounds}
Let $A_{k-1,\ell'}$ be the parent of $A_{k,\ell}$. Suppose all paths from $A_{k-1,\ell'}$ to $A_{k,\ell}$ are type $(B)$ paths. Then $$D(A_{k,\ell}) \leq D(A_{k-1,\ell'}) + 1,$$ $$|A_{k,\ell}| \leq |A_{k-1,\ell'}| + D(A_{k-1,\ell'}) + 1.$$
\end{lemma}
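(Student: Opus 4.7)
The plan is to follow the three sub-steps of the type $(B)$ propagation as described in Section~\ref{subsec:algorithm_propagation_step}: (1) obtain the type $(B1)$ cost function along the valley $V$ from the cost function on $A_{k-1,\ell'}$, (2) take the cumulative minimum along $V$ to obtain the type $(B2)$ cost function, and (3) obtain the cost function on $A_{k,\ell}$ from the type $(B2)$ function. For each sub-step I track how the piece count $|A|$ and the number of distinct $(a,b)$ pairs $D$ transform.

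Sub-steps (1) and (3) are each a combination of adding a quadratic (the integral of $h$ along an axis-parallel segment, expressed as a function of the starting coordinate) and applying a linear reparametrization of the $x$-axis (from the input-boundary coordinate to the valley coordinate, or from the valley to the output boundary). Both operations act as an affine, invertible transformation on the $(a,b)$ coefficients of each piece, so they are piece-preserving and $D$-preserving. Hence $|B1|=|A_{k-1,\ell'}|$, $D(B1)=D(A_{k-1,\ell'})$, $|A_{k,\ell}|=|B2|$, and $D(A_{k,\ell})=D(B2)$, so all nontrivial growth in $|A|$ and $D$ happens in sub-step (2).

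For sub-step (2), I would first observe that the cumulative minimum of a continuous piecewise quadratic function $f$ is a non-increasing function whose pieces alternate between strictly decreasing quadratic pieces and horizontal constant pieces. Each decreasing piece is a restriction of some quadratic piece of $f$, so their number is at most $|f|$ and their $(a,b)$ pairs form a subset of those in $f$. The core observation is that every horizontal piece of the cumulative minimum must begin either at the very left endpoint of the interval or at the apex $x^{*}=-b/(2a)$ of some quadratic piece $ax^{2}+bx+c$ of $f$. Since the quadratic pieces of $f$ have disjoint interiors, two pieces sharing the same $(a,b)$ pair also share the same $x^{*}$, so at most one of them can host a horizontal piece. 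Thus there are at most $D(f)+1$ horizontal pieces in the cumulative minimum. Because every horizontal piece has $(a,b)=(0,0)$, it contributes at most one new distinct pair to $D$. Combining these bounds yields $|B2|\le |B1|+D(B1)+1$ and $D(B2)\le D(B1)+1$, which by the identities established above gives the two inequalities in the lemma.

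The main obstacle is verifying the core observation in sub-step (2), specifically ruling out a horizontal piece that begins at a breakpoint between two quadratic pieces of $f$. A horizontal piece must begin at a point where $f$ transitions from decreasing to non-decreasing, and at a breakpoint this would require the left derivative of $f$ to be negative while the right derivative is non-negative, i.e., left derivative strictly less than right derivative. But Lemma~\ref{claim:details_piecewise_quadratic}, which I may inductively assume for the input boundary and which sub-step (1) clearly preserves, guarantees that at every point the left derivative is greater than or equal to the right derivative. Hence every interior horizontal piece must start smoothly inside a single quadratic piece at its apex $-b/(2a)$, which is exactly what the $D$-based counting argument requires.
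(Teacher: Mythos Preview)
Your proof is correct and follows essentially the same three-step decomposition as the paper's own argument: add a quadratic to pass to the type $(B1)$ function, take the cumulative minimum to obtain the type $(B2)$ function, and add a quadratic to reach $A_{k,\ell}$, observing that only the middle step can change $|A|$ or $D$ and that each new horizontal piece is pinned either to the left endpoint or to an apex $-b/(2a)$ via the left-derivative $\geq$ right-derivative invariant of Lemma~\ref{claim:details_piecewise_quadratic}. Your version is in fact slightly more explicit than the paper's (you spell out the linear reparametrization in steps (1) and (3) and argue it is $D$-preserving), but the argument is otherwise identical.
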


\begin{proof}
Three operations are performed to get from the cost function along $A_{k-1,\ell'}$ to the cost function along $A_{k,\ell}$. The first and third are to add a quadratic function, whereas the second is to take the cumulative minimum.

The first and third operations do not change the number of distinct pairs $(a,b)$. The second operation only adds horizontal functions, so the number of distinct pairs $(a,b)$ increases by at most once. Putting this together yields $D(A_{k,\ell}) \leq D(A_{k-1,\ell'}) + 1$. 

The first and third operations do not change the number of pieces in the piecewise quadratic function. The second operation adds a horizontal function at each of the local minima. The local minima of a piecewise quadratic function can occur at the endpoints of the function, at points where both the left and right derivatives are zero, or at points where the left derivative is strictly less than the right derivative. There are at most two endpoints, and we can only add a horizontal function to one of them. For each distinct pair $(a,b)$, there is at most one local minima where both the left and right derivatives are zero, since this local minima only occurs at the $x$-coordinate given by $x = -\frac b {2a}$. By Lemma~\ref{lem:type_B_piecewise_quadratic}, we do not have any points where the left derivative is strictly less than the right derivative. Putting this together, we add at most $D(A_{k-1,\ell'}) + 1$ horizontal functions at the local minima of the piecewise quadratic function. This gives $|A_{k,\ell}| \leq |A_{k-1,\ell'}| + D(A_{k-1,\ell'}) + 1$, as required.
\end{proof}

\subsection{Type \textit{(C)} paths}

Recall that we require the following three lemmas for type $(C)$ paths.

\begin{lemma}
\label{claim:type_C_piecewise_quadratic}
Let $A_{k-1,\ell'}$ be the parent of $A_{k,\ell}$. Suppose all paths from $A_{k-1,\ell'}$ to $A_{k,\ell}$ are type $(C)$ paths. Then the cost along $A_{k,\ell}$ is a continuous piecewise quadratic function. Moreover, for every point $t$ on the piecewise quadratic function, the left derivative at $t$ is greater than or equal to the right derivative at $t$.
\end{lemma}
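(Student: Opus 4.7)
The plan is to mirror the structure of the proof of Lemma~\ref{lem:type_B_piecewise_quadratic}, but to dispense with the cumulative-minimum step since a type $(C)$ path only touches the valley at an isolated point rather than travelling along it. The key new observation is that, for type $(C)$, the starting point $s$ on the input boundary is a linear function of the endpoint $t$ on the output boundary, so the entire propagation reduces to a single linear reparametrisation followed by the addition of a quadratic.

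By the definition of a type $(C)$ path, an optimal path from $s \in A_{k-1,\ell'}$ to $t \in A_{k,\ell}$ consists of two perpendicular axis-parallel segments meeting at a point $v$ on the valley. Since $A_{k,\ell}$ lies entirely on a single cell boundary (a subsegment contains no cell corners), there is a unique sub-case: if $A_{k,\ell}$ is on the top boundary then the final segment is vertical, $v$ lies directly below $t$ on the valley, and $s$ lies on the left boundary of the cell; if $A_{k,\ell}$ is on the right boundary then the final segment is horizontal, $v$ lies directly to the left of $t$ on the valley, and $s$ lies on the bottom boundary. Writing the valley inside the cell as $y=x+c$, a short direct computation shows that in either sub-case $s(t)$ and $v(t)$ are linear functions of $t$ with slope $\pm 1$.

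I would then decompose
\[
	cost(t) \;=\; cost(s(t)) \;+\; I_1(s(t), v(t)) \;+\; I_2(v(t), t),
\]
where $I_1$ and $I_2$ are the integrals of the height function along the two axis-parallel segments. Since neither segment crosses the valley, the height along each segment is linear in its moving coordinate, so $I_1$ and $I_2$ are quadratic in their endpoint coordinates and hence (by linearity of $s(t),v(t)$) quadratic in $t$. By the inductive hypothesis applied to $A_{k-1,\ell'}$, $cost(s)$ is continuous and piecewise quadratic in $s$; composing with the linear map $s(t)$ preserves continuity and the piecewise quadratic structure in $t$, and adding two smooth quadratics preserves both as well.

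Finally, for the left-derivative-$\geq$-right-derivative property, any kink of $cost(t)$ comes from a kink of $cost(s)$ at some $s_0=s(t_0)$. In the slope-$+1$ sub-case the inequality transfers verbatim. In the slope-$-1$ sub-case both one-sided derivatives are negated and the labels ``left'' and ``right'' swap simultaneously, and a direct check shows the inequality survives both transformations. The smooth quadratic additions contribute equally to the left and right derivatives at every point and thus cannot disturb the inequality. I expect the slope-$-1$ bookkeeping to be the only delicate step, since the cumulative-minimum argument that was the main technical ingredient for type $(B)$ is absent here; everything else is routine.
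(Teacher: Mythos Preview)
Your proposal rests on a misreading of what a type~$(C)$ path is. You assert that the two axis-parallel segments meet at a point $v$ \emph{on the valley}, and from this you derive that $s$ is a fixed linear function of $t$ determined purely by the cell geometry. But a type~$(C)$ path is precisely one that goes towards the valley and then away from it \emph{without reaching it}; the turning point is the corner of the L-shape, which in general does not lie on the valley (if it did, the path would be a degenerate type~$(B)$ path). For instance, in the bottom-to-right case the turning point is $(x_s,y_t)$, which is not on the valley, and $x_s$ is a free parameter. Consequently, for a given $t$ there is a whole interval of admissible starting points $s$, and $cost(t)=\min_s\bigl(cost(s)+\text{path integral from }s\text{ to }t\bigr)$. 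You have omitted this minimisation entirely.

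The paper handles this by splitting type~$(C)$ into three subcases according to which pair of boundaries $(s,t)$ lies on. In the bottom-to-right case $(C2)$ the optimal $s$ is obtained from the first-order condition $\partial_s\,pathcost(s,t)=0$; this does yield a linear relation $s=s(t)$, but its slope depends on the coefficients of the current quadratic piece of $cost(s)$, not on the valley. In the bottom-to-top case $(C3)$ the paper needs a cumulative-minimum step analogous to the type~$(B)$ argument you explicitly discarded, because the turning point lies on the top boundary and one minimises over it. Your two-case split (top$\to$left, right$\to$bottom) also misses the bottom$\to$top configuration altogether. So the cumulative-minimum machinery is not optional here, and the ``single linear reparametrisation plus quadratic'' picture does not cover type~$(C)$.
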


\begin{lemma}
\label{claim:type_C_propagate_in_constant_time}
Let $A_{k-1,\ell'}$ be the parent of $A_{k,\ell}$. Suppose all paths from $A_{k-1,\ell'}$ are type $(C)$ paths. If we propagate a single quadratic piece of $A_{k-1,\ell'}$ to the next level $A_k$, it is a piecewise quadratic function with at most a constant number of pieces. Moreover, this propagation step takes only constant time.
\end{lemma}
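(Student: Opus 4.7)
The plan is to follow the structure of the type $(B)$ argument (Lemma~\ref{lem:type_B_propagate_in_constant_time}), while dropping its cumulative-minimum step, which is precisely what distinguishes type $(B)$ from type $(C)$ paths. Recall from Lemma~\ref{lem:three_types_of_optimal_paths} that a type $(C)$ path consists of two axis-parallel segments meeting at a single point $v$ on the valley $V$: one segment connects the input point $s \in A_{k-1,\ell'}$ to $v$, and the other connects $v$ to the output point $t \in A_{k,\ell}$. Because no portion of the path is spent travelling along $V$, the output point $t$ is related to its input point $s$ by a purely geometric correspondence, with no internal parameter over which to optimise.

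First, I would fix a configuration of input and output boundaries; up to symmetry there are only a constant number of cases. In each case, because $V$ has slope $1$ and the L-shape meets $V$ in exactly one point, the map $t \mapsto s(t)$ from the position of $t$ along $A_{k,\ell}$ to the position of $s$ along $A_{k-1,\ell'}$ is affine, typically of the form ``$s = t + \text{const}$'' in the appropriate coordinate. By the inductive hypothesis, the input cost along $A_{k-1,\ell'}$ restricted to a single quadratic piece is a quadratic in the position of $s$; composition with this affine map yields a single quadratic in the position of $t$.

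Second, I would compute the cost of the L-shape itself. Inside a cell and away from $V$, the height function $h$ is linear in each coordinate, so the integral of $h$ along any axis-parallel segment is a quadratic function of its endpoints. Using the affine correspondence $t \mapsto s(t)$, the total integral along the two legs of the L becomes a quadratic in the position of $t$. Adding this to the quadratic obtained in the previous paragraph yields a single quadratic output piece on $A_{k,\ell}$ arising from the given input piece. The derivative condition required by Lemma~\ref{claim:details_piecewise_quadratic} is preserved because affine reparametrisation and the addition of a quadratic both preserve the property that the left derivative is at least the right derivative at every point.

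Finally, there may be up to a constant number of candidate L-shapes per output point (for instance, ``vertical-then-horizontal'' versus ``horizontal-then-vertical''), but each produces a single quadratic, and their lower envelope can be computed in $\Oh(1)$ time, producing at most a constant number of output pieces. The only real obstacle is the case analysis over the relative positions of $A_{k-1,\ell'}$, $V$, and $A_{k,\ell}$, together with whether $s$ sits above or below $V$; once a configuration is fixed, the affine correspondence and the quadratic integral are immediate, so the entire propagation step takes constant time.
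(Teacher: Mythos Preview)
Your characterisation of type $(C)$ paths is incorrect: such a path does \emph{not} meet the valley $V$. The proof of Lemma~\ref{lem:add_critical_points_between_type_b_type_c} derives a contradiction precisely from a type $(C)$ path visiting $V$. The L-shape from $s$ to $t$ turns at the point $s'$ with coordinates $(x_s, y_t)$ (or $(x_t, y_s)$), which is determined jointly by $s$ and $t$, not by $V$. Consequently there is in general no ``purely geometric'' bijection $t \mapsto s(t)$: for a fixed output point $t$, many input points $s$ yield a valid type $(C)$ L-shape, and one must optimise $pathcost(s,t)$ over $s$. This optimisation is exactly the internal parameter your plan claims does not exist.

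The paper handles it by splitting into subcases $(C1)$, $(C2)$, $(C3)$ according to which boundaries carry $s$ and $t$. For $(C1)$ (left to right) the L collapses to a single horizontal segment from the left boundary, and there your affine picture happens to be right. For $(C2)$ (bottom to right) the paper writes the bivariate quadratic $pathcost(s,t)$ and sets $\tfrac{\partial}{\partial s}\,pathcost(s,t)=0$; the resulting linear relation $s=s(t)$ comes from first-order optimality, not from any valley correspondence, and substituting it back gives a single output quadratic. For $(C3)$ (bottom to top) there \emph{is} a cumulative-minimum step, shifted by the quadratic $Q(t)=\int h$ along the top boundary (Observation~\ref{obs:c3.2_to_c3.1}), so your premise that ``dropping the cumulative minimum is what distinguishes $(B)$ from $(C)$'' is false. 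Without these optimisation steps your argument does not establish that a single input piece propagates to only $\Oh(1)$ output pieces.
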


\begin{lemma}
\label{claim:type_C_bounds}
Let $A_{k-1,\ell'}$ be the parent of $A_{k,\ell}$. Suppose all paths from $A_{k-1,\ell'}$ to $A_{k,\ell}$ are type $(C)$ paths. Then $$D(A_{k,\ell}) \leq D(A_{k-1,\ell'}),$$ $$|A_{k,\ell}| \leq |A_{k-1,\ell'}|.$$ 
\end{lemma}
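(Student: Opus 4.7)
The plan is to exploit the fact that, unlike type $(B)$, the type $(C)$ propagation involves no cumulative-minimum operation; it consists purely of an affine change of variables (dictated by the valley geometry) followed by the addition of a quadratic (the integrated height along the two axis-parallel segments). Both operations preserve the piece count and the number of distinct $(a,b)$ pairs, giving the stated inequalities directly.

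First I would observe that within the subsegment $A_{k,\ell}$ the ``direction pairing'' of the optimal type $(C)$ path (e.g.\ up-then-right vs.\ right-then-up) is constant: a switch in pairing would, as in Lemma~\ref{lem:add_critical_points_between_type_b_type_c}, introduce a critical point and split $A_{k,\ell}$. So without loss of generality I may take $A_{k-1,\ell'}$ on the bottom boundary, the pairing up-then-right, and $A_{k,\ell}$ on the right boundary, with the cell's two segments pointing in the same direction. The valley then has the form $y = x + C$ for a constant $C$ determined by the cell, and a type $(C)$ path from $s = (x_s, 0)$ to a point $t$ on the right boundary at height $y_t$ must pass through the valley point $(x_s, x_s + C) = (y_t - C, y_t)$, forcing $x_s = y_t - C$. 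This is an affine bijection $f$ from $A_{k,\ell}$ onto a subinterval of $A_{k-1,\ell'}$.

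Next I would integrate the linear height $h(x,y) = |C + x - y|$ along each axis-parallel segment; the sign of $C + x - y$ is constant along each segment (we stay on one side of the valley until we touch it), so each integral is a quadratic in $y_t$ and their sum is some quadratic $Q(y_t)$. The propagated cost is
\[
\mathit{cost}|_{A_{k,\ell}}(y_t) = \mathit{cost}|_{A_{k-1,\ell'}}(f(y_t)) + Q(y_t).
\]
Under the substitution $x_s = y_t - C$, a piece $a\,x_s^2 + b\,x_s + c$ becomes $a\, y_t^2 + (b - 2aC)\, y_t + (\text{const})$, so the coefficient map $(a,b) \mapsto (a, b - 2aC)$ is a linear bijection of $\mathbb R^2$ and preserves the count of distinct pairs; adding $Q$ shifts every coefficient vector by a constant, again a bijection. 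Since $f$ itself is a bijection of intervals, it maps pieces to pieces one-to-one. Consequently both $D$ and the piece count are preserved, and if the image of $f$ is only part of $A_{k-1,\ell'}$ they can only decrease, yielding $D(A_{k,\ell}) \leq D(A_{k-1,\ell'})$ and $|A_{k,\ell}| \leq |A_{k-1,\ell'}|$.

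The remaining cases (other input/output boundary combinations or the alternative direction pairing) are completely analogous: in each one, the valley equation produces an affine relation between input and output parameters, and the integrated axis-parallel segments give a quadratic added cost. The main obstacle is therefore just writing out the case analysis carefully; once the propagation is recast as ``affine reparametrization plus an added quadratic,'' the lemma follows directly, in sharp contrast with the type $(B)$ case where the cumulative-minimum step genuinely creates new horizontal pieces and forces the additive $D(A_{k-1,\ell'}) + 1$ growth in Lemma~\ref{lem:type_B_bounds}.
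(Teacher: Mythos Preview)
Your argument hinges on the claim that, for a type $(C)$ path from $s=(x_s,0)$ to $t=(x_R,y_t)$, the turning point lies on the valley, forcing the single affine relation $x_s=y_t-C$ between input and output parameters. That is not what a type $(C)$ path is. The turning point of the vertical-then-horizontal path is $(x_s,y_t)$, and for type $(C)$ this point lies strictly to one side of the valley (if it were on the valley the path would be a degenerate type $(B)$). Consequently, for a fixed $t$ the starting point $s$ is \emph{not} pinned by the valley geometry; it is the minimiser of
\[
\mathit{pathcost}(s,t)=\mathit{cost}(s)+\int_{s}^{s'}h+\int_{s'}^{t}h,
\]
and that minimiser depends on the input cost function, not just on $C$. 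For an input piece $a x_s^2+bx_s+c$, the first-order condition gives an affine relation between $x_s$ and $y_t$ whose slope involves $a$, so different pieces of $A_{k-1,\ell'}$ are sent through \emph{different} affine reparametrisations, not one global one. Your ``one affine map plus one added quadratic'' picture therefore only covers the easy subcase ($s$ and $t$ on opposite parallel boundaries with $s\mapsto t$ forced to be a shift), which the paper calls $(C1)$.

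The paper accordingly splits type $(C)$ into $(C1)$, $(C2)$, $(C3)$ by which boundaries carry $s$ and $t$. For $(C1)$ the argument is exactly the trivial one you gave. For $(C2)$ (bottom to right) the paper uses the partial-derivative condition $\partial_s\,\mathit{pathcost}(s,t)=0$ and then checks, piece by piece, that two input quadratics with the same $(a,b)$ produce the same linear $s$--$t$ relation and hence the same output $(a,b)$; this is where the bound $D(A_{k,\ell})\le D(A_{k-1,\ell'})$ actually comes from. For $(C3)$ (bottom to top) there \emph{is} a genuine cumulative-minimum step, namely $\mathit{cost}_{3.2}(t)=\min_{s'\le t}\bigl(\mathit{cost}_{3.1}(s')-Q(s')\bigr)+Q(t)$, directly contradicting your opening sentence. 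The missing ingredient in your proposal is precisely this optimisation over $s$ in the $(C2)$ and $(C3)$ subcases.
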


We prove these lemmas again by dividing into subcases. Our subcases are defined as follows.

\begin{definition}
Let type $(C1)$, $(C2)$ and $(C3)$ paths be type $(C)$ paths where the starting point and the ending point are on the left and right, bottom and right, or bottom and top boundaries respectively.
\end{definition}

\begin{figure}[ht]
    \centering
    \includegraphics{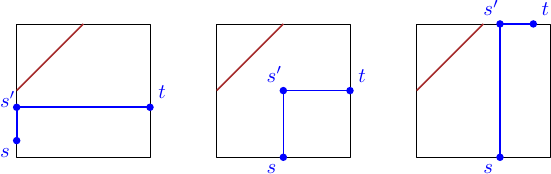}
    \caption{The type $(C1)$, $(C2)$ and $(C3)$ paths.}
    \label{fig:c1c2c3}
\end{figure}

To show that we can indeed divide our analysis into these three subcases, we make the following observation.

\begin{observation}
Let $A_{k-1,\ell'}$ be the parent of $A_{k,\ell}$. Suppose there exists a type $(C1)$, or $(C2)$, or $(C3)$ path from $A_{k-1,\ell'}$ to $A_{k,\ell}$. Then all paths from $A_{k-1,\ell'}$ to $A_{k,\ell}$ are type $(C1)$, or $(C2)$, or $(C3)$, respectively.
\end{observation}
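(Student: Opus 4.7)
The plan is to leverage the fact, built into the definition of subsegments, that cell corners are critical points and therefore cannot lie in the interior of any subsegment. First I would observe that $A_{k-1,\ell'}$, being a single subsegment of $A_{k-1}$, must lie entirely on one side of a single cell, namely either the bottom boundary or the left boundary of some cell $(i,j)$; likewise $A_{k,\ell}$ lies entirely on either the top or the right boundary of a single cell. Any path from $A_{k-1,\ell'}$ to $A_{k,\ell}$ therefore starts on a fixed input side and ends on a fixed output side, both determined by the subsegments alone.

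Next I would recall that the subtypes $(C1)$, $(C2)$, $(C3)$ are defined purely by which of the boundary-pairs (left, right), (bottom, right), (bottom, top) contains the start and end points of the path. Combined with the previous step, this means the subtype is a function of the pair $(A_{k-1,\ell'}, A_{k,\ell})$, not of the individual path. Hence if one path between them is of subtype $(Ci)$, every type $(C)$ path between them has the same subtype $(Ci)$.

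To conclude, I still need to rule out that some paths between the two subsegments are of type $(A)$ or $(B)$ rather than $(C)$. For this I would invoke Observation~\ref{obs:type_A_all_paths}, which shows the presence/absence of type $(A)$ paths is a property of the cell itself, and the splitting in Lemma~\ref{lem:add_critical_points_between_type_b_type_c}, which inserts critical points wherever the optimal-path type switches between $(B)$ and $(C)$. Together these guarantee that within a parent/child pair all optimal paths share the same major type $(A)$, $(B)$, or $(C)$, so the existence of a single type $(Ci)$ path forces all paths to be type $(Ci)$.

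The only real obstacle I foresee is a bookkeeping issue around the fourth geometric combination, left-to-top, which is not among the listed subtypes $(C1)$--$(C3)$; I would address it by noting that it is symmetric (under swapping the $x$- and $y$-axes) to one of the listed subtypes, and hence can be subsumed by relabeling without any change to the argument above.
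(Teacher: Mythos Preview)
Your proposal is correct and follows essentially the same approach as the paper: since cell corners are critical points, each subsegment lies on a single cell side, and the subtype $(C1)$, $(C2)$, $(C3)$ is determined entirely by which sides host the start and end, hence by the pair $(A_{k-1,\ell'},A_{k,\ell})$ rather than by the individual path. Your added remarks on ruling out types $(A)$ and $(B)$ and on the left-to-top case go slightly beyond what the paper's brief proof spells out, but they are consistent with the surrounding framework (Observation~\ref{obs:type_A_all_paths} and Lemma~\ref{lem:add_critical_points_between_type_b_type_c}) and do not change the argument.
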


\begin{proof}
Suppose there were paths of different types between $A_{k-1,\ell'}$ to $A_{k,\ell}$. Then either their starting points are on different boundaries, or their ending points are on different boundaries. Therefore, there is either a critical point on $A_{k-1,\ell'}$ or on $A_{k,\ell}$. But this is not possible, so there are no paths of different types between $A_{k-1,\ell'}$ to $A_{k,\ell}$.
\end{proof}

We require separate case analyses for type $(C1)$, $(C2)$ and $(C3)$ paths. See Figure~\ref{table:c_divided_c1c2c3}. We prove our claims for type $(C1)$ paths in Section~\ref{sec:c1}, type $(C2)$ paths in Section~\ref{sec:c2}, and type $(C3)$ paths in Section~\ref{sec:c3}.

\begin{figure}[ht]
    \centering
    \begin{tabular}{|l|l|l|l|}
        \hline
         & Type $(C1)$ & Type $(C2)$ & Type $(C3)$ \\
        \hline
        Lemma~\ref{claim:type_C_piecewise_quadratic}
        & Lemma~\ref{lem:type_C1_piecewise_quadratic}
        & Lemma~\ref{lem:type_C2_piecewise_quadratic}
        & Lemma~\ref{lem:type_C3_piecewise_quadratic}
        \\
        \hline
        Lemma~\ref{claim:type_C_propagate_in_constant_time}
        & Lemma~\ref{lem:type_C1_propagate_in_constant_time}
        & Lemma~\ref{lem:type_C2_propagate_in_constant_time}
        & Lemma~\ref{lem:type_C3_propagate_in_constant_time}
        \\
        \hline
        Lemma~\ref{claim:type_C_bounds}
        & Lemma~\ref{lem:type_C1_bounds}
        & Lemma~\ref{lem:type_C2_bounds}
        & Lemma~\ref{lem:type_C3_bounds}
        \\
        \hline
    \end{tabular}
    \caption{The proofs of Claims~\ref{claim:type_C_piecewise_quadratic}, \ref{claim:type_C_propagate_in_constant_time} and~\ref{claim:type_C_bounds} divided into three subcases.}
    \label{table:c_divided_c1c2c3}
\end{figure}

\subsection{Type (C1) paths}
\label{sec:c1}

The type $(C1)$ path is first a vertical path from $s$ to $s'$, then a horizontal path from $s'$ to~$t$. However, we notice that since $s'$ is on the left boundary, we can simplify the path to the horizontal part from $s'$ to $t$, since we know the optimal cost at $s'$ by our inductive hypothesis. The cost along the horizontal path from $s'$ to $t$ is a quadratic, so the cost function along $A_{k,\ell}$ is simply the cost function along $A_{k-1,\ell'}$ plus a quadratic. Using the same arguments as in Section~\ref{sec:case_analysis_type_a_paths}, we yield the following three lemmas.

\begin{lemma}
\label{lem:type_C1_piecewise_quadratic}
Let $A_{k-1,\ell'}$ be the parent of $A_{k,\ell}$. Suppose all paths from $A_{k-1,\ell'}$ to $A_{k,\ell}$ are type $(C1)$ paths. Then the cost along $A_{k,\ell}$ is a continuous piecewise quadratic function. Moreover, for every point $t$ on the piecewise quadratic function, the left derivative at $t$ is greater than or equal to the right derivative at $t$.
\end{lemma}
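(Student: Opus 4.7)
The plan is to exploit the structural observation made just before the lemma statement: a type $(C1)$ path from $s$ to $t$ can be decomposed into a vertical piece reaching the left boundary of the current cell at some point $s'$, followed by a horizontal piece from $s'$ to $t$. Since $s'$ lies on $A_{k-1,\ell'}$ and the inductive hypothesis (Lemma~\ref{claim:details_piecewise_quadratic} for $k-1$) already gives us a continuous piecewise quadratic boundary cost function $c_{\text{in}}(y)$ along the left boundary with left derivative $\ge$ right derivative, we can ignore the vertical portion entirely: for a point $t = (x_R, y_t) \in A_{k,\ell}$ the optimal cost is
\[
    \text{cost}(t) \;=\; c_{\text{in}}(y_t) \;+\; \int_{x_L}^{x_R} h(x, y_t)\, dx,
\]
where $x_L, x_R$ are the left and right $x$-coordinates of the cell.

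The next step is to show that the integral on the right is a quadratic in $y_t$ throughout $A_{k,\ell}$. Inside a type $(C)$ cell the two segments are in the same direction, so $h(x,y) = |P(x) - Q(y)|$ with $P$ and $Q$ linear of equal slope, and the zero set $P(x) - Q(y) = 0$ is exactly the slope-$1$ valley. By the definition of a subsegment (no cell corner, no valley-boundary intersection, and no switch of optimal-path parent inside $A_{k,\ell}$), the horizontal segment at height $y_t$ either always crosses the valley inside the cell or never does as $y_t$ varies over $A_{k,\ell}$, and if it does, the crossing $x$-coordinate depends linearly on $y_t$. Splitting the integral at this crossing (if present) produces at most two pieces, each of which evaluates to a quadratic in $y_t$, and the sum is again quadratic.

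Finally, adding this single quadratic to the piecewise quadratic $c_{\text{in}}(y_t)$ preserves piecewise-quadraticity and continuity. For the derivative property, at any point $y$, both the left and right derivatives of the quadratic agree (it is $C^1$), so they shift both one-sided derivatives of $c_{\text{in}}$ by the same amount; the inductive inequality left derivative $\ge$ right derivative is therefore inherited. The only genuinely substantive step is justifying that the horizontal integral is a single quadratic throughout the subsegment, and this reduces to checking that the valley-crossing status of the integration line is constant on $A_{k,\ell}$ — which is precisely what the ``valley meets boundary'' entry in the list of critical points guarantees.
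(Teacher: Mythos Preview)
Your proposal is correct and follows essentially the same approach as the paper. Both arguments observe that for a type $(C1)$ path the turning point $s'=(x_L,y_t)$ lies on the left boundary, so $\mathrm{cost}(t)=\mathrm{cost}(s')+\int_{x_L}^{x_R}h(x,y_t)\,dx$, and then note that adding a single quadratic to the input boundary cost function preserves continuity, piecewise-quadraticity, and the one-sided derivative inequality (the paper simply cites the type $(A)$ argument for this last step). You additionally justify why the horizontal integral is a \emph{single} quadratic on all of $A_{k,\ell}$ by arguing that the valley-crossing status of the integration line cannot change inside a subsegment; the paper asserts this without elaboration.
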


\begin{lemma}
\label{lem:type_C1_propagate_in_constant_time}
Let $A_{k-1,\ell'}$ be the parent of $A_{k,\ell}$. Suppose all paths from $A_{k-1,\ell'}$ are type $(C1)$ paths. If we propagate a single quadratic piece of $A_{k-1,\ell'}$ to the next level $A_k$, it is a piecewise quadratic function with at most a constant number of pieces. Moreover, this propagation step takes only constant time.
\end{lemma}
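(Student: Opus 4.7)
The plan is to follow the outline sketched in the preamble to this subsection, where it is already noted that the type $(C1)$ path reduces to a single horizontal segment from $s'$ on the left (input) boundary to $t$ on the right (output) boundary, and that $s'$ itself lies on $A_{k-1,\ell'}$. Under this reduction, I would parameterize both $s'$ and $t$ by their common $y$-coordinate $y_t$, so that the correspondence $t \mapsto s'$ is the identity in $y$. This gives a direct term-by-term propagation of each input quadratic piece, with no reindexing overhead.

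Next, I would compute the line-integral contribution $\int_{x_l}^{x_r} h(x,y_t)\,dx$ as a function of $y_t$, where $x_l, x_r$ are the $x$-coordinates of the left and right boundaries of the cell. Since $P$ and $Q$ are linear within the cell, $h(x,y) = \lVert P(x) - Q(y) \rVert$ is piecewise linear in $x$ for each fixed $y$, with at most one sign change at the valley. Within the subsegment $A_{k,\ell}$ there are no critical points, so in particular no ``valley meets boundary'' point occurs in the interior; this means the valley intersects the interval $[x_l,x_r]$ in a uniform manner for all $y_t$ in the subsegment. A direct calculation, splitting the integrand at the valley when necessary, then yields that this integral is a single quadratic in $y_t$ across the entire subsegment.

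The cost at $t$ is then the sum of the input cost at $s'$, which is a single quadratic in $y_t$ by hypothesis, plus the line-integral quadratic. The sum of two quadratics is again a quadratic, so each input quadratic piece propagates to exactly one quadratic piece on the output, which is certainly at most a constant number of pieces. All the operations---identifying $s'$ from $t$, computing the quadratic coefficients of the line integral from the cell geometry, and summing two quadratics---are constant-time given the input coefficients. The only real subtlety, verifying that the line integral remains a single quadratic rather than splitting into several pieces, is resolved precisely by the definition of critical points: a valley-boundary intersection is a critical point, so none can lie in the interior of $A_{k,\ell}$, and the sign structure of $P(x)-Q(y_t)$ on $[x_l,x_r]$ is uniform in $y_t$ throughout the subsegment.
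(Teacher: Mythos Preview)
Your proof is correct and follows essentially the same approach as the paper. The paper's own proof is simply the one-line reference ``Using the same arguments as in Section~\ref{sec:case_analysis_type_a_paths}'', relying on the preamble observation that the type $(C1)$ path reduces to a horizontal segment from $s'$ (on the left boundary, hence in $A_{k-1,\ell'}$) to $t$, so the output cost is the input cost plus a single quadratic. You carry this out explicitly and add the critical-point observation to pin down that the horizontal line integral is a single quadratic in $y_t$ across the whole subsegment; the paper's type $(A)$ argument is content with ``at most two pieces'', which equally suffices for the constant bound.
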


\begin{lemma}
\label{lem:type_C1_bounds}
Let $A_{k-1,\ell'}$ be the parent of $A_{k,\ell}$. Suppose all paths from $A_{k-1,\ell'}$ to $A_{k,\ell}$ are type $(C1)$ paths. Then $$D(A_{k,\ell}) \leq D(A_{k-1,\ell'}),$$ $$|A_{k,\ell}| \leq |A_{k-1,\ell'}|.$$ 
\end{lemma}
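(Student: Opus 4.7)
The plan is to leverage the simplification noted in the preamble to Lemma~\ref{lem:type_C1_piecewise_quadratic}: a type $(C1)$ optimal path from the left boundary to the right boundary reduces to a single horizontal segment at a fixed $y$-coordinate. Hence, each point $t \in A_{k,\ell}$ on the right boundary corresponds bijectively to a unique parent point $s' \in A_{k-1,\ell'}$ directly to its left (same $y$), and the cost at $t$ equals the cost at $s'$ plus a correction term $q(y) = \int_{x_{s'}}^{x_t} h(x,y)\, dx$ that depends only on $y$.

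The first step is to confirm that $q(y)$ is a single quadratic function of $y$ across the entire $y$-range of $A_{k,\ell}$. Inside the cell, $h(x,y)$ is piecewise linear with at most two pieces separated by the valley $\{x = v(y)\}$, where $v$ is linear in $y$. Splitting $\int_{x_{s'}}^{x_t} h(x,y)\, dx$ into contributions on either side of the valley, substituting the linear expressions for $P(x)$, $Q(y)$ and $v(y)$, and evaluating gives a quadratic expression in $y$ on each side; continuity at the crossing forces the two pieces to agree, so $q$ is a single quadratic over all of $A_{k,\ell}$.

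With $q(y)$ established as one fixed quadratic, the cost function along $A_{k,\ell}$ is obtained by pointwise-adding $q$ to the cost function along $A_{k-1,\ell'}$ using the bijective $y$-correspondence. Adding a common quadratic to each piece preserves the piecewise partition exactly, so $|A_{k,\ell}| \leq |A_{k-1,\ell'}|$. For the distinct-pairs bound, writing $q(y) = a_0 y^2 + b_0 y + c_0$, each piece $a_i y^2 + b_i y + c_i$ of $A_{k-1,\ell'}$ becomes $(a_i + a_0) y^2 + (b_i + b_0) y + (c_i + c_0)$ on $A_{k,\ell}$. Since the map $(a, b) \mapsto (a + a_0, b + b_0)$ is a bijection on $\mathbb{R}^2$, two pieces on $A_{k,\ell}$ share their $(a,b)$ pair iff the corresponding pieces on $A_{k-1,\ell'}$ did, giving $D(A_{k,\ell}) \leq D(A_{k-1,\ell'})$.

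The main (minor) obstacle is the first step: verifying that $q(y)$ is a single quadratic rather than merely piecewise quadratic, which requires the careful case analysis across the valley sketched above. The remainder of the proof is essentially algebraic, exploiting that translation by a fixed quadratic in coefficient space is a bijection, so neither the piece count nor the number of distinct $(a,b)$ pairs can increase.
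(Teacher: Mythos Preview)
Your approach is correct and essentially identical to the paper's: both observe that a type $(C1)$ path reduces to a single horizontal segment from $s'$ (on the left boundary, at the same $y$-coordinate as $t$) to $t$, so the cost along $A_{k,\ell}$ equals the cost along $A_{k-1,\ell'}$ plus one fixed quadratic $q(y)$, and adding a fixed quadratic preserves both the piece count and the set of distinct $(a,b)$ pairs (via the translation $(a,b)\mapsto(a+a_0,b+b_0)$, exactly as you note). One remark on your first step: the valley-crossing case analysis is unnecessary, because by definition a type $(C)$ path never reaches the valley, so the horizontal segment from $s'$ to $t$ lies entirely on one side of the valley for every $t\in A_{k,\ell}$, and $q(y)$ is a single (in fact linear) function directly; your ``continuity at the crossing forces the two pieces to agree'' claim would not in general force two distinct quadratics to coincide, but that situation does not arise here.
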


\subsection{Type (C2) paths}
\label{sec:c2}

\begin{definition}
Let $A_{k-1,\ell'}$ be the parent of $A_{k,\ell}$. Suppose all paths from $A_{k-1,\ell'}$ to $A_{k,\ell}$ are type $(C2)$ paths. Let the type $(C2)$ path be a vertical path from $s$ to $s'$ and a horizontal path from $s'$ to $t$. Define $pathcost(s,t)$ to be the cost at $s$ plus the cost along the type $(C2)$ path. In other words,

\[
pathcost(s,t) = cost(s) + \int_{s}^{s'} h(x) \cdot dx + \int_{s'}^t h(x) \cdot dx.
\] 
\end{definition}

The cost at $t$ is the minimum path cost over all type $(C2)$ paths ending at~$t$. In other words, $cost(t) = \min_{s} pathcost(s,t)$.

\begin{observation}
\label{obs:c2_boundaries_not_optimal}
Suppose $s$ is a point on $A_{k-1,\ell'}$ where the left derivative of $cost(s)$ at $s$ does not match the right derivative of $cost(s)$ at $s$. Then $cost(t) \neq pathcost(s,t)$ for all $t$.
\end{observation}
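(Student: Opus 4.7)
The plan is to exploit a first-order optimality obstruction: the hypothesized kink is a concave-type kink of $cost$ (left derivative strictly greater than right derivative), and this concavity survives the addition of a smooth integral term, so $s$ cannot be a local (and hence global) minimum of $pathcost(\cdot, t)$ for any $t$ on the output boundary.

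First, I would apply Lemma~\ref{claim:details_piecewise_quadratic} inductively to $A_{k-1,\ell'}$ to get $cost'_-(\sigma) \geq cost'_+(\sigma)$ at every point $\sigma$ on the subsegment. The hypothesis that these differ at $s$ then upgrades this to the strict inequality $cost'_-(s) > cost'_+(s)$. Next, I would fix an arbitrary $t$ on the output boundary and view $pathcost(\cdot, t)$ as a univariate function of the position $\sigma$ along $A_{k-1,\ell'}$. I would decompose $pathcost(\sigma, t) = cost(\sigma) + F(\sigma, t)$, where $F(\sigma, t)$ is the integral of $h$ along the vertical-then-horizontal $(C2)$ path from $\sigma$ to $t$. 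Inside a single cell, $h$ is piecewise linear with at most one linear seam (along the valley), and within the subsegment $A_{k-1,\ell'}$ the combinatorial structure of the path is fixed (no critical point lies in the interior of a subsegment). Hence $F(\cdot, t)$ reduces to a polynomial in $\sigma$, and in particular is smooth at $s$.

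The payoff is straightforward: the one-sided derivatives of $pathcost(\cdot, t)$ at $s$ satisfy
\[
pathcost'_-(s,t) = cost'_-(s) + \partial_\sigma F(s,t),
\qquad
pathcost'_+(s,t) = cost'_+(s) + \partial_\sigma F(s,t).
\]
If $cost(t) = pathcost(s,t)$, then $s$ is a global minimizer of $pathcost(\cdot, t)$, and since both one-sided derivatives of $cost$ exist at $s$ (so $s$ lies in the interior of the relevant domain), this forces $pathcost'_-(s,t) \leq 0 \leq pathcost'_+(s,t)$. Rearranging gives $cost'_-(s) \leq cost'_+(s)$, contradicting the strict kink inequality. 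Therefore $cost(t) \neq pathcost(s,t)$ for every $t$.

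The main obstacle I anticipate is justifying the smoothness of $F(\cdot, t)$ near $s$; the subsegment definition is exactly tailored to this, since any change in path type (e.g.\ where the path crosses the valley relative to the cell boundaries) would introduce a critical point and thus an endpoint of $A_{k-1,\ell'}$. A secondary technicality is confirming that $s$ is genuinely an interior point in the minimization domain so that both one-sided optimality conditions apply; this follows because the hypothesis that both one-sided derivatives of $cost$ exist and differ at $s$ already forces $s$ to lie strictly inside the domain of $cost$.
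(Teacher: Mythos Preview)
Your proposal is correct and follows essentially the same approach as the paper's proof: both invoke the inductive hypothesis from Lemma~\ref{claim:details_piecewise_quadratic} to get the strict concave kink $cost'_-(s) > cost'_+(s)$, observe that the path-integral term $F(\sigma,t)$ is a (single-piece) quadratic in $\sigma$ for fixed $t$, and conclude that the kink survives in $pathcost(\cdot,t)$ so $s$ cannot be a local (hence global) minimizer. The paper is slightly more direct in justifying the smoothness of $F$---it simply notes that the two integrals are single-piece quadratics in $s$---whereas your discussion of critical points and subsegment structure is more than is needed here; the relevant fact is just that a type~$(C)$ path does not cross the valley, so $h$ is linear along each leg.
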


\begin{proof}
We assume by inductive hypothesis that Lemma~\ref{claim:details_piecewise_quadratic} is true along $A_{k-1,\ell'}$. In particular, we assume the cost function along $A_{k-1,\ell'}$ cannot obtain a local minimum at a point where its left derivative does not equal its right derivative. Recall that $pathcost(s,t) = cost(s) + \int_{s}^{s'} h(x) \cdot dx + \int_{s'}^t h(x)$. For fixed $t$, the second and third terms are (single piece) quadratic functions in terms of $s$. Therefore, for fixed $t$, we add two single piece quadratic functions to $cost(s)$ to obtain $pathcost(s,t)$. Therefore, $pathcost(s,t)$ cannot have a local minimum at $s$, since the left derivative does not match the right derivative at $s$ in $cost(s)$, and adding the two quadratic functions to $cost(s)$ does not affect this property. Since $pathcost(s,t)$ does not have a local minimum at $s$, it cannot have a global minimum at $s$ either, so $cost(t) \neq pathcost(s,t)$, as required.
\end{proof}

\begin{observation}
\label{obs:partial_derivative}
Let $A_{k-1,\ell'}$ be the parent of $A_{k,\ell}$. Suppose that $s$ is on $A_{k-1,\ell'}$ and $t$ is on $A_{k,\ell}$ so that $cost(t) = pathcost(s,t)$. Then $\frac {\partial} {\partial s} pathcost(s,t) = 0$. Moreover, the set of points in a cell for which this occurs is a set of segments, and there are at most $|A_{k-1,\ell'}|$ such segments.
\end{observation}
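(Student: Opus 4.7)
The plan is to handle the two assertions in order. For the first assertion, we argue by the first-order optimality condition. Fix $t$ and consider the univariate function $s \mapsto pathcost(s,t)$ restricted to the parent subsegment $A_{k-1,\ell'}$. By assumption, $cost(t)=\min_s pathcost(s,t)$ is attained at the particular $s$ under consideration. The function $pathcost(s,t)=cost(s)+\int_{s}^{s'}h(x)\,dx+\int_{s'}^{t}h(x)\,dx$ is differentiable in $s$ at every point where $cost(s)$ is differentiable, since the two line integrals are smooth in $s$ (they are ordinary integrals of a linear height function along the vertical and horizontal legs of a type $(C2)$ path). By Observation~\ref{obs:c2_boundaries_not_optimal}, the minimizer $s$ cannot be a non-smooth joint of $cost(s)$, so $pathcost(\cdot,t)$ is smooth at $s$. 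Hence the usual first-order condition forces $\frac{\partial}{\partial s}pathcost(s,t)=0$ (the boundary points of $A_{k-1,\ell'}$, which are critical points, are handled separately as they contribute only the endpoints of the segments claimed in part two).

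For the second assertion, the plan is to write the partial-derivative equation explicitly and observe its affine structure. Parametrise $s$ by its arc-length coordinate $u$ along the bottom boundary and $t=(x,y)$ in the cell. Differentiating the display above with respect to $u$ (with $x,y$ fixed) yields
\[
\frac{\partial}{\partial u}pathcost(u,x,y)=cost'(u)+\int_{0}^{y}\frac{\partial h}{\partial u}(u,z)\,dz-h(u,y).
\]
Because a type $(C2)$ path avoids the valley, the height function $h$ is a single linear piece on the region swept by the path, so both $\int_{0}^{y}\partial h/\partial u\,dz$ and $h(u,y)$ are linear in $(u,y)$ and independent of $x$. On each of the $|A_{k-1,\ell'}|$ quadratic pieces of $cost(u)$, the derivative $cost'(u)$ is linear in $u$. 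Therefore, restricted to one such piece, the equation $\frac{\partial}{\partial u}pathcost=0$ is a single linear equation in $(u,y)$, whose zero-set is a line in the $(u,y)$ plane. The zero-set intersected with the slab $u\in P_i$ (the $i$th piece) is either empty or a single line segment.

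Finally, transporting this back to points of the cell: for each piece $P_i$ of $cost(u)$ the segment in the $(u,y)$ plane translates into a single segment in the cell, consisting of those $t=(x,y)$ whose optimal $s$ lies on piece $P_i$ at an interior critical point (the condition is independent of $x$, so once $y$ is fixed, the segment is axis-parallel within the cell and bounded by where $u^{\ast}(y)$ leaves $P_i$ or where $(x,y)$ leaves the cell). Since there are at most $|A_{k-1,\ell'}|$ pieces of $cost(u)$, we obtain at most $|A_{k-1,\ell'}|$ such segments, establishing the bound. The main thing to be careful about will be distinguishing the two boundary situations — $s$ at an endpoint of $A_{k-1,\ell'}$ versus $u^{\ast}(y)$ exiting a piece $P_i$ — and verifying that the resulting endpoints of the segments land correctly on piece boundaries or on the cell boundary, but no new linear equations are introduced so the count is unchanged.
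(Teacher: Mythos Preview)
Your argument for the first assertion is exactly the paper's: invoke Observation~\ref{obs:c2_boundaries_not_optimal} to rule out non-smooth points of $cost(s)$, then apply the first-order condition. Your treatment of the second assertion is also in essence the paper's argument---on each quadratic piece of $cost(s)$ the equation $\partial_s\,pathcost(s,t)=0$ is linear in $(s,t)$, hence its zero set is a line segment, one per piece---and your explicit derivative formula is a correct expansion of what the paper leaves as ``$pathcost(s,t)$ is a bivariate quadratic, so its $s$-partial is bivariate linear.''

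Where you go astray is in the last paragraph. For type~$(C2)$ paths, $t$ lies on $A_{k,\ell}$, which is a subsegment of the \emph{right} boundary of the cell; so $t$ is a one-parameter point (its $y$-coordinate), not a free $(x,y)$ in the cell. The ``segments'' in the statement are segments in the $(s,t)$ rectangle $A_{k-1,\ell'}\times A_{k,\ell}$ (cf.\ Figure~\ref{fig:region_for_quadratic_surface}), and they are exactly the linear pieces of the map $t\mapsto s(t)$ used in Lemma~\ref{lem:type_C2_piecewise_quadratic}. Your ``transporting back to the cell'' step, with its axis-parallel segments indexed by a free $x$, is based on a misreading of what the segments are and should simply be deleted: the $(u,y)$ computation you already did \emph{is} the answer.
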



\begin{proof}
By the contrapositive of Observation~\ref{obs:c2_boundaries_not_optimal}, we have that the point $s$ is a point on $A_{k-1,\ell'}$ such that its the left derivative of $cost(s)$ at $s$ matches the right derivative of $cost(s)$ at $s$. So the derivative of $cost(s)$ is well defined at $s$. Therefore, $\frac {\partial} {\partial s} pathcost(s,t)$ is well defined at $s$, since $pathcost(s,t) = cost(s) + \int_{s}^{s'} h(x) \cdot dx + \int_{s'}^t h(x)$, and the second and third terms are quadratic functions in terms of $s$ if $t$ is fixed. Since $\frac {\partial} {\partial s} pathcost(s,t)$ is well defined at $s$, and $cost(t) = pathcost(s,t)$ so $pathcost(s,t)$ is minimised as $s$, we have that  $\frac {\partial} {\partial s} pathcost(s,t) = 0$. Since $pathcost(s,t)$ is a bivariate quadratic function with $|A_{k-1,\ell'}|$ pieces, the partial derivative equation $\frac {\partial} {\partial s} pathcost(s,t) = 0$ defines a bivariate linear function with $|A_{k-1,\ell'}|$ pieces. This bivariate linear function is a set of $|A_{k-1,\ell'}|$ segments, as required.
\end{proof}

\begin{lemma}
\label{lem:type_C2_piecewise_quadratic}
Let $A_{k-1,\ell'}$ be the parent of $A_{k,\ell}$. Suppose all paths from $A_{k-1,\ell'}$ to $A_{k,\ell}$ are type $(C2)$ paths. Then the cost along $A_{k,\ell}$ is a continuous piecewise quadratic function. Moreover, for every point $t$ on the piecewise quadratic function, the left derivative at $t$ is greater than or equal to the right derivative at $t$.
\end{lemma}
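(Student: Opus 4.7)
The plan is to express $cost(t)$ along $A_{k,\ell}$ as a (restricted) lower envelope of finitely many quadratic functions in $t$, and then read off both assertions from standard lower-envelope properties together with the structural observations we already have.

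First, I would use Observation~\ref{obs:c2_boundaries_not_optimal} to restrict the search for the optimal $s$. For each $t$, the argmin of $pathcost(s,t)$ over $s \in A_{k-1,\ell'}$ lies either (i) in the interior of some quadratic piece of $cost(s)$ at a point where $\partial pathcost / \partial s = 0$, or (ii) at one of the two endpoints of $A_{k-1,\ell'}$. Internal kinks of $cost(s)$ are ruled out because the inductive form of Lemma~\ref{claim:details_piecewise_quadratic} forces such kinks to have left derivative at least the right derivative, making them local maxima (not minima) of $pathcost(\cdot, t)$.

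Second, I would show each case produces a quadratic candidate in $t$. On each piece $i$ of $cost(s)$, the function $pathcost(s,t) = cost(s)|_i + \int h(s,y)\,dy + \int h(x,t)\,dx$ is a bivariate polynomial of degree at most two in each variable (using that $h$ is linear on each side of the valley; a valley crossing within the cell splits the situation into a constant number of sub-pieces). By Observation~\ref{obs:partial_derivative}, the critical-point equation is affine in $(s,t)$, so the interior optimum $s_i^{*}(t)$ is an affine function of $t$ on the range of $t$ where it lies in piece $i$; substituting gives a quadratic $Q_i(t)$. Case (ii) contributes two further quadratics, $pathcost(s_{\min},t)$ and $pathcost(s_{\max},t)$.

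Third, $cost(t)$ equals the restricted lower envelope of these at most $|A_{k-1,\ell'}| + 2$ quadratic functions, which is a continuous piecewise quadratic function. Continuity follows from continuity of each candidate together with continuity of $cost(s)$ and $pathcost$. At any kink $t_0$ of this envelope, two candidates $f_1, f_2$ coincide with $f_1 \leq f_2$ slightly to the left of $t_0$ and $f_2 \leq f_1$ slightly to the right; consequently $(f_1 - f_2)'(t_0) \geq 0$, so the left derivative $f_1'(t_0)$ is at least the right derivative $f_2'(t_0)$, yielding the required property.

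The main obstacle I anticipate is the bivariate-polynomial claim in Step 2: verifying that $pathcost(s,t)$ really decomposes (piecewise) as stated, especially in cells where the type-$(C2)$ path crosses the valley, and correctly restricting each $Q_i$ to the range of $t$ where $s_i^{*}(t)$ is actually in piece $i$ so that the lower envelope reproduces $cost(t)$ rather than a spurious lower bound. The remaining arguments are essentially bookkeeping on top of Observations~\ref{obs:c2_boundaries_not_optimal}--\ref{obs:partial_derivative} and the inductive hypothesis.
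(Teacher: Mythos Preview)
Your approach is essentially the paper's: write $cost(t)=\min_s pathcost(s,t)$, use Observation~\ref{obs:c2_boundaries_not_optimal} to exclude internal kinks of $cost(s)$ as minimisers, use Observation~\ref{obs:partial_derivative} to get an affine relation $s_i^*(t)$ on each piece, and substitute to obtain finitely many quadratic candidates whose (restricted) lower envelope is $cost(t)$. That much matches.

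The one place where your argument and the paper's genuinely diverge is the derivative inequality. You invoke the standard ``two crossing graphs'' lower-envelope argument, but in a \emph{restricted} lower envelope the active candidate on the left of a kink need not be defined to the right of it (its domain may end exactly where $s_i^*(t)$ leaves piece $i$), so the inequality $(f_1-f_2)'(t_0)\ge 0$ is not immediate. This is fillable---one can show via Observation~\ref{obs:c2_boundaries_not_optimal} and the envelope theorem that a domain endpoint of an active $Q_i$ never produces a kink of $cost(t)$---but you have not argued it, and the obstacle you flag concerns only correctness of the envelope value, not the one-sided derivatives. The paper sidesteps this entirely: at a kink $t_0$ it takes the linear function $\bar s(t)$ that agrees with $s(t)$ to the left, observes that $pathcost(\bar s(t),t)$ is differentiable at $t_0$ (since $cost(s)$ is differentiable at $s(t_0)$ by Observation~\ref{obs:c2_boundaries_not_optimal}), equals $cost(t)$ on the left, and dominates $cost(t)$ on the right; comparing one-sided derivatives then gives the inequality directly, with no domain bookkeeping.
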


\begin{proof}
For all $t$ on $A_{k,\ell}$, we define $\sigma(t)$ as the point on $A_{k-1,\ell'}$ such that $cost(t) = pc(t)$ where~$pc(t) := pathcost(\sigma(t),t) = cost(\sigma(t)) + \int_{\sigma(t)}^{\sigma'(t)} h(r) dr + \int_{\sigma'(t)}^t h(r) dr$. By induction, we get that~$cost$ is a piecewise quadratic function on~$A_{k-1,\ell'}$, and by Observation~\ref{obs:partial_derivative}, we get that $\sigma$ is a piecewise linear function. Therefore,~$pc$ is a piecewise quadratic function and so is~$cost$ on~$A_{k,l}$.

Next, we show that for every point $t_0$ on $A_{k,\ell}$, $pc$ is continuous in~$t_0$ (which isn't necessarily true for $\sigma$) and the left derivative of~$pc$ at $t_0$ is greater than or equal to the right derivative of~$pc$ at~$t_0$. We write~$\partial^- pc(t_0) \geq \partial^+ pc(t_0)$. Consider the points directly to the left of~$t_0$ such that a single linear piece of~$\sigma$ contains all of them in its domain. Now let~$\overrightarrow \sigma$ be the linear function corresponding to this piece with its domain extended to the right of~$t_0$, and let~$\overrightarrow{pc}$ be the function~$\overrightarrow{pc}(t) := pathcost (\overrightarrow \sigma (t),t) = cost(\overrightarrow \sigma (t)) + \int_{\overrightarrow \sigma(t)}^{\overrightarrow \sigma'(t)} h(r) dr + \int_{\overrightarrow \sigma'(t)}^t h(r) dr$. We analogously define $\overleftarrow \sigma$ and $\overleftarrow{pc}$ by considering points directly to the right of~$t_0$ and extending the piece's domain to the left. Note that $\overrightarrow{\sigma} \neq \overleftarrow{\sigma}$ only if $\sigma$ changes from one linear piece to another at $t_0$.

Both~$\overrightarrow{pc}$ and $\overleftarrow{pc}$ are continuous, since~$\overrightarrow \sigma$ and~$\overleftarrow \sigma$ are linear while $cost$ is continuous on~$A_{k-1,l'}$ by induction. For all~$t < t_0$ and close to~$t_0$, we have~$\overrightarrow{pc}(t) = pc(t) = cost(t) = \min_s pathcost(s,t)$ by definition and thus~$\overrightarrow{pc}(t) \leq \overleftarrow{pc}(t)$. Similarly, for all~$t > t_0$ close to~$t_0$, we have~$\overrightarrow{pc}(t) \geq \overleftarrow{pc}(t)$. Together, this implies~$\overrightarrow{pc}(t_0) = \overleftarrow{pc}(t_0)$, which in turn yields that~$pc$ is continuous in~$t_0$ with $pc(t_0) = \overrightarrow{pc}(t_0) = \overleftarrow{pc}(t_0)$. 

The continuity of $\overrightarrow{pc}$, $\overleftarrow{pc}$ and $pc$ at $t_0$ also means that they all have a left and a right derivative at~$t_0$. Assume for the sake of contradiction that~$\partial^-\overrightarrow{pc}(t_0) \neq \partial^+\overrightarrow{pc}(t_0)$. This would require $\partial^- cost(\overrightarrow{\sigma}(t_0)) \neq \partial^+ cost(\overrightarrow{\sigma}(t_0))$, because the second and third terms of $\overrightarrow{pc}(t)$ are quadratic functions of $t$ without break points. Due to the linearity of~$\overrightarrow \sigma$, $\partial^- cost(\overrightarrow \sigma(t_0)) \neq \partial^+ cost(\overrightarrow \sigma(t_0))$ only occurs when $\overrightarrow \sigma(t_0)$ is a break point of $cost$, which is impossible by~$\overrightarrow{pc}(t_0) = pc(t_0) = cost(t_0)$ and the contrapositive of Observation~\ref{obs:c2_boundaries_not_optimal}. This yields a contradiction, so~$\partial^-\overrightarrow{pc}(t_0) = \partial^+\overrightarrow{pc}(t_0)$.

Finally, we have $\partial^-\overrightarrow{pc}(t_0) = \partial^-{pc}(t_0)$ and $\partial^+\overrightarrow{pc}(t_0) \geq \partial^+{pc}(t_0)$, since $\overrightarrow{pc}(t) = pc(t)$ for all~$t \leq t_0$ and $\overrightarrow{pc}(t) \geq pc(t)$ for all $t > t_0$ close to~$t_0$. Putting everything together yields
\[
\partial^-cost(t_0) 
= \partial^-{pc}(t_0)
= \partial^-\overrightarrow{pc}(t_0)
= \partial^+\overrightarrow{pc}(t_0)
\geq \partial^+{pc}(t_0)
 = \partial^+cost(t_0)
\]
as required.
\end{proof}

\begin{lemma}
\label{lem:type_C2_propagate_in_constant_time}
Let $A_{k-1,\ell'}$ be the parent of $A_{k,\ell}$. Suppose all paths from $A_{k-1,\ell'}$ are type $(C2)$ paths. If we propagate a single quadratic piece of $A_{k-1,\ell'}$ to the next level $A_k$, the cost is a quadratic function. Moreover, this propagation step takes only constant time.
\end{lemma}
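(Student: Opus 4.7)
The plan is to exploit the bivariate quadratic structure of $pathcost(s,t)$ together with Observation~\ref{obs:partial_derivative}. Parametrise the input subsegment $A_{k-1,\ell'}$ by $x_s$ (its position along the bottom cell boundary) and the output subsegment $A_{k,\ell}$ by $y_t$ (its position along the right cell boundary). A type $(C2)$ path from $s = (x_s, 0)$ to $t = (x_{\max}, y_t)$ consists of a vertical segment from $s$ to the turning point $(x_s, y_t)$ followed by a horizontal segment from $(x_s, y_t)$ to $t$. Within the cell the height function $h(x,y) = ||P(x) - Q(y)||$ is piecewise linear with at most two pieces separated by the valley, and since $A_{k-1,\ell'}$ and $A_{k,\ell}$ contain no critical points, the combinatorial structure of where the two path legs cross the valley is the same for all $(x_s, y_t)$ in the considered subsegments.

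First I would write
\[
    pathcost(x_s, y_t) \;=\; cost(x_s) \;+\; \int_0^{y_t} h(x_s, y)\, dy \;+\; \int_{x_s}^{x_{\max}} h(x, y_t)\, dx.
\]
By assumption $cost(x_s)$ is a single quadratic in $x_s$, and since each integrand is linear on each portion cut off by the valley (with a combinatorially fixed cut whose crossing coordinates depend linearly on $x_s$ and $y_t$), both integrals evaluate to bivariate quadratic expressions in $(x_s, y_t)$. Hence $pathcost(x_s, y_t)$ is a single bivariate quadratic polynomial.

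Next I would apply Observation~\ref{obs:partial_derivative}, which guarantees that the minimising $x_s^\ast(y_t)$ satisfies $\tfrac{\partial}{\partial x_s} pathcost(x_s, y_t) = 0$. Because $pathcost$ is a bivariate quadratic, this partial-derivative equation is linear in $(x_s, y_t)$, so $x_s^\ast(y_t)$ is a linear function of $y_t$ (or a constant, in degenerate cases). Substituting $x_s = x_s^\ast(y_t)$ into $pathcost$ yields the composition of a bivariate quadratic with a linear map, which is a univariate quadratic function of $y_t$. All manipulations are performed symbolically on polynomials of bounded degree with a constant number of coefficients, so the whole propagation runs in $\Oh(1)$ time.

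The main subtle point will be the possibility that the unconstrained minimiser $x_s^\ast(y_t)$ leaves the input subsegment for some $y_t \in A_{k,\ell}$. I would argue that this cannot switch within a single output subsegment: by Lemma~\ref{lem:crossing_lemma_main_paper} optimal paths do not cross, so $y_t \mapsto x_s^\ast(y_t)$ is monotone, and any transition between ``interior minimiser'' and ``endpoint minimiser'' would correspond to a critical point of type (iii) inside $A_{k,\ell}$, contradicting the assumption that $A_{k,\ell}$ contains none. Consequently, either $x_s^\ast(y_t)$ is interior throughout $A_{k,\ell}$ and the construction above applies verbatim, or $x_s$ is pinned to a single fixed endpoint of $A_{k-1,\ell'}$ for every $y_t$, in which case $pathcost(x_s^{\mathrm{fix}}, y_t)$ is itself already a quadratic in $y_t$. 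In either case the output along $A_{k,\ell}$ is a single quadratic function computed in constant time.
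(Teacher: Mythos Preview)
Your core argument---writing $pathcost(x_s,y_t)$ as a bivariate quadratic, invoking Observation~\ref{obs:partial_derivative} to obtain the linear relation $x_s^\ast(y_t)$ from $\tfrac{\partial}{\partial x_s}pathcost=0$, and substituting back to get a single univariate quadratic in $y_t$---is exactly the paper's proof. The paper is terser: it simply computes the unconstrained minimiser and outputs the resulting quadratic, without any discussion of whether $x_s^\ast(y_t)$ remains inside the given input piece.

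Your final paragraph attempts to rule out the endpoint case, but the argument has a gap. Type~(iii) critical points mark where the optimal path switches between \emph{subsegments} $A_{k-1,\ell'}$ and $A_{k-1,\ell''}$, not between quadratic \emph{pieces} inside a single subsegment. An endpoint of the input quadratic piece can lie in the interior of $A_{k-1,\ell'}$, so a transition from ``interior minimiser'' to ``endpoint minimiser'' need not produce a type~(iii) critical point in $A_{k,\ell}$, and your contradiction does not fire. This does not damage the lemma: the paper never claims the minimiser stays inside the piece, it simply outputs the unconstrained-minimiser quadratic as this piece's contribution and defers correctness to the lower-envelope step (where Observation~\ref{obs:c2_boundaries_not_optimal} ensures the true minimiser is interior to \emph{some} piece, and Lemma~\ref{lem:crossing_lemma_main_paper} gives the ordering). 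So your extra case analysis is unnecessary here, and the flawed part can safely be dropped.
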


\begin{proof}
Let $s$ be a point on a single quadratic piece on $A_{k-1,\ell'}$. In constant time, we can compute the bivariate quadratic function $pathcost(s,t)$. In constant time, we can compute the segment where $\frac \partial {\partial s} pathcost(s,t) = 0$. For all $(s,t)$ on this segment, we write $s$ as a linear function of $t$. We know by Observation~\ref{obs:partial_derivative} that $cost(t) = pathcost(s,t)$ along this segment. In constant time, we can substitute the linear function of $s$ in terms of $t$ into $pathcost(s,t)$ to obtain $cost(t)$, i.e. the propagated cost along the next level $A_k$.
\end{proof}

\begin{lemma}
\label{lem:type_C2_bounds}
Let $A_{k-1,\ell'}$ be the parent of $A_{k,\ell}$. Suppose all paths from $A_{k-1,\ell'}$ to $A_{k,\ell}$ are type $(C2)$ paths. Then $$D(A_{k,\ell}) \leq D(A_{k-1,\ell'}),$$ $$|A_{k,\ell}| \leq |A_{k-1,\ell'}|.$$ 
\end{lemma}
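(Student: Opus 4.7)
The plan is to derive both bounds by reusing the structure established in the proof of Lemma~\ref{lem:type_C2_piecewise_quadratic}. That proof already shows that the optimal source map $s(t)$ on $A_{k-1,\ell'}$ is a piecewise linear function of $t \in A_{k,\ell}$, and by Observation~\ref{obs:partial_derivative} there are at most $|A_{k-1,\ell'}|$ linear pieces, one per quadratic piece of $cost(s)$. Since $cost(t) = pathcost(s(t),t)$ consists of one quadratic on each linear piece of $s(t)$, the bound $|A_{k,\ell}| \leq |A_{k-1,\ell'}|$ is immediate. The real content of the lemma is therefore the inequality $D(A_{k,\ell}) \leq D(A_{k-1,\ell'})$, which I will establish by showing that the map from the input coefficient pair $(a,b)$ of a quadratic piece of $cost(s)$ to the output coefficient pair $(A,B)$ of the corresponding piece of $cost(t)$ is injective.

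To set this up, fix the cell and recall that along a type $(C2)$ path we stay on a single side of the valley, so the height function has the linear form $h(x,y) = c_0 + \epsilon x - \epsilon y$ for some $\epsilon \in \{\pm 1\}$ (the segments are co-directed). A direct integration yields
\[
  pathcost(s,t) \;=\; cost(s) + g(s,t), \qquad g(s,t) = \alpha s^2 + \beta s t + \gamma t^2 + \delta s + \varepsilon_{\!g} t + \zeta,
\]
where the coefficients $\alpha,\beta,\gamma,\delta,\varepsilon_{\!g},\zeta$ depend only on the cell (on $c_0$, $\epsilon$, and the $x$-/$y$-coordinates of the cell boundaries), and critically the cross term satisfies $\beta = \pm 2 \neq 0$. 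Writing $cost(s) = a s^2 + b s + c$ on a piece, the stationarity condition $\partial_s\, pathcost(s,t) = 0$ becomes linear in $s,t$ and solves to
\[
  s(t) \;=\; \lambda\, t + \mu, \qquad \lambda = \frac{-\beta}{2(a+\alpha)}, \quad \mu = \frac{-(\delta+b)}{2(a+\alpha)}.
\]

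Substituting this back and collecting terms in $t$, a short calculation gives $cost(t) = A t^2 + B t + C$ with
\[
  A \;=\; \gamma - \frac{\beta^2}{4(a+\alpha)}, \qquad B \;=\; \varepsilon_{\!g} - \frac{\beta(\delta+b)}{2(a+\alpha)}.
\]
Since $\beta \neq 0$, the map $a \mapsto A$ is strictly monotone in $a$, so any two input pieces with $a_1 \neq a_2$ produce $A_1 \neq A_2$. If instead $a_1 = a_2$ but $b_1 \neq b_2$, then $A_1 = A_2$ while $B$ depends linearly on $b$ with nonzero coefficient $-\beta/(2(a+\alpha))$, forcing $B_1 \neq B_2$. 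In either case distinct input pairs $(a,b)$ give distinct output pairs $(A,B)$, so $D(A_{k,\ell}) \leq D(A_{k-1,\ell'})$.

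The main (modest) obstacle is simply to carry out the explicit computation of $g(s,t)$ and to verify the two displayed formulas for $A$ and $B$ — in particular, to confirm that the cross-term coefficient $\beta$ is always nonzero for a type $(C2)$ cell. Once the bivariate quadratic form of $pathcost$ is written down, everything else is algebra, and no case analysis on which boundary carries $A_{k-1,\ell'}$ is needed because swapping the roles of the bottom/left input or top/right output merely relabels $(s,t)$ and does not affect the nondegeneracy $\beta \neq 0$ that drives the injectivity.
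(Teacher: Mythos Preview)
Your argument for $|A_{k,\ell}| \leq |A_{k-1,\ell'}|$ matches the paper's: both invoke Observation~\ref{obs:partial_derivative} to get that $s(t)$ has at most $|A_{k-1,\ell'}|$ linear pieces and hence $cost(t)=pathcost(s(t),t)$ has at most that many quadratic pieces.

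For the bound on $D$, your route is genuinely different. The paper never writes down the coefficients of $g(s,t)$. Instead it observes that if two input pieces $f_1,f_2$ share the same $(a,b)$ and hence differ only by a constant, then $pathcost_1$ and $pathcost_2$ differ by that same constant; consequently the stationarity equation $\partial_s\,pathcost=0$ is identical for both, the linear relation $s(t)$ is the same, and after substitution the two output costs again differ only by a constant. This shows directly that the map from input $(a,b)$ pairs to output $(A,B)$ pairs is \emph{well-defined}, which is already enough for $D(A_{k,\ell})\le D(A_{k-1,\ell'})$. Your computation instead proves this map is \emph{injective}, which is a strictly stronger statement but requires the explicit coefficient analysis and the check that $\beta\neq 0$. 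Two small points are worth tightening. First, the desired inequality follows already from the mere fact that your formulas express $(A,B)$ as a function of $(a,b)$; the injectivity you prove is not needed for the conclusion you draw, so the sentence ``distinct input pairs give distinct output pairs, so $D(A_{k,\ell})\le D(A_{k-1,\ell'})$'' is citing a stronger hypothesis than necessary. Second, your formulas for $\lambda,\mu,A,B$ divide by $a+\alpha$, and you should say why this is legitimate: if $a+\alpha\le 0$ then $pathcost(\cdot,t)$ restricted to that input piece has no interior minimum in $s$, and by Observation~\ref{obs:c2_boundaries_not_optimal} the piece then cannot contribute to $cost(t)$ at all, so the formula is only ever invoked where $a+\alpha>0$. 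The paper's constant-difference argument sidesteps this degeneracy entirely, which is what it buys; your explicit computation buys the extra information that the coefficient map is actually a bijection onto its image.
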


\begin{proof}
First, we show $|A_{k,\ell}| \leq |A_{k-1,\ell'}|.$ For all $t$ on $A_{k,\ell}$, we define $s(t)$ as the point on $A_{k-1,\ell'}$ such that that $cost(t) = pathcost(s(t),t)$. By Observation~\ref{obs:partial_derivative}, we get that $s(t)$ is a piecewise linear function in terms of $t$, and $s(t)$ has at most $|A_{k-1,\ell'}|$ linear pieces. Substituting $s(t)$ into $pathcost(s(t),t)$, we get that $cost(t)$ is a piecewise quadratic function, which is the lower envelope of $|A_{k-1,\ell'}|$ quadratic pieces. By Lemma~\ref{lem:crossing_lemma_main_paper}, the lower envelope of $|A_{k-1,\ell'}|$ pieces appear in the same order as the input pieces they were propagated from. Therefore, the lower envelope $cost(t)$ has at most $|A_{k-1,\ell'}|$ quadratic pieces, as required.

Next, we show $D(A_{k,\ell}) \leq D(A_{k-1,\ell'})$. Suppose that $f_1(x) = ax^2 + bx + c_1$, and $f_2(x) = ax^2 + bx + c_2$. We apply the propagation step in Lemma~\ref{lem:type_C2_propagate_in_constant_time} to $f_1(x)$ and $f_2(x)$. Let $pathcost_1(s,t)$ be the bivariate quadratic function for $s$ in the domain of $f_1(x)$, and similarly $pathcost_2(s,t)$ be the bivariate quadratic function for $s$ in the domain of $f_2(x)$. Since $f_1(x)$ and $f_2(x)$ differ by at most a constant, we get that $pathcost_1(s,t)$ and $pathcost_2(s,t)$ differ by at most a constant. The derivative $\frac \partial {\partial s} pathcost(s,t)$ does not depend on the constant terms, so $s$ and $t$ have the same linear relationship for $s$ in the domain of $f_1(x)$ and $f_2(x)$. Substituting the linear function $s(t)$ into $pathcost_1(s(t),t)$ and $pathcost_2(s(t),t)$, we obtain $cost_1(t)$ and $cost_2(t)$, that differ at most by an additive constant. Hence, any two quadratic functions in $cost(s)$ that share the same $(a,b)$ terms propagate to a pair of quadratic functions in $cost(t)$ that share the same $(a,b)$ terms. Hence, every one of the $D(A_{k-1,\ell'})$ many distinct $(a,b)$ pairs in $cost(s)$ map to at most one of the $D(A_{k,\ell})$ many distinct $(a,b)$ pairs in $cost(t)$. Therefore, $D(A_{k,\ell}) \leq D(A_{k-1,\ell'})$, as required.
\end{proof}

\subsection{Type (C3) paths}
\label{sec:c3}

\begin{definition}
Let $A_{k-1,\ell'}$ be the parent of $A_{k,\ell}$. Let the type $(C3.1)$ function be the cost function along $A_{k,\ell}$ if we only allow vertical type $(C3)$ paths from $A_{k-1,\ell'}$ to $A_{k,\ell}$. Let the type $(C3.2)$ function be the cost function along $A_{k,\ell}$ if we allow arbitrary type $(C3)$ paths (vertical followed by a horizontal path) from $A_{k-1,\ell'}$ to $A_{k,\ell}$. See Figure~\ref{fig:type_c3.1_c3.2}.
\end{definition}

\begin{figure}[ht]
    \centering
    \includegraphics{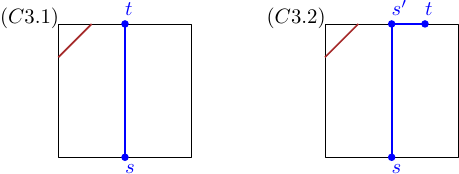}
    \caption{The type $(C3.1)$ and type $(C3.2)$ paths.}
    \label{fig:type_c3.1_c3.2}
\end{figure}

Similar to type $(B1)$ and type $(B2)$ functions in Section~\ref{sec:b}, our approach will be to first compute the type $(C3.1)$ and then modify it to obtain the type $(C3.2)$ function. Recall that for a type $(B1)$ function $f(t)$, the type $(B2)$ function is the cumulative minimum of $f(t)$. Recall that the cumulative minimum has horizontal extensions at each of the local minima of $f(t)$. For type $(C3)$ paths, we apply similar extensions, but our new extensions will be non-horizontal, but instead follow the shape of a quadratic function $g(t)$. See Figure~\ref{fig:continous_continuation}. 

\begin{figure}[ht]
    \centering
    \includegraphics{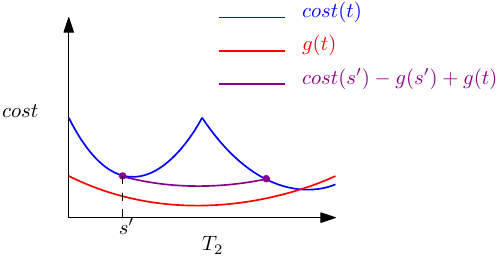}
    \caption{A non-horizontal extension (purple) of a function (blue) that follows the shape of a quadratic (red).}
    \label{fig:continous_continuation}
\end{figure}

\begin{observation}
\label{obs:c3.2_to_c3.1}
Let $cost_{3.1}(t)$ be a type $(C3.1)$ function, and let $g(t)$ be the integral of the height function $h(t)$ along the top boundary. Then the type $(C3.2)$ function, $cost_{3.2}(t)$, is given by
\[
        cost_{3.2}(t) = \min_{s' \leq t} \big( cost_{3.1}(s') - g(s') \big) + g(t) 
\]

\end{observation}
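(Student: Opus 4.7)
The plan is to unfold the definition of a type $(C3.2)$ path directly and then reorganise the resulting expression. Every such path decomposes into two consecutive legs: a vertical segment from some point $s$ on the bottom boundary to a point $s'$ on the top boundary, followed by a horizontal segment along the top boundary from $s'$ to the target $t$. Because $\gamma$ must be monotone non-decreasing in the $x$-coordinate, the only constraint on the intermediate point is $s' \leq t$; the degenerate case $s' = t$ corresponds to a pure type $(C3.1)$ path and is allowed.

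Next, I would fix $t$ and express the cost of a concrete type $(C3.2)$ path ending at $t$ through an intermediate point $s'$. The cost of the vertical leg, minimised over the choice of $s$ on the bottom boundary, is precisely $cost_{3.1}(s')$ by the definition of the type $(C3.1)$ function. The cost of the horizontal leg along the top boundary is $\int_{s'}^{t} h(x) \cdot dx$, which by definition of $Q$ equals $Q(t) - Q(s')$. Hence for each admissible $s'$, the total cost is
\[
    cost_{3.1}(s') + Q(t) - Q(s').
\]

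Finally, I would take the infimum over all admissible $s' \leq t$, which by definition gives $cost_{3.2}(t)$, and factor $Q(t)$ out of the minimum, since it does not depend on $s'$:
\[
    cost_{3.2}(t) = \min_{s' \leq t}\bigl( cost_{3.1}(s') + Q(t) - Q(s') \bigr) = \min_{s' \leq t}\bigl( cost_{3.1}(s') - Q(s') \bigr) + Q(t).
\]

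This is essentially a direct computation, so there is no genuine obstacle; the only point that warrants care is the constraint $s' \leq t$, which comes from monotonicity in the $x$-coordinate, together with checking that both the optimisation over $s$ (hidden inside $cost_{3.1}$) and over $s'$ decouple cleanly so that we may first fix $s'$, minimise over $s$, and then minimise over $s'$.
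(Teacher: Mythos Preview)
Your proposal is correct and follows essentially the same route as the paper: both arguments unfold the definition of a type $(C3.2)$ path, absorb the vertical leg into $cost_{3.1}(s')$, write the horizontal leg as $Q(t)-Q(s')$, and then factor $Q(t)$ out of the minimum over $s'\leq t$. Your write-up is in fact slightly more explicit about why the two optimisations decouple and where the constraint $s'\leq t$ comes from.
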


\begin{proof}
Recall that a type $(C3.2)$ path is a path from $s$ (bottom) to $s'$ (top) to $t$ (top). The type $(C3.2)$ cost function is the minimum cost over all such type $(C3.2)$ paths. Therefore, 
\begin{align*}
    cost_{3.2}(t) &= \min_{s' \leq t} \big(cost(s) + \int_s^{s'} h(t) dt + \int_{s'}^{t} h(t) dt \big) \\
    &= \min_{s' \leq t} \big(cost_{3.1}(s') + \int_{s'}^{t} h(t) dt \big) \\
    &= \min_{s' \leq t} \big(cost_{3.1}(s') - g(s') + g(t) \big) \\
    &= \min_{s' \leq t} \big(cost_{3.1}(s') - g(s')\big) + g(t),
\end{align*}
as required.
\end{proof}

Now, we can compute $cost_{3.2}(t)$ from $cost(s)$ in the following way. First, we compute the quadratic function $\int_{s}^{s'} h(t) dt$ and add it to $cost(s)$ to obtain $cost_{3.1}(s')$. Next, we compute $g(s')$, the integral of the height function along the top boundary, and subtract it from $cost_{3.1}(s')$. Then, we compute its cumulative minimum function, $\min_{s' \leq t} \big( cost_{3.1}(s') - g(s') \big)$. Finally, we add $g(t)$. In each step, we either add or subtract a quadratic, or take the cumulative minimum of a function. These steps are essentially the same as the ones for type $(B)$ paths, with minor modifications. By applying essentially the same arguments as in Section~\ref{sec:b}, we obtain the following three lemmas.

\begin{lemma}
\label{lem:type_C3_piecewise_quadratic}
Let $A_{k-1,\ell'}$ be the parent of $A_{k,\ell}$. Suppose all paths from $A_{k-1,\ell'}$ to $A_{k,\ell}$ are type $(C3)$ paths. Then the cost along $A_{k,\ell}$ is a continuous piecewise quadratic function. Moreover, the local minima along $A_{k,\ell}$ are either at its endpoints or where its derivative is well defined and equal to zero.
\end{lemma}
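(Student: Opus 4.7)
The plan is to mirror the structure of the type $(B)$ proof (Lemma~\ref{lem:type_B_piecewise_quadratic}), but with the ``horizontal'' extension in the cumulative minimum replaced by a quadratic extension of shape $Q(t)$, as justified by Observation~\ref{obs:c3.2_to_c3.1}. I assume the inductive hypothesis: along the parent $A_{k-1,\ell'}$, the function $cost(s)$ is continuous and piecewise quadratic, and at every point its left derivative is at least its right derivative (so its local minima occur only at endpoints or at points where the derivative vanishes).

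First I would trace the sequence of operations that take $cost(s)$ on $A_{k-1,\ell'}$ to $cost_{3.2}(t)$ on $A_{k,\ell}$. By the definition of type $(C3.1)$ paths, adding the cost $\int_s^{s'} h(x)\,dx$ of a vertical segment (a single quadratic in $s'$, with $s$ a fixed affine function of $s'$) yields $cost_{3.1}(s')$, which is still continuous piecewise quadratic and still satisfies the left-derivative-$\geq$-right-derivative property, since adding a single quadratic preserves both. Subtracting the quadratic $Q(s')$ preserves these properties as well. The only non-trivial step is the cumulative minimum $\min_{s' \leq t}(cost_{3.1}(s') - Q(s'))$, after which we add the final quadratic $Q(t)$ to obtain $cost_{3.2}(t)$.

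For the cumulative minimum step, I would argue as in the type $(B)$ case. Let $g(s') = cost_{3.1}(s') - Q(s')$; by the above, $g$ is continuous, piecewise quadratic, and has left derivative $\geq$ right derivative everywhere. Its cumulative minimum $G(t) = \min_{s'\leq t} g(s')$ is obtained from $g$ by replacing each interval on which $g$ strictly increases past a local minimum with the horizontal line at the value of that minimum, up until $g$ drops back down. Because $g$'s only local minima are at endpoints or at points where the derivative is zero (using the inductive property), $G$ is itself piecewise quadratic (it alternates between pieces of $g$ and horizontal pieces), continuous, and satisfies the left-$\geq$-right derivative condition. Finally, adding the quadratic $Q(t)$ yields $cost_{3.2}(t)$, which is still piecewise quadratic and still has left-$\geq$-right derivatives at every point, so its local minima can occur only at endpoints or at points where its derivative is well defined and vanishes.

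The main obstacle I anticipate is making the cumulative minimum step rigorous: one has to verify that introducing extensions of the form $G(s^*) + $ constant (where $s^*$ is the argmin so far) does not create new ``downward kinks'' where the left derivative is strictly less than the right derivative. The key point is that at a local minimum of $g$ where a new extension begins, the extension has zero derivative (since on it $G$ is constant, and then after adding $Q(t)$ the derivative equals $Q'(t)$, matching smoothly), and at the point where the extension meets a later descending piece of $g$, the transition is downward so the left derivative (from the extension) exceeds the right derivative (from the descending piece of $g$), which is precisely the required inequality. Everything else reduces to routine manipulation of piecewise quadratic functions, exactly analogous to Lemma~\ref{lem:type_B_piecewise_quadratic}.
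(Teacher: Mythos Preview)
Your proposal is correct and follows essentially the same approach as the paper: the paper reduces the type $(C3)$ case to the type $(B)$ argument via Observation~\ref{obs:c3.2_to_c3.1}, listing exactly the sequence add quadratic $\to$ subtract $Q$ $\to$ cumulative minimum $\to$ add $Q$, and then states that the same reasoning as in Lemma~\ref{lem:type_B_piecewise_quadratic} applies. Your write-up in fact gives more detail on the cumulative-minimum step than the paper does.
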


\begin{lemma}
\label{lem:type_C3_propagate_in_constant_time}
Let $A_{k-1,\ell'}$ be the parent of $A_{k,\ell}$. Suppose all paths from $A_{k-1,\ell'}$ are type $(C3)$ paths. If we propagate a single quadratic piece of $A_{k-1,\ell'}$ to the next level $A_k$, it is a piecewise quadratic function with at most a constant number of pieces. Moreover, this propagation step takes only constant time.
\end{lemma}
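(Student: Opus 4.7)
The plan is to chase the single input quadratic piece through the four-step procedure spelled out right after Observation~\ref{obs:c3.2_to_c3.1}, showing that each step preserves both the constant piece count and constant runtime. Concretely, let $f(s)$ be the given quadratic piece of $cost(s)$ on some subsegment of the bottom boundary. Step one adds $\int_s^{s'} h(t)\,dt$, a single quadratic in $s'$, to $f$; since $f$ is already a single quadratic, the sum $cost_{3.1}(s')$ is again a single quadratic, computable in constant time. Step two subtracts $Q(s')$, the integral of the linear height function along the top boundary restricted to the cell, which is again a single quadratic in $s'$; so $cost_{3.1}(s') - Q(s')$ remains a single quadratic and is obtained in constant time.

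Step three is the cumulative minimum $\min_{s'\le t}\bigl(cost_{3.1}(s')-Q(s')\bigr)$. Applied to a single quadratic $g(s') = as'^2 + bs' + c$, the cumulative minimum is obtained in constant time by locating the vertex $s'^\star = -b/(2a)$ (if $a>0$ and it lies in the relevant interval): the result agrees with $g$ up to $s'^\star$ and is the constant $g(s'^\star)$ thereafter. If $a\le 0$ or the vertex is outside the interval, the cumulative minimum is either $g$ itself on one side and a constant extension on the other, or the reverse; in every subcase the output has at most two pieces. Step four adds $Q(t)$ back, which turns the constant tail into a quadratic shaped like $Q(t)$ and the other piece into a (quadratic) $+\,Q$, still at most two pieces, all done in constant time.

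Putting these four steps together, the propagated function $cost_{3.2}(t)$ contributed by a single input quadratic piece is a piecewise quadratic with at most a constant number of pieces, and the entire construction runs in constant time. The only subtlety is bookkeeping of which of the subcases of step three applies (sign of the leading coefficient, and whether the vertex falls inside the relevant interval), but each case is handled by constant-time arithmetic on the coefficients of $f$, $\int_s^{s'}h$, and $Q$, so the total work remains $O(1)$ per input quadratic. This mirrors the analysis for type $(B)$ paths in Lemma~\ref{lem:type_B_propagate_in_constant_time}, with $Q(t)$-shaped extensions playing the role previously played by horizontal extensions; no new technical obstacle arises beyond the careful case analysis already handled in the type $(B)$ case.
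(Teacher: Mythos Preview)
Your proposal is correct and follows essentially the same approach as the paper: the paper reduces the type $(C3)$ propagation to the four operations listed after Observation~\ref{obs:c3.2_to_c3.1} (add a quadratic, subtract $Q$, take a cumulative minimum, add $Q$ back) and then defers to the type $(B)$ argument of Lemma~\ref{lem:type_B_propagate_in_constant_time}, which is exactly what you have spelled out in detail. The only minor remark is that your claim that $Q(s')$ and $\int_s^{s'}h$ are single quadratics relies on the subsegment $A_{k,\ell}$ not containing a valley--boundary intersection (a critical point), which is guaranteed by the subsegment construction; you may want to make this explicit.
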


\begin{lemma}
\label{lem:type_C3_bounds}
Let $A_{k-1,\ell'}$ be the parent of $A_{k,\ell}$. Suppose all paths from $A_{k-1,\ell'}$ to $A_{k,\ell}$ are type $(C3)$ paths. Then $$D(A_{k,\ell}) \leq D(A_{k-1,\ell'}),$$ $$|A_{k,\ell}| \leq |A_{k-1,\ell'}|.$$ 
\end{lemma}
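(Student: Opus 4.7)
The plan is to follow the four-step propagation scheme of Section~\ref{sec:c3}: (1) $cost_{3.1}(u) = cost_{in}(u) + R(u)$, where $R$ is the vertical-segment height integral; (2) $f(u) = cost_{3.1}(u) - Q(u)$, subtracting the top-boundary height integral $Q(u)$; (3) $g(u) = \min_{s' \le u} f(s')$, the cumulative minimum; and (4) $cost_{3.2}(u) = g(u) + Q(u)$. The immediate observation is that steps (1), (2) and (4) each merely add or subtract a fixed quadratic, which translates every piece's coefficient pair $(a,b)$ by a constant vector. These three steps therefore preserve both the piece count $|A|$ and the distinct-pair count $D$, so all of the work lies in step (3).

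For step (3), I appeal inductively to Lemma~\ref{lem:type_C3_piecewise_quadratic} applied to $f$, which inherits the ``left-derivative dominates right-derivative'' cusp property from $cost_{in}$ since steps (1) and (2) are smooth additions of quadratics. This implies that every local minimum of $f$ occurs either at an interval endpoint or at an interior smooth point where the derivative vanishes. A quadratic $au^2+bu+c$ has its derivative zero only at $u = -b/(2a)$, so each distinct $(a,b)$ pair of $f$ contributes at most one such interior smooth minimum. Moreover, by the non-crossing property of optimal paths (Lemma~\ref{lem:crossing_lemma_main_paper}), the minimizer $s'(u)$ selected by the cumulative minimum is monotone non-decreasing in $u$, so the surviving pieces of $f$ appear in $g$ in the same order as in $f$.

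The main obstacle is the sharp accounting needed to obtain the stated bounds without a ``$+1$'' term, unlike the Type-$(B)$ analog. My plan is a bijective charging: each interior-minimum extension piece of $g$ --- which becomes a piece of shape $Q(u) + \mathrm{const}$ with coefficient pair $(a_Q,b_Q)$ after step (4) --- is charged to the unique piece of $f$ whose local minimum $u^{*}$ generated it, and I argue that the portion of that input piece lying past $u^{*}$ is dominated by the extension, so it is \emph{replaced} rather than supplemented in $g$. Endpoint minima of $f$ are handled by a symmetric boundary argument. The delicate technical point, to be verified by a careful case analysis on where $u^{*}$ falls within the subsegment, is confirming that in a subsegment where every optimal path is type $(C3)$ this replacement is one-for-one in both the piece count and the distinct-pair count, so that $|A_{k,\ell}| \le |A_{k-1,\ell'}|$ and $D(A_{k,\ell}) \le D(A_{k-1,\ell'})$ follow simultaneously from the same bijection.
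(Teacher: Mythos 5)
Your four-step decomposition is exactly the paper's route (it is Observation~\ref{obs:c3.2_to_c3.1}: add a quadratic, subtract $Q$, take the cumulative minimum, add $Q$ back), and your observation that adding or subtracting a fixed quadratic preserves both the piece count and the distinct-$(a,b)$ count is correct. The gap is the step you yourself flag as ``to be verified'': the claimed one-for-one replacement in the cumulative-minimum step. As stated, that claim is false at the level of generality at which you argue. If $f = cost_{3.1} - Q$ has a local minimum $u^{*}$ in the \emph{interior} of a quadratic piece, the cumulative minimum keeps the decreasing half of that piece on $[\cdot,u^{*}]$ \emph{and} appends the constant extension after $u^{*}$; the extension replaces only the increasing tail of that same piece and need not swallow any later piece of $f$, since $f$ may dip below the extension level immediately in the next piece. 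So one input piece can yield two output pieces --- this is precisely the doubling phenomenon the paper exhibits for the type $(B)$ cumulative minimum in Figure~\ref{fig:type_b2_cost_function_main_paper}, and your charging argument, if it were valid as a generic statement about cumulative minima, would equally ``prove'' $|A_{k,\ell}|\leq|A_{k-1,\ell'}|$ in the type $(B)$ case, where it fails. The same issue affects the $D$ bound: the extension pieces, after the final addition of $Q$, all carry the coefficient pair $(a_Q,b_Q)$ of $Q$, whereas the surviving pieces carry the input pairs translated by the net quadratic shift; nothing in your argument shows $(a_Q,b_Q)$ coincides with one of those translated pairs, so a priori $D$ can grow by one.

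For comparison, the paper does not give a detailed accounting here either: it reduces type $(C3)$ propagation to the same three operations as type $(B)$ and asserts the lemmas follow ``by essentially exactly the same arguments'' as Section~\ref{sec:b}. But those arguments yield only $D(A_{k,\ell}) \leq D(A_{k-1,\ell'}) + 1$ and $|A_{k,\ell}| \leq |A_{k-1,\ell'}| + D(A_{k-1,\ell'}) + 1$ (Lemma~\ref{lem:type_B_bounds}), not the tight inequalities stated in Lemma~\ref{lem:type_C3_bounds}. To obtain the tight bounds you would need an argument specific to type $(C3)$ subsegments --- for instance, showing that on a subsegment all of whose optimal paths are type $(C3)$ the function $cost_{3.1} - Q$ cannot have an interior local minimum followed by an increase, or that any such extension is absorbed without increasing the counts --- and neither your sketch nor the generic cumulative-minimum bookkeeping supplies this. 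Note that the weaker ``$+1$'' bounds would still be enough for the global induction in Lemma~\ref{lem:number_of_critical_points_on_Akl_main_paper}, but they are not the inequalities you set out to prove.
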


\section{Conclusion}

We presented the first exact algorithm for computing CDTW of one-dimensional curves, which runs in polynomial time. 
Our main technical contribution is bounding the total complexity of the functions which the algorithm propagates, to bound the total running time of the algorithm. One direction for future work is to improve the upper bound on the total complexity of the propagated functions. Our $O(n^5)$ upper bound is pessimistic, for example, we do not know of a worst case instance. Another direction is to compute CDTW in higher dimensions. In two dimensions, the Euclidean $\mathcal L_2$ norm is the most commonly used norm, however, this is likely to result in algebraic issues similar to that for the weighted region problem~\cite{DBLP:journals/comgeo/CarufelGMOS14}. One way to avoid these algebraic issues is to use a polyhedral norm, such as the $\mathcal L_1$, $\mathcal L_\infty$, or an approximation of the $\mathcal L_2$ norm~\cite{dudley1974metric,har2019proof}. This would result in an approximation algorithm similar to~\cite{DBLP:journals/comgeo/MaheshwariSS18}, but without a dependency on the spread. 

\subsection*{Acknowledgements}

The authors would like to thank Jan Erik Swiadek for helpful feedback on the manuscript. In particular, we included their improved proof of Lemma~\ref{lem:type_C2_piecewise_quadratic}.

\bibliographystyle{plain}
\bibliography{main.bib}

\newpage
\appendix

\section{Proof of Lemma~\ref{lem:cdtw_formula}}
\label{apx:proof_of_cdtw_formula}

\cdtwformula*

\begin{proof}
Recall that the definition of $d_{CDTW}(P,Q)$ is:
\[
d_{CDTW}(P,Q) = 
        \inf_{(\alpha,\beta) \in \Gamma(p,q)} \int_0^1 ||P(\alpha(z)) - Q(\beta(z))|| \cdot ||\alpha'(z) + \beta'(z)|| \cdot dz \\
\]

Let $\gamma = (\alpha,\beta) \in \Gamma(p,q)$. Then $\gamma(0) = (\alpha(0), \beta(0)) = (0,0)$ and $\gamma(1) = (\alpha(1),\beta(1)) = (p,q)$. By considering $\gamma(z)$ as a point in the the parameter space $R$, we observe that if we vary $z \in [0,1]$, then $\gamma(z)$ is a curve starting at $(0,0)$, ending at $(p,q)$, and is non-decreasing in both $x$- and $y$-coordinates. Now, consider the integral of $h(\cdot)$ along the curve $\gamma$. The mathematical expansion of the line integral $\int_\gamma h(z) \cdot dz$ is:

\[
    \int_\gamma h(z) \cdot dz = \int_0^1 h(\gamma(z)) \cdot ||\gamma'(z)||_R \cdot dz.
\]

The expanded integral closely resembles the formula for $d_{CDTW}(P,Q)$. The first term $h(\gamma(z))$ is equal to $||P(\alpha(z)) - Q(\beta(z))||$. The second term $||\gamma'(z)||_R$ is equal to $||\alpha'(z) + \beta'(z)||$, since $||\cdot||_R$ is the $\mathcal L_1$ norm and because $\alpha$ and $\beta$ are non-decreasing. Hence,
\[
d_{CDTW}(P,Q)
    =
        \inf_{\gamma \in \Gamma(p,q)} \int_0^1 h(\gamma(z)) \cdot ||\gamma'(z)||_R \cdot dz. \\
\]

We now reparametrise $\gamma \subset R$ in terms of its $\mathcal L_1$ arc length in $R$. This is the ``natural parametrisation'' of the curve $\gamma$. We already know that $\gamma$ starts at $(0,0)$, ends at $(p,q)$, and is differentiable and  non-decreasing in $x$- and $y$-coordinates. We let $\Psi(p,q)$ be the set of all functions that satisfy these three conditions, in addition to a fourth condition, $||\gamma'(z)||_R = 1$. Then $\alpha'(z) + \beta'(z) = 1$, as $\alpha'(z), \beta'(z) > 0$. By integrating from $0$ to $z$, we get $\alpha(z) + \beta(z) = z$. In particular, we have $\gamma(p+q) = (p,q)$, since the curve must end at $(p,q)$. So $z \in [0,p+q]$ is the new domain of the reparametrised curve $\gamma \in \Psi(p,q)$. Putting this together, we obtain the stated lemma.
\end{proof}

\end{document}